\def\BibTeX{{\rm B\kern-.05em{\sc i\kern-.025em b}\kern-.08em
		T\kern-.1667em\lower.7ex\hbox{E}\kern-.125emX}}
\newcommand\hlbreakable[1]{\textcolor{red}{#1}}
\crefname{lemma}{Lemma}{Lemmas}
\crefname{theorem}{Theorem}{Theorems}
\crefname{definition}{Definition}{Definitions}
\newtheorem{theorem}{Theorem}
\newtheorem{lemma}{Lemma}
\newtheorem{definition}{Definition}
\newtheorem*{question}{Question}
\newcommand{\red}[1]{\textcolor{black}{#1}}
\begin{document}

\title{A Game-Theoretic Approach to Self-Stabilization with Selfish Agents}

\author{%
	{Amir Reza Ramtin{\small$^{1}$}, Don Towsley{\small $^{2}$}}%
	\vspace{1.6mm}\\
	\fontsize{10}{10}\selectfont\itshape
	College of Information and Computer Sciences, University of Massachusetts Amherst\\
	Amherst, USA\\
	\fontsize{9}{9}\selectfont\ttfamily\upshape
	%
	$^{1}$\,aramtin@cs.umass.edu\\
	$^{2}$\,towsley@cs.umass.edu%
}

\maketitle

\begin{abstract}
		Self-stabilization is an excellent approach for adding fault tolerance to a distributed intelligent system. However, two properties of self-stabilization theory, convergence and closure, may not be satisfied if agents are selfish. 
		To guarantee convergence, we formulate the problem as a stochastic Bayesian game and introduce probabilistic self-stabilization to adjust the probabilities of rules with behavior strategies. This satisfies agents' self-interests such that no agent deviates the rules. To guarantee closure in the presence of selfish agents, we propose fault-containment as a method to constrain legitimate configurations of the self-stabilizing system to be Nash equilibria. We also assume selfish agents as capable of performing unauthorized actions at any time, which threatens both properties, and present a stepwise solution to handle it. As a case study, we consider the problem of distributed clustering and propose five self-stabilizing algorithms for forming clusters. Simulation results show that our algorithms react correctly to rule deviations and outperform comparable schemes in terms of fairness and stabilization time.
\end{abstract}

\begin{IEEEkeywords}
	Self-stabilizing algorithm, Intelligent agents, Selfishness, Stochastic Bayesian game
\end{IEEEkeywords}

\maketitle


\section{Introduction}
\label{section:introduction}
A distributed intelligent system (DIS) is a 
	network of intelligent agents that interact and exchange data to solve complex problems. 

Being distributed, a DIS is subject to transient faults (e.g., unpredictable changes in the behaviors/roles of agents) due to temporary hardware, software, and communication failures, which can result in the failure of the whole system. Therefore, there is a strong need for fault-tolerant approaches to detect and tolerate transient faults.

Much of the existing literature on handling faults in DISs is based on traditional fault-tolerant schemes for distributed systems \cite{arfat_survey_2016}. However, because each DIS has its own set of specifications and characteristics, no scheme is suitable for all situations.
We focus on a specific
  setup where agents are spatially distributed and connected through a wireless ad-hoc network (e.g., distributed robotics, sensor networks, and Internet of Things)
 and show that self-stabilization \cite{dijkstra1974self} is a promising approach for fault tolerance in the collective behavior \cite{rossi2018review} of agents.

\red{Self-stabilization characterizes the ability of a distributed algorithm to converge in a finite time to a configuration from which it behaves correctly, regardless of the arbitrary initial configuration of the system. Under such an algorithm, the system} recovers from a transient fault regardless of its nature. Consequently, a self-stabilizing DIS is fault tolerant regardless of the dynamics of the environment. Moreover, a distributed self-stabilizing algorithm can be used to initialize a distributed system so that it eventually ends in a legitimate configuration, regardless of its initial configuration. In relation to a DIS, we assume a configuration is legitimate if it allows agents to achieve a common goal.

A DIS can consist of either non-cooperative agents that act selfishly, each agent maximizing its own gain from the interaction, or cooperative agents working to achieve a common goal \cite{grubshtein2012partial}, or both. In the first scenario, selfishness can prevent the system from correctly self-stabilizing. Thus the rules for self-stabilization need to account for selfishness. In a self-stabilizing system, agents are expected to comply with the algorithm and never intentionally perform unwanted actions. However, such behavior may not occur when agents act selfishly.
Refusal to execute a self-stabilizing rule and the intermittent execution of a false action are two adverse consequences of selfishness. \red{We can classify these,} respectively, as permanent and intermittent faults. However, typically, a self-stabilizing algorithm only guarantees recovery from a transient fault (i.e. a fault that occurs once and then disappears).

The first paper that addresses the issue of selfishness in self-stabilization is \cite{dasgupta2006selfish}. The authors assume selfish agents sometimes cooperate while maintaining limited forms of self-interest. In the context of self-stabilization, this behavior shows up as agents preferring different subsets of legitimate configurations. 
The main problem with their proposed method is that 
the legitimate configuration and the resulting Nash equilibrium \cite{nash1950equilibrium} must be unique, which is very restrictive and difficult to achieve.
Similarly, in \cite{gouda_nash_2009, ramtin_self-stabilizing_2014}, the authors 
proposed approaches that discourages selfish nodes
from perturbing legitimate configurations. However, none of these approaches prevent them from deviating from the algorithm during convergence.

In this paper, our goal is to answer the following question.

\begin{question}
	How do we design a self-stabilizing algorithm for a distributed system given that agents may deviate from the algorithm during or after convergence because of their private goals?
\end{question}

We categorize the types of deviations that selfish agents may employ in a self-stabilizing system, and provide solutions for each category that ensure closure and convergence even with selfish agents.
Briefly, closure requires the system to stay within a set of legitimate configurations
and convergence requires the system to reach a legitimate configuration. 
To address the problem of competition during convergence, we borrow ideas related to probabilistic self-stabilization to model 
agent self-interest by using probabilistic rules. Subsequently, we use a stochastic Bayesian game-theoretic modeling approach to extract these self-interests. 
We illustrate our framework through a case study, the design of a self-stabilizing clustering algorithm that constructs a Maximal Independent Set (MIS) in a network of selfish agents. We prove that our clustering solutions exhibit two desirable properties,  closure and convergence, in spite of the presence of selfish agents.
Similar game-theoretic approaches to clustering \cite{kassan_game_2018} do not provide self-stabilization. The experimental setup of our work includes analyzing the performance of the proposed algorithms through simulation. In summary, the
main contributions of this work include:
\begin{itemize}
	\item A stochastic Bayesian game model for self-stabilization.
	\item A 
	framework for the design of non-cooperative 
	self-stabilizing algorithms.
	\item Self-stabilizing clustering algorithms for systems consisting of selfish agents.
\end{itemize}

The rest of the paper is organized as follows: We introduce the concepts and discuss solution methods in \Cref{section:approach}. In \Cref{section:casestudy}, the proposed algorithms for clustering are presented, and proofs are given to establish their correctness. In \Cref{section:analysisComplexity}, we analyze the complexities of the proposed algorithms. \Cref{section:results} deals with the numerical evaluation of the algorithms, and comparisons are made to contrast their performance against the prior art. The paper ends with conclusions.

\section{Related Work}
\label{section:related}

In this section, we briefly overview the related work to our approach and case study. Regarding selfishness in distributed computing, we only consider some of the most related work.

\subsection{Game-Theoretic Self-Stabilizing Approaches}

Connections between self-stabilization and game theory have been previously recognized. In \cite{jaggard2014self}, the authors related self-stabilization to uncoupled dynamics, a procedure used in game theory to reach a Nash equilibrium in situations where players do not know each other's payoff functions. In turn, \cite{apt2018self} showed how Dijkstra's solution to self-stabilization can be naturally formulated using standard concepts of strategic games, notably the concept of an improvement path. They also showed how one can reason about them in game-theoretic terms. In \cite{yen2016designing}, \red{the authors transformed the game-theoretic models of a problem into self-stabilizing algorithms that achieve an intended system goal through private goals of the processes.}

In \cite{dasgupta2006selfish}, the authors associated a cost function with each node 
and presented a simple self-stabilizing algorithm that constructs a spanning tree 
corresponding to a Nash equilibrium of an underlying strategic game. This method makes the legitimate configuration unique, which is very restrictive. 
In \cite{gouda_nash_2009}, the authors equated a legitimate configuration with an outcome of a game to determine whether or not an agent can benefit from a perturbation. They showed that no agent has an incentive to perturb a legitimate configuration if all legitimate configurations are Nash equilibria.
A similar case-specific solution for selfish perturbations has been discussed in \cite{ramtin_self-stabilizing_2014} where authors handled the self-stabilizing problem of virtual backbone construction in selfish wireless ad-hoc networks. These approaches discourage selfish nodes from perturbing legitimate configurations; however, they do not prevent them from deviating from the algorithm during convergence. 

\subsection{Selfishness in Distributed Computing}


\cite{abraham2013distributed} analyzes leader-election protocols that give rises to an equilibrium (cf. Section 2.1) in different settings. The goal of our case study and that of \cite{abraham2013distributed} are similar although the approaches and applications are different. In our approach, since we target self-stabilization in addition to equilibria in legitimate configurations, we discuss a stochastic game model to tolerate deviations during convergence \red{in any distributed setting.} 

\cite{afek2014distributed} leverages mechanism design in game theory to design principles that penalize agents that deviate from the rules and protocols. The main \red{difference} between our approach and \cite{afek2014distributed} \red{is that, in our approach,}
we tolerate deviations but still guarantee that the system eventually stabilizes.
Furthermore, we model the system as a stochastic game, which eliminates the need for 
wake-up timings \red{required in \cite{afek2014distributed}}. 

\cite{collet2018equilibria} discusses a dynamic Bayesian game model for distributed randomized algorithms where agents are selfish, which shares some similarities with our proposed stochastic Bayesian game modeling approach in terms of how to manage incomplete information and actions; however, it does not address the characteristics of distributed self-stabilization, such as the existence of different local states. 

\subsection{Self-Stabilizing Algorithms for Maximal Independent Set}

The first self-stabilizing algorithms introduced to build an MIS \red{work with a central scheduler} \cite{shukla1995observations,hedetniemi2003self}. In \cite{lin2003efficient}, the authors proposed a fault-containing algorithm that is an improvement on \cite{hedetniemi2003self}.

The early distributed self-stabilizing algorithms for building an MIS 
have time complexity $O(n^2)$ \cite{ikeda2002space, goddard2003self,shi2004anonymous}. \cite{turau_linear_2007} proposed the first self-stabilizing algorithm that has a linear time complexity \red{and works under} an unfair distributed scheduler. In \cite{ramtin_self-stabilizing_2014-1}, the authors proposed a distributed self-stabilizing algorithm that is fault-containing with contamination distance of one. \cite{arapoglu2019energy} introduced a similar algorithm to \cite{turau_linear_2007} that takes at most $\max(3n-6,2n-1)$ moves to stabilize. In \cite{yen2015selfish}, the authors first proposed a mechanism design for the MIS problem and then turned the design into a self-stabilizing algorithm that works slightly better than the previous works. \cite{arapoglu2019asynchronous} proposed a distributed self-stabilizing algorithm that uses two-hop information to build an MIS with unfair distributed scheduler, which stabilizes after $n-1$ moves and is significantly better than the others in terms of move complexity.

\section{Problem Definition and System Model}
\label{section:problem}

\noindent Self-stabilization is an important concept for distributed computing and communication networks. It describes a system's ability to recover automatically from a transient fault, which is crucial for a DIS as its computational power lies in the fact that it is distributed and agents communicate
with their neighbors in the underlying network to fulfill a given task.
An unexpected change in the system, such as loss of communication or failure due to dynamics in the environment, can be modeled as a transient fault. As defined below, we consider self-stabilization of the system as a whole, where no single agent alone can achieve the design objective of the DIS in which it resides \cite{funk2002self}. 

\begin{definition}\label{d1}{\em
		Self-Stabilization \cite{dolev2000self}.}
	Let $\mathbb{P}$ be a system property that identifies the correct execution of a distributed system. A system $Q$ is
	self-stabilizing if the following two conditions are satisfied. a) Starting from an arbitrary state, $Q$ reaches a state where $\mathbb{P}$ holds (convergence), after execution of a finite number of actions. b) Once $\mathbb{P}$ is established for $Q$, then $\mathbb{P}$ remains valid (closure), after any subsequent execution of actions.
\end{definition}

A DIS is specified by 
an undirected connected graph $G=(V,E)$, where $V$ is a set of agents, and $E$ is a set of edges. Let $N(v)$, $v \in V$ denote the set of neighbors of $v$ in $G$, i.e. $N(v)=\{u:(v,u)\in E\}$.

An agent is modeled by a finite set of primary variables, $Vars$.
Each variable $X \in Vars$ has a finite domain $\mathbb{V}_{X}$.
The state of an agent, denoted by $\gamma$, is the realization (values) of its primary variables, i.e., $\gamma \in \prod_{X \in Vars}\mathbb{V}_{X}$. Let $\Gamma$ denote the set of all possible states.
Let $v.\gamma \in \Gamma$ be the state of agent $v\in V$. Every agent can access the states of its neighbors using an asynchronous distributed message passing algorithm.
The \textit{k-local state} of $v$ corresponds to the states of all agents \red{in its $k$-hop neighborhood, including itself. We will focus mostly on $k=1$ and $k=2$.}
A \textit{configuration (global state)} of the system is a vector consisting of the states of each agent (i.e., $c = \prod_{v\in V}v.\gamma$). Let $C$ denote the set of all possible configurations. Configurations satisfying a system property $\mathbb{P}$ (a predicate on the system) are said to be legitimate. Let $L \subseteq C$ denote the set of legitimate configurations.

A distributed self-stabilizing algorithm consists of a finite set of rules per agent referred to as a local program. A rule has the form 
\begin{center}
	$<guard> \longrightarrow <action>$
\end{center}
Here guard is a Boolean predicate on the \red{$k$-local state} of the agent. A rule is said to be enabled if its guard evaluates to true. In this case, the corresponding action may be executed. 
An agent is said to be enabled in configuration $c$ if at least one of its rules is enabled in $c$ \cite{jubran_recurrence_2015,devismes_self-stabilizing_2019}. \red{A \textit{scheduler} (sometimes called a daemon) is a virtual entity (adversary) that models the asynchronism of the system by selecting, at each step, a subset of enabled agents such that only the selected ones are allowed to execute one of their rules. Schedulers are classified based on characteristics such as fairness and distribution. A central scheduler selects exactly one enabled agent at each step. In contrast, a synchronous scheduler chooses at every step every enabled agents. A scheduler is fair if any enabled agent is guaranteed to be selected after a finite number of rounds. A distributed randomized scheduler, which is equivalent to a scheduler under Gouda's strong fairness \cite{devismes2008weak} assumption, is a fair scheduler that in every step selects a non-empty subset of enabled agents with a uniform probability. An unfair distributed scheduler, which is the most challenging one, selects an arbitrary subset of enabled agents at each step. It is only obliged to select a non-empty set if the set of enabled agents is not empty.}
A computation is a transition $c \rightarrow c'$ such that $c,c'\in C$,
which is obtained by the fact that in configuration $c$, a non-empty subset of enabled agents is chosen by a distributed scheduler to atomically execute actions, which takes the system to $c'$.


\Cref{fig:opEx}(a) shows an example of a self-stabilizing system trajectory (i.e., a trajectory in configuration space that represents a sequence of configurations starting from a given initial configuration)  during convergence.

\begin{figure}[!h]
	\centering
	\begin{tabular}{cc}
		\includegraphics[width=0.20\textwidth]{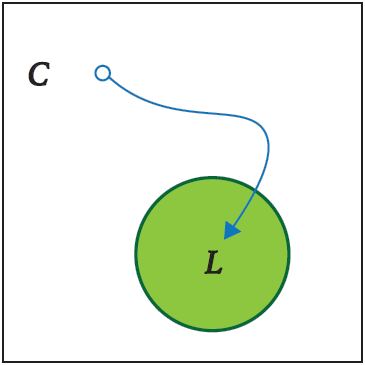}&
		\includegraphics[width=0.20\textwidth]{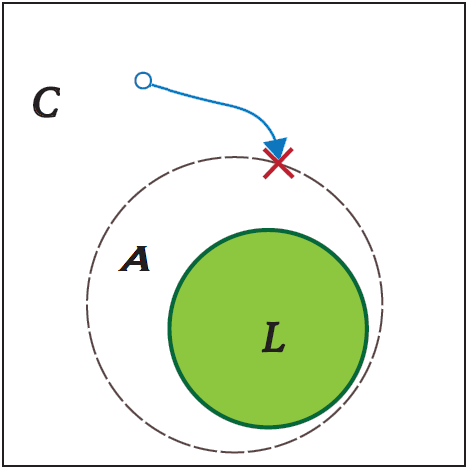}\\
		(a) & (b) \\ 
		\includegraphics[width=0.20\textwidth]{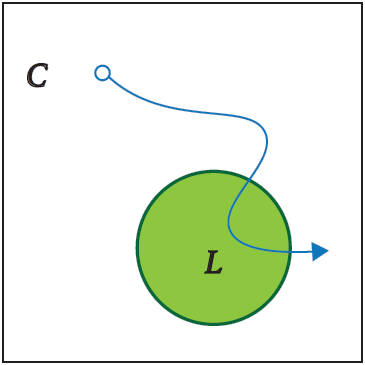} &
		\includegraphics[width=0.20\textwidth]{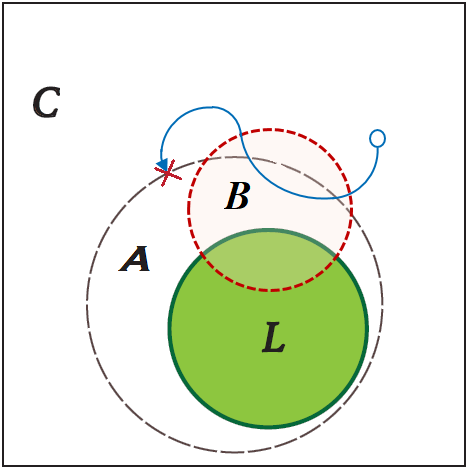} \\
		(c) & (d)
	\end{tabular}
	\caption{An informal presentation of a self-stabilizing system trajectory (note that continuous blue paths (trajectories), each of which represents a sequence of configurations, are in fact discrete) where $C$ is the set of all possible configurations; $L \subset C$ depicts the set of legitimate configurations ($c \in L$ iff $c$ has no enabled rule); $A$ is the set of configurations where any enabled rule is violated due to selfishness, i.e., there is no enabled rule that selfish agents prefer to execute (note that set $L$, any of member of which has no enabled rule, is a subset of $A$); $B$ is the set of configurations where selfish agents execute unauthorized actions; and agents (a) do not deviate, or exhibit selfishness by b) violations, (c) perturbations, or (d) deflections.
	}
	\label{fig:opEx}
\end{figure}

\begin{definition}\label{d7}{\em Gain function.}
	Associated with agent $v \in V$, is a gain function $g_v:C\rightarrow \mathbb{R}^+$ defined on the set of all system configurations. 
\end{definition}


We assume the gain function is such that for each agent $v$ and any illegitimate configuration $c$, there exists a legitimate configuration $c'$ such that the gain of $c'$ is larger than or equal to that of $c$, i.e., $g_v(c')\ge g_v(c)$.

A necessary condition to meet the two criteria of closure and convergence is rule fulfillment. In a self-stabilizing DIS, agents must have the abilities to cooperatively conform to the algorithm \cite{funk2002self} to achieve a common goal corresponding to a configuration that satisfies a desirable system property. This assumption, however, is not always satisfied when agents whose individual goals do not conform with self-stabilizing rules act selfishly.
Even if agents prefer legitimate configurations to illegitimate ones, they may differ as to which one they prefer. This is very similar to the Battle of Sexes game \cite{luce_games_1989} in the case of two agents.

In what follows, we give an example of individual goals and selfish behavior. Consider a sensor network where all sensors have a common goal: they want to form a clustering overlay on the network topology. But in addition to the common goal, each sensor has as its individual goal to minimize its energy consumption while wanting to achieve the common goal. Suppose sensors employ a self-stabilizing algorithm for the clustering problem. Under the assumption that there is neither selfish nor malicious behavior, the self-stabilizing algorithm will achieve a cluster overlay.
Now, assume sensors are aware that cluster-heads consume more energy than regular sensors and thus prefer not to be cluster-heads. As a consequence, they may avoid executing rules that might result in their becoming cluster-heads. This can result in traditional self-stabilization failing to fulfill convergence and closure criteria. 



Note that a self-stabilizing algorithm is an independent process running on an agent; its primary/secondary variables are usually read-only and not writable by other processes running on the agent; and after the system converges to a legitimate configuration, no action of the self-stabilizing algorithm will be executed unless a fault occurs. The agent may have other variables and other executing processes; 
however, these other processes are not allowed to change the variables in $Vars$,
i.e., variables in $Vars$ should be protected from being modified by any action other than the actions of the self-stabilizing rules. Recall that a selfish agent may desire to change those variables. One way of doing so is for the selfish agent to restart (i.e., reinitialize) the process executing the self-stabilizing algorithm. Similarly, the agent can violate the rules by suspending that process.
In the remainder of this section, we discuss three adverse effects of selfishness on self-stabilization, i.e., three types of deviations from cooperative behavior.

\paragraph{Violation}

In a self-stabilizing system, a selfish agent may deliberately avoid executing enabled rules during convergence. \red{We call such rules that agents may prefer to not execute \textit{violation-prone} rules.}
The concept of violation, where selfish agents may not execute enabled rules, is equivalent to the assumption that some rules may never be evaluated, contradicting
the convergence criteria. The fact that selfish agents have individual goals that interfere with the common goal can cause the system to fail to achieve the common goal.
\Cref{fig:opEx}(b) illustrates a scenario where agents may violate rules as the system converges, and finally, the system halts in an illegitimate configuration where all enabled rules are ones that agents violate. Note that in the figure, $A$ is the set of configurations in which either no rules are enabled or, if any, are violated. 

\paragraph{Perturbation}

Here, we introduce another possible deviation, called perturbation, which corresponds to a change in the values of primary variables of an agent in a legitimate configuration that causes an illegitimate one.
A selfish agent may perturb a legitimate configuration (e.g. by restarting itself) in order to increase its gain by forcing the system to converge to a different legitimate configuration. Such 
a selfish action
produces a perturbation.
\Cref{fig:opEx}(c) shows a scenario where a self-stabilizing system, starting from an illegitimate configuration,  reaches a legitimate one, and then is disrupted by a perturbation that returns it to an illegitimate configuration.

\paragraph{Deflection}

The most crucial type of deviation is a deflection, which means regardless of the self-stabilizing rules, a selfish agent is able to a) intentionally execute or b) purposefully ignore any action that assigns new values to its primary variables. Noting that executing any action other than the actions of enabled rules is unauthorized in the context of self-stabilization, a deflection occurs when an agent executes an unauthorized action or does not execute a required action. 
\Cref{fig:opEx}(d) illustrates a system trajectory which is deflected due to unauthorized actions and violations committed during convergence. Eventually, it ends in an illegitimate configuration where there is no longer any unauthorized action or enabled rule other than the rules that agents are violating (dashed).

\section{Methodological Approach}
\label{section:approach}

We begin this section with a description of an approach for tolerating violations. We then present a game-theoretic model for self-stabilizing systems. We leverage this model in our proposed approach to adapt to selfish behaviors during convergence. Next, we explain a method that prevents perturbations in a self-stabilizing system. Finally, we introduce a 3-step solution for designing self-stabilizing algorithms in the face of selfishness.

\subsection{Violation Tolerance during Convergence}


The crucial aspect of modeling a self-stabilizing DIS in the presence of violations is that rule fulfillment is not guaranteed, i.e., each agent has a choice about whether or not to execute an enabled action. Even if agents have incentives to fulfill the rules, their incentives may not be equal due to their individual goals.

We can model incentives using mathematical models of conflict and cooperation between intelligent rational decision-makers, i.e., selfish agents. Every incentive to execute an action will be a best response from an agent's perspective given the incentives of others. Because rule fulfillment is a strategic choice, we define an incentive as a probability of executing an action. 

As a self-stabilizing DIS, 
 a rule given a configuration is equivalent to an action in a multi-stage game, and a probability distribution over the rules is likewise equivalent to a behavior strategy in that game \red{as we will discuss in the next section.} We concentrate on the concept of probabilistic self-stabilization \cite{herman1990probabilistic} to leverage probabilistic rules. \red{
 	 Under a randomized scheduler, a probabilistic self-stabilizing system eventually reaches a legitimate configuration with probability one.
}



In each configuration, every agent assigns a probability to its enabled \red{\textit{violation-prone}} rule equal to the behavior strategy of such an action that determines whether or not the agent executes that enabled rule.
By definition, a behavior strategy assigns a probability distribution over the set of possible actions. Since a behavior strategy implies agent self-interest, we expect this randomization of rules to comply with the likelihood that agents do not violate rules;
in other words, the system will be violation tolerant.

Here, we define the concept of Nash equilibrium in self-stabilizing systems, which is necessary to introduce a condition that must be met so that a violation tolerant system guarantees convergence in the face of violations.

\begin{definition}\label{df:nashs}{\em Nash equilibrium in self-stabilization.}
	Let $A^c_{v}$ denotes the set of available actions to agent $v\in V$ in configuration $c\in C$. Then, $c$ is a Nash equilibrium if for every agent $v\in V$, there is no action $a \in A_v^c$ such that, if executed unilaterally can lead to a sequence of configurations $(c,\dots,c')$ where the utility of $v$ in $c'$ is more than that in $c$. 
\end{definition}

Note that in a self-stabilizing system, even if an agent fails to achieve higher payoff in the next configuration by unilaterally executing an action, there may still be a sequence of configurations that ultimately benefits the agent. Therefore, we define a Nash equilibrium in a self-stabilizing system with respect to a sequence of configurations.

\begin{theorem}\label{t6}
	Assume that a self-stabilizing algorithm for a DIS $Q$ exists under the assumption that no agent violates the rules. Then, under the different assumption that agents can violate the rules, so long as there is an illegitimate configuration that is a Nash equilibrium in $Q$, the algorithm cannot guarantee the convergence property.
\end{theorem}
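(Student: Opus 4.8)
The plan is to argue by contradiction: suppose the algorithm still guarantees convergence even when agents are allowed to violate rules, and derive a contradiction from the existence of an illegitimate Nash-equilibrium configuration $c^\ast$. The key idea is that convergence under the "agents may violate" assumption must hold for \emph{every} admissible behavior of the agents, including the behavior in which every agent refuses to execute any rule it is not strictly incentivized to execute. First I would fix the illegitimate configuration $c^\ast \in C \setminus L$ that is a Nash equilibrium in the sense of \Cref{df:nashs}, and consider a computation that starts in $c^\ast$.

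\medskip\noindent\textbf{Key steps.}
\begin{enumerate}
	\item Start the system in $c^\ast$. Since $c^\ast$ is illegitimate, by \Cref{d1} (and the standard fact that legitimate configurations are exactly the ones with no enabled rule) at least one agent has an enabled rule in $c^\ast$; otherwise the system would be stuck in an illegitimate configuration and the original algorithm would already fail convergence.
	\item Invoke the Nash-equilibrium property of $c^\ast$: for every agent $v$ and every enabled action $a \in A^{c^\ast}_v$, unilaterally executing $a$ cannot lead to any reachable sequence $(c^\ast,\dots,c')$ in which $v$'s utility strictly exceeds its utility at $c^\ast$. Hence no agent has a selfish incentive to execute any of its enabled rules at $c^\ast$; every enabled rule at $c^\ast$ is violation-prone, so a selfish agent will choose (with the freedom granted by the "violation" assumption) not to execute it.
	\item Conclude that the scheduler selects a (possibly non-empty) subset of enabled agents, but each selected agent, acting selfishly, declines to move; thus the configuration remains $c^\ast$. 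The same argument applies at every subsequent step, so the unique computation from $c^\ast$ under this selfish behavior is the constant sequence $c^\ast, c^\ast, c^\ast, \dots$.
	\item Since $c^\ast \notin L$ and the property $\mathbb{P}$ never becomes satisfied along this computation, convergence fails --- contradicting the assumption. Therefore the algorithm cannot guarantee convergence.
\end{enumerate}

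\medskip The step I expect to be the main obstacle is making step~2--3 airtight, i.e., translating the game-theoretic statement "$c^\ast$ is a Nash equilibrium" into the operational statement "a selfish agent provably stays put at $c^\ast$." One has to be careful that \Cref{df:nashs} only forbids \emph{strict} improvements, so an agent might be indifferent between moving and not moving; I would handle this by noting that a violation-prone rule is, by the earlier discussion, precisely one the agent (weakly) prefers not to execute, so indifference is resolved in favor of not moving, and the constant computation is a legitimate behavior under the "agents can violate" model. A secondary subtlety is ruling out that some \emph{other} agent's move could still force progress: but at $c^\ast$ the equilibrium condition quantifies over all agents simultaneously, so \emph{no} agent moves, and the argument that $c^\ast$ is absorbing goes through. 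Finally, I would remark that this shows the condition in \Cref{t6} is exactly what motivates the fault-containment requirement introduced later, namely that legitimate configurations be made Nash equilibria so that no illegitimate Nash equilibrium can obstruct convergence.
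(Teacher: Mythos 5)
Your proposal is correct and follows essentially the same route as the paper's proof: place the system in the illegitimate Nash-equilibrium configuration, observe that no agent has an incentive to execute an enabled rule there, and conclude that under the violation assumption the system can remain in that illegitimate configuration forever, defeating convergence. Your additional care about indifference (Nash equilibrium only ruling out strict improvements) and about simultaneous moves by other agents is a reasonable tightening of details the paper leaves implicit, but it does not change the argument.
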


\begin{proof} Suppose that the agents can violate the rules and that there is an illegitimate configuration $c$ that is a Nash equilibrium. Let $c$ be the initial configuration. Due to Nash equilibrium, no enabled agent has an incentive to change its state in $c$, so the convergence property will not be satisfied, i.e., starting from some initial configurations, $Q$ may never reach a legitimate one, which contradicts the fact that $Q$ is self-stabilizing.
\end{proof}

Consider a dynamic game where players are the agents of a self-stabilizing DIS, a payoff \red{(reward)}, i.e., the payout a player receives from the outcome of the game, is the difference between the gains of two consecutive configurations for an agent, and an action is a choice between violating or not violating an enabled rule. Then, \Cref{th:viloation} provides the conditions under which a self-stabilizing system is guaranteed to tolerate violations.

\begin{theorem}\label{t7}
	\label{th:viloation}
	If all Nash equilibria are legitimate configurations and no agent can selfishly change the value of its primary variables, then convergence is guaranteed in the face of violations if the enabled rules of each agent are executed with probabilities corresponding to the agent's behavior strategy assigned to the actions of those rules.
\end{theorem}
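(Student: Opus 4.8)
The plan is to show that, under the two hypotheses, the randomized system in which each agent executes its enabled violation-prone rules with the probabilities given by its behavior strategy is a \emph{probabilistic} self-stabilizing system; convergence with probability one is then immediate from the recalled fact that, under a randomized scheduler, such a system reaches $L$ with probability one. Since $C$ is finite, it is enough to establish the reachability condition: from every configuration there is a finite sequence of positive-probability transitions ending in a legitimate configuration. I would begin by noting that the hypothesis ``no agent can selfishly change its primary variables'' is precisely what makes the available action set $A^c_v$ reduce to the binary choice between executing an enabled rule and not executing it, so that the dynamic game set up before the theorem captures all possible behaviors and its equilibrium behavior strategies are the actual dynamics.

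The central step is to argue that no illegitimate configuration can be a rest point of these dynamics. Suppose some illegitimate $c$ were a rest point, i.e.\ every agent plays ``do not execute'' so the system stays at $c$. Because $c$ is illegitimate, by hypothesis it is not a Nash equilibrium, so by \Cref{df:nashs} some agent $v$ has an action $a\in A^c_v$ whose unilateral use leads to a sequence ending in a configuration of strictly larger utility for $v$; since remaining at $c$ yields $v$ no gain, $a$ must be the execution of one of $v$'s enabled rules, and best-response optimality forces $v$ to play $a$ with positive probability --- contradicting that $c$ is a rest point. Hence from every illegitimate configuration some agent executes an enabled rule with positive probability, and whenever the dynamics do come to rest the resting configuration is a Nash equilibrium, hence legitimate.

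It remains to rule out that the dynamics recur forever among illegitimate configurations. Here I would use the telescoping form of the payoffs: by \Cref{d7} the stage payoff of $v$ equals $g_v$ at the new configuration minus $g_v$ at the old one, so the total payoff collected by $v$ along an execution telescopes and stays inside the finite range of $g_v$ on $C$; an execution that cycles among illegitimate configurations therefore gives $v$ a long-run payoff that does not beat the status quo and is dominated by the payoff of the profitable deviation supplied by \Cref{df:nashs}, so such a recurrent profile is not an equilibrium. Combining this with the previous paragraph and finiteness of $C$: every execution from an illegitimate configuration reaches $L$ with positive probability; this is the reachability condition, and since legitimate configurations have no enabled rule and primary variables cannot be altered selfishly, $L$ is absorbing, completing the proof.

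I expect the last step --- formally excluding recurrent illegitimate executions by tying the game-theoretic equilibrium conditions to the transience of illegitimate configurations --- to be the main obstacle; a lesser subtlety is arguing that a profitable deviation for $v$ compels a \emph{first} move of positive probability under $v$'s equilibrium behavior strategy rather than just a preference over eventual outcomes.
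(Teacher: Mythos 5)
Your proposal takes a genuinely different and more ambitious route than the paper. The paper's own proof is three sentences long: it asserts that the equilibrium behavior strategies are exactly the probabilities with which rational agents execute enabled rules, and then concludes by invoking the fact that a probabilistic self-stabilizing system converges to a legitimate configuration with probability one; the substantive convergence claim is deferred to the game model of \Cref{subsec:gamemodel} and the cited rational-learning convergence results. You instead try to establish the reachability condition for a finite Markov chain directly. Your first two steps are sound and in the spirit of \Cref{t6}: the hypothesis that no agent can selfishly change its primary variables reduces $A^c_v$ to the execute/violate choice, and an illegitimate rest point would have to be a Nash equilibrium in the sense of \Cref{df:nashs}, contradicting the hypothesis that all Nash equilibria are legitimate; a strictly profitable deviation then forces positive probability on the corresponding first move under a payoff-maximizing behavior strategy.

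The genuine gap is in your recurrence-exclusion step. Knowing that every illegitimate configuration has \emph{some} agent who moves with positive probability does not yield a positive-probability finite path into $L$: the induced chain could still have a recurrent class of illegitimate configurations (e.g., two pending neighbors oscillating in perfect anti-correlation, as in the MIS case). Your telescoping argument only shows that each agent's cumulative payoff along such a cycle is bounded; it does not refute the cycle, because recurrence is a property of the \emph{joint} randomized dynamics, not a strategy profile that any single agent chooses and that can be knocked out by exhibiting a unilateral profitable deviation. What actually closes this in the paper's framework is that a behavior strategy places positive probability on every payoff-maximizing action of every enabled agent, independently across agents, so every symmetry-breaking joint outcome (one agent executes while its competitors abstain) occurs with positive probability in each round; this is precisely the argument the paper later makes concretely for $vt$MIS in \Cref{t145}, and combined with finiteness of $C$ and the absorbingness of $L$ it gives the reachability condition you need. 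You should replace the telescoping paragraph with that positive-probability symmetry-breaking argument, or, as the paper effectively does, defer to the probabilistic self-stabilization literature for it.
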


\begin{proof}
	\red{As we will discuss in the next section,} the behavior strategies of an equilibrium represent the probabilities that agents rationally execute enabled rules. This implies that agents are behaving rationally in decision making. Because a probabilistic self-stabilizing system eventually converges to a legitimate configuration, the convergence criteria holds.\end{proof}

Next, we present a game theoretic model for self-stabilizing systems. This allows us to obtain behavior strategies associated with the probabilistic rules of an algorithm that tolerates selfish behaviors.

\subsection{Game-Theoretic Model of Self-stabilizing Systems}
\label{subsec:gamemodel}

Game theory is a natural framework for modeling distributed self-stabilizing systems where players (agents), which are not aware of the entire
network topology, alternate their actions in each configuration based on local knowledge and the dynamic game among agents is repeated many times until the system terminates in a legitimate configuration.

We assume rational and intelligent agents that are intent on maximizing their individual utilities. These agents are players of a game of a self-stabilizing DIS. Such a game consists of a sequence of stage games that are played in different rounds where a round represents the minimum unit of time during which the scheduler selects a subset of agents to simultaneously execute actions, and then the next configuration is determined according to these taken actions and the current configuration.

Noting that a distributed scheduler is defined by a probability distribution that depends on the latest actions and the current configuration, a game that models a distributed self-stabilizing system is a stochastic game \cite{shapley_stochastic_1953}, i.e., a dynamic game with probabilistic transitions. Moreover, the particular view of a game, where an agent that has incomplete information chooses an action, and then the payoff of each agent is determined by the payoffs of its neighbors is very similar to games on networks \cite{galeotti_network_2010}. Note that a 
network 
game (a game where players are connected via a network structure) is a Bayesian game \cite{harsanyi1967games, Jackson_2008}. Therefore, we can conclude that any game in a distributed self-stabilizing system is a stochastic Bayesian game \cite{albrecht_belief_2016} (i.e., a multi-stage Bayesian game with probabilistic transitions).




The underlying assumption is that each agent knows its state, the states of all agents that it can communicate with, and the induced subgraph formed from itself and those agents. This private information of an agent corresponds to the notation of type in Bayesian games. Subsequently, beliefs of an agent describe the uncertainty of that agent of the types of the other agents. Note that in a self-stabilizing system, agents need to generate beliefs about the states and localities of agents about whom they have incomplete information (the network beyond their neighborhoods) to model the game despite the fact that they only care about the actions of their neighbors.

We assume that agents observe their true $k$-local state in every stage of the game. Without this assumption, each agent would have to make decision under uncertainty as a result of partially observable $k$-local states (e.g., due to limited and noisy connectivity) and update its belief in the $k$-local state in addition to that of agents beyond its neighborhood \cite{wray_distributed_nodate}. In other words, the game model would be a partially observable stochastic game which is intractable. Another consideration is that in a particular case where the distributed scheduler is synchronous (i.e., all agents are selected in every round), given the current $k$-local state and the actions taken, the next $k$-local state is realized with certainty and thus the dynamic model of the game is not stochastic.


We assume rounds (time periods) $t=0,1,\dots$, where within each round $t$, every agent updates its variables based on the observable information in that round \cite{jaggard_dynamics_2017}. We refer to the base game that occurs during round $t$ as \textit{stage game} $t$. Let $V$ be the finite set of agents indexed $1,\dots,n$.
Assume an agent $v\in V$ that can communicate with all other agents within $k\geqslant 1$ hops. We denote the set of $v$ and its one to $k$-hop neighbors by $\mathcal{L}_v(k)$.
In particular, $\mathcal{L}_v(k)$ specifies the set of agents that $v$ can read their variables. Moreover, $v$ knows the induced subgraph of $\mathcal{L}_v(k)$. We also denote the set of $k$-hop neighbors of $v$ by $\mathcal{L}_v^*(k)\subset \mathcal{L}_v(k)$. In fact, $\mathcal{L}_v^*(k)$ specifies the agents that $v$ has incomplete information about, i.e, the agents that $v$ can read their variables but it does not know their one-hop neighbors. Note that $\mathcal{L}_{v}(k)=\cup_{i=1}^k\mathcal{L}_{v}^*(i) \cup \{v\}$. Figure \ref{fig:graph} shows an example of $\mathcal{L}_{v}(k)$ and $\mathcal{L}_{v}^*(k)$ when $k=2$.
In the following, unless otherwise stated, neighbors will refer to one-hop neighbors.

\begin{figure}
	\centering
	\includegraphics[width=0.35\linewidth]{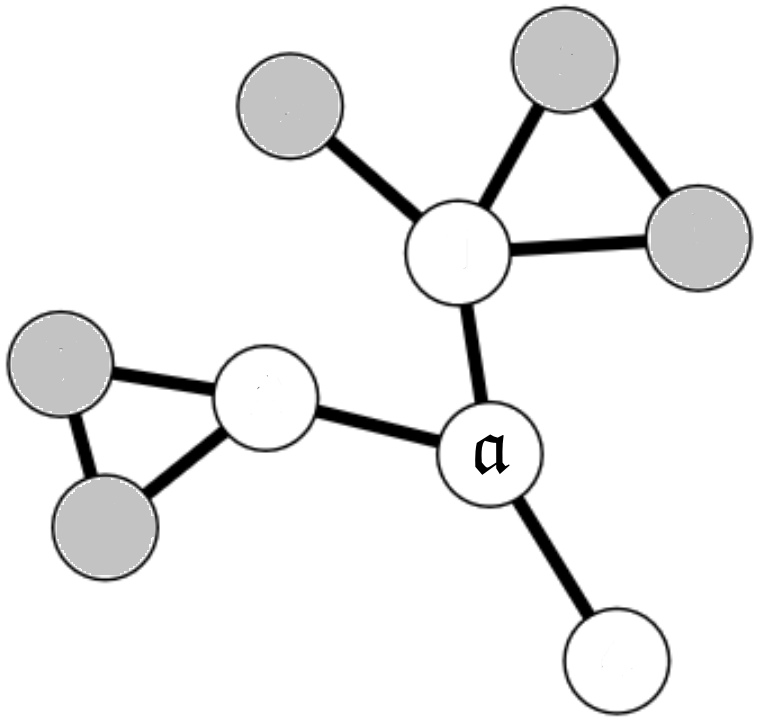}
	\captionof{figure}{A sample subgraph which shows the two-hop neighborhood of agent $\mathfrak{a}$, i.e., $\mathcal{L}_\mathfrak{a}$ given $k=2$. Here, the gray nodes represent the set of agents about whom $\mathfrak{a}$ has incomplete information, i.e., $\mathcal{L}_\mathfrak{a}^*$.}
	\label{fig:graph}
\end{figure}

We assume that the duration of a stage game is long enough for each agent to communicate with all of its neighboring agents within distance $k$ by message passing. We define a $k$-local state of agent $v$, denoted by $\lambda_v(k)$, as the set of primary variables of all agents in $\mathcal{L}_v(k)$. 
Then, an observation of agent $v$, denoted by $o_v(k)$, is a temporary record of variables of every agent in $\mathcal{L}_v(k)$ that $v$ collects via messages in each round. If communications are reliable, $v$ can extract $\lambda_v(k)$ from $o_v(k)$ accurately; otherwise, it needs to estimate $\lambda_v(k)$ from $o_v(k)$. A new stage game starts when all agents update their $k$-local state.
We then characterize the next behavior of each agent through a stochastic Bayesian game framework. 

In this paper, as stated before, we assume reliable communications.
We also assume the gain and available actions of each agent depend only on its own state and that of its one-hop neighbors. 
Finally, we assume that at the end of each round, agents inform each other about the actions taken. Note that agents cannot reliably determine actions by observing local situations alone because not all actions are actually executed due to distributed (asynchronous) scheduling.

Suppose that agent $\mathfrak{a}$ models the game and that $\mathfrak{a}$ communicates with any agent within distance $k$. Our proposed stochastic Bayesian game 
that accounts for both distributed scheduler and incomplete information is $\left< \mathcal{L}_{\mathfrak{a}}(k),\Lambda_{\mathfrak{a}}(k),\Theta,A,g,\mathcal{P},\mu,\delta\right>$, where
\begin{itemize}
	\item $\mathcal{L}_{\mathfrak{a}}(k)=\{w\in V\mid dist(\mathfrak{a},w)\le k\}$ and $\mathcal{L}_{\mathfrak{a}}^*(k)=\{w\in \mathcal{L}_{\mathfrak{a}}(k)\mid dist(\mathfrak{a},w)=k\}$,
	\item $\Lambda_{\mathfrak{a}}(k)$ denotes the set of $k$-local states of $\mathfrak{a}$, 
	\item $\Theta$ is a set of types for every agent, 
	where $\theta_u \in \Theta$ denotes the type of $u\in \mathcal{L}_{\mathfrak{a}}^*(k)$ from point of view of $\mathfrak{a}$, and $\theta=\left<\theta_1,\dots,\theta_{|\mathcal{L}_{\mathfrak{a}}^*(k)|}\right>$ \red{(we do not consider the type of $u \in \mathcal{L}_{\mathfrak{a}}(k)\setminus \mathcal{L}_{\mathfrak{a}}^*(k)$ because under our base assumptions, it is known to $\mathfrak{a}$)},
	\item $A$ is a finite set of actions, where $A_v{(\lambda_{\mathfrak{a}}(k),\theta)}\in A$ is the set of available actions  for agent $v \in \mathcal{L}_{\mathfrak{a}}(k)$ given $k$-local state $\lambda_{\mathfrak{a}}(k)$ and joint type $\theta$,
	\item $g_v: (\Lambda_\mathfrak{a}(k),\Theta^{|\mathcal{L}_{\mathfrak{a}}^*(k)|}) \rightarrow \mathbb{R}$ is the gain function of agent $v \in \mathcal{L}_{\mathfrak{a}}(k)$, 
	\item $\mathcal{P}$ is a set of conditional transition probabilities between $k$-local states, where $\mathcal{P}^(\lambda'_{\mathfrak{a}}(k)\mid \lambda_{\mathfrak{a}}(k), \vec{a})$ denotes the probability of transition from $\lambda(k)$ to $\lambda'_{\mathfrak{a}}(k)$ after taking joint action $\vec{a}$ such that $\lambda_{\mathfrak{a}}(k), \lambda'_{\mathfrak{a}}(k)\in \Lambda_{\mathfrak{a}}(k)$, and $\vec{a}\in \prod_{v\in \mathcal{L}_{\mathfrak{a}}(k)}A$,
	\item $\mu$ is a set of beliefs in types, where $\mu_{u}\left(\theta_{u} \mid h_{\mathfrak{a}}(k,t)\right)$ is the probability that in round $t$, the type of agent $u\in\mathcal{L}_{\mathfrak{a}}^*(k)$ is $\theta_{u}$ given the history of actions and $k$-local states at round $t$ from the point of view of $\mathfrak{a}$, i.e., $h_{\mathfrak{a}}(t)$,
		\item $\delta$ is a discount factor, where $0<\delta\le1$.
\end{itemize}


We assume that from the point of view of each agent, e.g., $v$, agents that are $k$ hops apart maintain private information.
The reason for this is that these agents act based on the states of their neighbors but $v$ does not have access to them. A type $\theta_u \in \Theta$ represents necessary information about the state of sub-graph branched from agent $u \in \mathcal{L}_{v}^*(k)$. The set of types available to an agent depends to the algorithm and specifies its behavior. Note that any agent knows the type of its $k'$-distance neighbors where, $k'<k$, because we have assumed that every agent can read the state of each agent within distance $k$.


\red{To clarify the role of types with an example, we consider a self-stabilizing algorithm for the coloring problem. The algorithm in the case that a node has the same color as a nearby node randomly chooses a color from the set of all possible colors, excluding the color of the node's neighbors but including the color of its own.} Let the colors be numbered from $1$ to $d_{\mathrm{max}}+1$, where $d_{\mathrm{max}}$ is the maximum degree of the graph. Assume each node only knows its one-hop neighbors ($k=1$) and it also has the ability to violate the algorithm rules. Moreover, assume each node gains a profit equal to the reverse of the number of its color if the node has no neighbor of its own color; otherwise, it gains zero. Now, consider the neighborhood of node $v$ as depicted on Figure \ref{fig:graph2} and let $d_{\mathrm{max}}=4$. Then, $v$ can either violate or execute the rule that randomly chooses a color from $\left\{1,2,4,5\right\}$. The decision of $v$ depends on its neighbors, in particular $w$ and $u$. For example, if $w$ has only one neighbor with color of $1$, $v$ may prefer to violate because it knows that $w$ cannot choose $1$ as its color and that there is a chance that the color of $u$ does not change but $w$ sets its color to 5, so $v$ will have to choose a color from $\{1,2,4\}$. Therefore, we conclude that the type of a node in this example is defined as a set of colors specifying its neighbors' colors, which is known only to itself. Hence, nodes have to make beliefs about the types of other nodes.
\begin{figure}[h]
	\centering
	\includegraphics[width=0.35\linewidth]{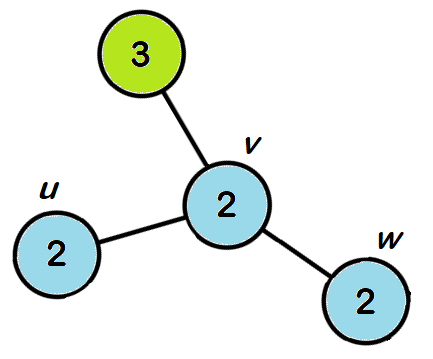}
	\captionof{figure}{\red{An example of a graph coloring problem.}}
	\label{fig:graph2}
\end{figure}

A belief $\mu_{u}\left(\theta_{u} \mid h^{t}\right)$ is the probability that agent $u$'s type is $\theta_{u}$ given history of the game at the beginning of each stage game $t$.
We define history of a game from the point of view of $\mathfrak{a}$ as $h_{\mathfrak{a}}(t)=\left(\lambda_{\mathfrak{a}}^0(k), \vec a^0, \ldots, \lambda_{\mathfrak{a}}^{t-1}(k),\vec a^{t-1}, \lambda_{\mathfrak{a}}^{t}(k)\right)$, where $\lambda_{\mathfrak{a}}^t(k)$ and $\vec a^t$ are respectively the $k$-local state and joint action at round $t$ \red{(superscript $t$ stands for the round number). From now on, for the sake of simplicity, we drop $\mathfrak{a}$ and $k$ from notations $h_{\mathfrak{a}}(k,t)$, $\lambda_{\mathfrak{a}}(k)$, and $\Lambda_{\mathfrak{a}}(k)$ as they can be inferred from notation $\mathcal{L}_{\mathfrak{a}}(k)$, and we follow notation $h^t$ for the history.} We assume that at the end of each stage game, every agent monitors its $k$-local state transitions to
infer the actions taken by other agents
and use this information to update its belief about $\theta$ (types of agents) regarding that stage game. The belief system defines updating the beliefs using Bayes' rule at the end of each stage game.
The posterior belief at the end of stage game $t$ is
\begin{equation}
\label{eq:belief1}
\mu_{u}(\theta_{u}^t \mid a_u, h^t)=\frac{\mu_{u}(\theta_{u}^t \mid h^t) \Pr(a_{u} \mid h^t, \theta_{u}^t)}{\sum_{\tilde{\theta}_{u}^t\in \Theta} \mu_{u}\left(\tilde{\theta}_{u}^t \mid h^{t}\right) \Pr\left(a_{u} \mid h^{t}, \tilde{\theta}_{u}^t\right)},
\end{equation}
where $\mu_u\left(\tilde{\theta}_{u} \mid h^t\right)>0$,
and
$\Pr\left(a_u^t \mid h^t, \theta_{u}\right)>0$ is the probability that action $a_{u}$ is executed at stage game $t$.

Next, we need to update the beliefs regarding the next stage. \red{We do so because the state of agents is not stabilized in the early rounds.} Note that the next condition must be met so that we can solve the game.
\begin{equation}
\label{eq:stabtprim}
\exists t'\ge 0 \forall t\ge t': \Pr(\theta_{u}^{t+1} \mid \theta_u^t, h^{t+1})=\begin{cases}
1, &\theta_{u}^{t+1}=\theta_{u}^{t},\\
0, &\text{otherwise.}
\end{cases}
\end{equation}
Of course, a self-stabilizing system, which stabilizes after a finite number of rounds, guarantees this condition.

The beliefs regarding the next stage are computed as follows:
\begin{equation}
\label{eq:belief2}
\mu_{u}(\theta_u^{t+1} \mid h^{t+1})=\sum_{\theta_u^t\in\Theta}\Pr(\theta_{u}^{t+1} \mid \theta_u^t, h^{t+1})\mu_{u}(\theta_{u}^t \mid h^{t+1}), 
\end{equation}
where $\mu_{u}(\theta_{u}^t \mid h^{t+1})$ is the belief regarding stage game $t$ and
$\Pr\left(\theta_{u}^{t+1} \mid \theta_u^t, h^{t+1}\right)$ is the conditional probability transition between the types. The beliefs computed with \eqref{eq:belief2} are used as the prior beliefs in \eqref{eq:belief1} at the next stage game.

We assume that types are independent and that the type of an agent does not change within a stage game. Therefore, $\mu(\theta \mid h^t)$, which is the probability of a joint type given history of the game, is 
\begin{equation}\label{eq:theta}
\mu(\theta \mid h^t)=\prod_{u \in \mathcal{L}_{\mathfrak{a}}^*(k)}\mu_{u}(\theta_u \mid h^{t}).
\end{equation}

The payoff of agent $v$ after a joint action that transitions $k$-local state $\lambda$ to $\lambda'$ is
\[
\label{x4_eq}
u_v(\lambda,\lambda',\theta_v)=g_v(\lambda',\theta_v)-g_v(\lambda,\theta_v),
\]
where $\theta_v$, which is known to $v$, is considered only if $v\in \mathcal{L}_{\mathfrak{a}}^*(k)$.

Each agent adopts a behavior strategy
to execute an action
at each stage game. Behavior strategy $\sigma_{v}$ assigns a conditional probability over 
actions where the conditioning is on the history of the game, i.e, $\sigma_{v}=\Pr\left(a_{v} \mid h^{t}\right)$.

Next, we leverage Harsanyi-Bellman Ad Hoc Coordination (HBA) \cite{albrecht2013game} to combine Bayesian Nash equilibrium (BNE) with Bellman optimality equation \cite{bellman1957dynamic}. A BNE in round $t$ is a strategy profile $(\sigma_1,\dots,\sigma_{|\mathcal{L}_{\mathfrak{a}}(k)|})$ that maximizes the expected payoffs for all $v\in\mathcal{L}_{\mathfrak{a}}(k)$ with respect to the joint type and available actions in that round. However, this only considers immediate payoffs whereas optimal behavior requires an agent to take payoffs of future rounds into account. Therefore, we combine BNE with Bellman equation to select actions with the maximum expected payoffs, taking into account future rounds.

Given an agent $z$, history $h^t$, and discount factor $\delta$, we use the following operations to get behavior strategy  $\sigma_z(a_{z}|h^t)$.

We compute posterior probability $\mu\left(\theta \mid h^{t}\right)$ using \eqref{eq:belief1}-\eqref{eq:theta}. Using the posterior probability, HBA chooses an action $a_{z}\in A_{z}$ that maximizes the expected payoff $E_{\lambda^{t}}^{a_{z}}\left(h^{t}\right)$ defined as
$$
\begin{gathered}
E_{\lambda}^{a_{z}}(h^t)=\sum_{\theta \in \Theta} \mu\left(\theta \mid h^t\right) E_{\lambda,\theta}^{a_{z}}(h^t)
\end{gathered}
$$
such that $E_{\lambda,\theta}^{a_{z}}(.)$ recursively computes the expected payoff by predicting the future trajectories as
$$
\small{
\begin{gathered}
E_{\lambda,\theta}^{a_{z}}(\hat{h})=\sum_{\vec{a} \in a_{z}\prod\limits_{v \neq z} A_{v}} Q_{\lambda}^{\vec{a}}(\hat{h},\theta) \prod_{v \neq z} \sigma_{v}\left(a_v|\hat{h},\theta_v\right) \\
Q_{\lambda}^{\vec{a}}(\hat{h},\theta)=\sum_{\lambda^{\prime} \in \Lambda} \mathcal{P}\left(\lambda^{\prime}\mid\lambda,\vec{a}\right)\left[u_{z}(\lambda, \lambda',\theta_z)+\delta \max _{a_{z}} E_{\lambda^{\prime},\theta}^{a_{z}}\left(\left\langle\hat{h}, \vec{a}, \lambda^{\prime}\right\rangle\right)\right]
\end{gathered}}
$$
where $\hat{h}$ is the projected history, the purpose of which is to generate future trajectories.
	Note that $\theta_v$ is defined only for $v\in \mathcal{L}_{\mathfrak{a}}^*(k)$, and it hence is not considered for $v\not \in \mathcal{L}_{\mathfrak{a}}^*(k)$ although it is written in the computations. Moreover, we assume that the depth of the planning horizon (recursive calls) is infinite, which necessitates discount factor $\delta$, nevertheless the system is self-stabilizing. However, one may set $\delta=1$, because all trajectories end to the 
	configurations where hereafter for all $v\in\mathcal{L}_{\mathfrak{a}}(k)$, $|A_{v}^{\lambda, \theta}|=1$ (i.e., a termination condition). Afterwards, one may also use \eqref{eq:belief2} and \eqref{eq:theta} to update the projected beliefs for a limited number of the trajectories due to
	\red{unstabilized states}
	in the initial rounds.

Finally, a strategy profile $(\sigma_1,\dots,\sigma_{|\mathcal{L}_{\mathfrak{a}}(k)|})$ is optimal, a BNE, in round $t$ if it simultaneously maximizes the expected payoff for all $z\in \mathcal{L}_{\mathfrak{a}}(k)$ in round $t$. Note that for each $z\in \mathcal{L}_{\mathfrak{a}}(k)$, a BNE places positive probabilities only on actions $a_{z} \in A_{z}$ that maximize the expected payoff, i.e., $\arg\max_{a_{z}} E_{\lambda^{t}}^{a_{z}}\left(h^{t}\right)$.

Given \eqref{eq:stabtprim} and the fact that the system stabilizes, there exists a round after which HBA knows the agent types and, since 
it always learns the same from a given history (all types are deterministic learners\red{, i.e., given history and model parameters the resulting type is deterministic}), the expected payoffs are correct \cite{albrecht2013game}. Besides since the type distribution is always absolutely continuous (\eqref{eq:belief1} and \eqref{eq:belief2}), according Theorems 1 and 2 in \cite{kalai1993rational}, the system converges to a Nash equilibrium.

\subsection{Perturbation Avoidance after Closure}
\label{subsec:approach:perturbation}

\red{As we discussed in \Cref{section:problem},} a perturbation is as a unilateral and specified change (e.g., $True \Rightarrow False$) of one primary variable of an agent in a legitimate configuration that turns that legitimate configuration into an illegitimate one. \red{We define a $k$-fault as changing $k$ primary variables in a legitimate configuration of the system by arbitrarily transient faults.} Based on this definition, a perturbation is equivalent to a 1-fault \cite{ghosh1996fault}. \red{Henceforth, we use the term 1-fault to refer to perturbations.} The configuration  derived from a 1-fault differs from a legitimate one only in the variables of the faulty agent. Works on self-stabilization \cite{ghosh1996fault, ghosh_fault-containing_2007} 
show that when an agent is in a legitimate configuration and a 1-fault occurs: a) that agent becomes enabled and b) it reaches a stable configuration by the agent executing one of its enabled rules. We aim to detect and resolve 1-faults by activating rules only in the faulty agent to prevent other agents from experiencing the fault.

\red{Formally, we define a 1-fault as a tuple $\left<v,X,\left(x_1,x_2\right)\right>$ (denoted shortly by $\hat{e}_v$), where $v\in V$ is the faulty agent, $X\in Vars$ is the corrupted variable, and $x_1,x_2\in\mathbb{V}_X$ are the values of $X$ before and after the change point, respectively.} 

\begin{definition}\label{d5}{\em Depth of Contamination.} Let $R_v$ be a subgraph induced by the agents involved in recovering from a 1-fault
$\hat{e}_v$.
Then, the depth of contamination denoted by $D(\hat{e}_v)$
is the distance from $v$ to the farthest agent in $R_v$, i.e., $D(\hat{e}_v) = \max\{dist(v,w)\mid w\in R_v\}$.
\end{definition}

\begin{definition}\label{d4}{\em Fault-Containment.}
	Let $Q$ be a self-stabilizing DIS and let $\hat{e}_v$ denote a 1-fault.
	$Q$ is fault-containing for $\hat{e}_v$ if $D(\hat{e}_v)$ remains constant regardless of the number of agents in the system.
\end{definition}

\begin{theorem}\label{t2} If a self-stabilizing system always returns to the legitimate configuration that the system was in before a perturbation, that legitimate configuration is a Nash equilibrium. 
\end{theorem}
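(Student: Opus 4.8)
The plan is to argue directly from \Cref{df:nashs}. Let $c\in L$ be a legitimate configuration with the property that, whenever a perturbation occurs in $c$, the system returns to $c$. First I would note that because $c$ is legitimate, no self-stabilizing rule is enabled at $c$, so the only way an agent $v$ can alter the configuration starting from $c$ is by a deflection, i.e., by unilaterally assigning a new value to one of its primary variables. A unilateral single-variable change in a legitimate configuration is precisely a perturbation (a $1$-fault $\hat{e}_v$), which produces some illegitimate configuration $c_1$. Hence, up to the null action, the actions in $A_v^c$ that are relevant to checking the Nash condition at $c$ are exactly the perturbations to which the hypothesis applies.

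Next I would trace the trajectory triggered by an arbitrary action $a\in A_v^c$. If $a$ is the null action, the induced sequence is just $(c)$ and $v$'s utility is unchanged. Otherwise $a$ is a perturbation taking the system to an illegitimate $c_1$; since $Q$ is self-stabilizing, the convergence condition of \Cref{d1} guarantees that from $c_1$ the system reaches a legitimate configuration, and by the hypothesis of the theorem that configuration is $c$ itself. Thus every sequence of configurations that $a$ can drive the system through terminates back at $c$, i.e., it has the form $(c, c_1,\dots, c)$. Consequently the utility of $v$ at the end of this sequence is $g_v(c)-g_v(c)=0$, so $a$ does not strictly increase $v$'s utility relative to $c$. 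Since $v$ and $a$ were arbitrary, no agent has a profitable unilateral deviation from $c$, and therefore $c$ is a Nash equilibrium by \Cref{df:nashs}.

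The step I expect to be the main obstacle is pinning down the phrase ``a sequence of configurations $(c,\dots,c')$ where the utility of $v$ in $c'$ is more than that in $c$'' in \Cref{df:nashs}: if $c'$ is allowed to be an intermediate, transient (illegitimate) configuration along the forced return to $c$, an agent might momentarily enjoy a higher gain even though the system ultimately comes back to $c$. I would resolve this using the modeling convention underlying perturbations in \Cref{section:problem} --- a selfish agent perturbs only in order to steer the system toward a \emph{different legitimate} configuration, so the utility it actually realizes is the gain at the configuration in which the system stabilizes, which here is $c$ --- possibly combined with the standing assumption after \Cref{d7}. Making that reduction from arbitrary intermediate configurations to the stabilized one fully rigorous (or, equivalently, restricting the deviation outcomes to terminal configurations) is the delicate part; once it is granted, the remainder is a short unwinding of the definitions as above.
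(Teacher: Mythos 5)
Your proposal is correct and follows essentially the same argument as the paper's proof: a perturbation is the only unilateral action available in a legitimate configuration, and since the system is guaranteed to return to the same configuration, the perturbing agent realizes no net gain, so the configuration is a Nash equilibrium by \Cref{df:nashs}. Your version is in fact more careful than the paper's --- in particular, the subtlety you flag about transient intermediate configurations possibly yielding momentarily higher gain is silently glossed over in the paper, which simply identifies the perturber's incentive with the gain at the configuration where the system re-stabilizes.
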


\begin{proof}According to the definition of Nash equilibrium in self-stabilization, a legitimate configuration is a Nash equilibrium if no agent can profit from a unilateral selfish action. The incentive for an agent to perturb the system is that convergence to another legitimate configuration allows it to gain more profit in the new configuration than it could in previous ones. If after a perturbation, a self-stabilizing algorithm causes the system to again converge to the last legitimate configuration, no agent will have incentive to cause a perturbation, and thus that configuration is a Nash equilibrium. \end{proof}

\begin{definition}\label{d3}{\em Perturbation-Proof.}
	A self-stabilizing system is perturbation-proof if all legitimate configurations are Nash equilibria for the set of gain functions associated with the agents.
\end{definition}

\begin{theorem}\label{t4} Let a self-stabilizing DIS $Q$ contain 1-fault $\hat{e}_v$ with contamination depth zero. Then, $Q$ is perturbation-proof for the set of gain functions that can lead to $\hat{e}_v$.\end{theorem}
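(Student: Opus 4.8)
The plan is to derive the statement from Theorem~\ref{t2}, which says that if a self-stabilizing system always returns, after a perturbation, to the legitimate configuration it occupied just before that perturbation, then that configuration is a Nash equilibrium. So the whole proof reduces to two things: (i) showing that a 1-fault $\hat e_v$ of contamination depth zero is always ``undone'' by the recovery, so the hypothesis of Theorem~\ref{t2} is met, and (ii) lifting this from the single configuration that $\hat e_v$ perturbs to \emph{all} legitimate configurations, once the gain functions are restricted to those that can give rise to $\hat e_v$.

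For (i) I would fix an arbitrary legitimate configuration $c$ in which $\hat e_v=\langle v,X,(x_1,x_2)\rangle$ is applicable, i.e.\ in which $v.X=x_1$, and let $\hat c$ be the configuration obtained from $c$ by the fault, so $\hat c$ differs from $c$ only in that $v.X=x_2$. Since $c\in L$ no agent is enabled in $c$, whereas $\hat c\notin L$, so the system must move. By the depth-zero hypothesis, \Cref{d5} forces the recovery subgraph to be $R_v=\{v\}$; hence along the entire recovery trajectory $\hat c=c_0\to c_1\to\cdots\to c_m=c'$ only $v$ ever changes state, every $c_i$ agrees with $c$ except on the state of $v$, and $c'\in L$.

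The key step is to show $c'=c$, and this is where I expect the real work to lie. Because only $v$ moves, the states of $v$'s neighbours stay frozen at their $c$-values throughout the recovery, so whether $v$ is enabled in any $c_i$ is a function of the state of $v$ alone (the guards of $v$'s rules are predicates on its $k$-local state, the part of which outside $v$ is constant). In both $c$ and $c'$ the agent $v$ is disabled against the same fixed neighbourhood; combining this with the facts about 1-fault recovery recalled before \Cref{d5} (the faulty agent becomes enabled and reaches a stable configuration by executing one of its enabled rules) and with the fact that $v$'s legitimate state is determined by its unchanged neighbourhood, the state of $v$ in $c'$ must coincide with its state in $c$, so $c'=c$. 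The delicate point is that an arbitrary self-stabilizing algorithm may admit several legitimate states for an agent against a fixed neighbourhood; the argument genuinely relies on $\hat e_v$ being contained with depth zero, i.e.\ on the recovery being unable to reach any legitimate configuration other than $c$ --- for otherwise $v$ would strictly profit whenever $g_v$ rewarded the alternative, contradicting the depth-zero containment guarantee that is the hypothesis of the theorem.

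Granting $c'=c$, the system always returns, after $\hat e_v$, to the legitimate configuration it was in beforehand, so Theorem~\ref{t2} makes that configuration a Nash equilibrium. To finish, let $\mathcal G$ be the set of gain functions that can lead to $\hat e_v$: for $g\in\mathcal G$ the only perturbation any agent has an incentive to attempt is an instance of $\hat e_v$, applied at some legitimate $c$ with $v.X=x_1$, and every such $c$ is a Nash equilibrium by the previous step, while every remaining legitimate configuration is a Nash equilibrium vacuously (no profitable perturbation is available there). Hence for every $g\in\mathcal G$ all legitimate configurations are Nash equilibria, which is exactly the assertion that $Q$ is perturbation-proof for $\mathcal G$ in the sense of \Cref{d3}.
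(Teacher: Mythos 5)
Your proposal follows essentially the same route as the paper: contamination depth zero forces only $v$ to move during recovery, the system therefore returns to the pre-fault legitimate configuration, and Theorem~\ref{t2} then yields the Nash-equilibrium property for the relevant gain functions. The one step you labour over --- that the recovery ends at the original configuration $c$ itself rather than at some other legitimate configuration differing only in $v$'s state --- is exactly the step the paper's proof simply asserts, so your version is, if anything, more explicit than the published one.
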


\begin{proof} After $\hat{e}$ occurs
, since $D(\hat{e}_v)$ is zero, only $v$ will change its variables; otherwise, the system remains in an illegitimate configuration. By doing so, the system converges to the last legitimate configuration through the execution of a finite number of rules by $v$. Therefore, according to \Cref{t2}, the legitimate configurations of $Q$ are Nash equilibria for any gain function that leads to $\hat{e}_v$.
\end{proof}

\subsection{Self-Stabilization facing Selfishness: a 3-Step Approach}
\label{subsec:approach:deflection}

Given that no illegitimate configuration is a Nash equilibrium, one can design a self-stabilizing algorithm that guarantees both convergence and closure properties even when selfish agents are able to deflect (i.e., violate, execute an unauthorized action during convergence, or perturb), using the following sequence of three operations:
\begin{enumerate}[i.]
	\item First, we identify unauthorized actions and their corresponding guards and resolve them by adding rules to the system. \red{We call these rules \textit{selfish rules}}.
	\item Then, we modify the system to be perturbation-proof. 
	\item Last, we randomize the rules with behavior strategies. Doing so not only make the system
	tolerate
	violations but also confirms the closure property because the probabilities of executing \red{selfish rules} become zero in a legitimate configuration.
\end{enumerate}

According to \Cref{t9}, the final solution is weak-stabilizing \cite{gouda2001theory}. \red{Under a weak-stabilizing algorithm, from any arbitrary configuration, there is a set of computations that eventually reaches a legitimate configuration.}

\begin{theorem}\label{t9}
	Assume a system where agents can arbitrarily execute actions that do not comply with their enabled rules. Then, a perturbation-proof and violation-tolerant self-stabilizing solution to the system is weak-stabilizing. 
\end{theorem}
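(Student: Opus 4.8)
The plan is to reduce the statement to the convergence guarantee of \Cref{th:viloation} by exhibiting, out of every configuration, a single ``well-behaved'' computation. Recall that weak-stabilization asks only that from each configuration \emph{some} computation reaches a legitimate configuration, whereas self-stabilization asks it of \emph{all} computations; the gap is unavoidable here, since the hypothesis permits an agent to keep executing unauthorized actions indefinitely, so the family of computations inevitably contains non-converging ones. The target is therefore a purely existential claim, one configuration at a time, and this is exactly what the notion of weak-stabilization is tailored to.

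First I would fix an arbitrary configuration $c$ and single out the computation $\xi$ from $c$ in which no agent ever deflects: at every step each selected agent executes one of its currently enabled rules --- an original self-stabilizing rule or one of the \emph{selfish rules} introduced in Step~i --- with the probability prescribed by the behavior strategy of Step~iii. Since by hypothesis agents only \emph{may} deflect and are never forced to, $\xi$ is a legitimate computation of the system, and along $\xi$ the system evolves exactly as the probabilistic self-stabilizing algorithm produced by the three-step construction, run under the randomized scheduler. Next I would argue that $\xi$ converges: Step~i guarantees that any guard of an unauthorized action that becomes true along $\xi$ (in particular, if $c$ itself was reached via a deflection and lies outside $L$) is matched by an enabled selfish rule, so $\xi$ is never blocked outside $L$ for that reason; and the standing assumption that no illegitimate configuration is a Nash equilibrium, together with \Cref{th:viloation}, ensures that in every illegitimate configuration some agent's behavior strategy places positive probability on a state-changing move, so no illegitimate configuration is absorbing for the randomized dynamics. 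Probabilistic self-stabilization under the randomized scheduler then yields that the no-deflection computations from $c$ reach $L$ with probability one; that set is therefore nonempty, and any member of it witnesses the weak-stabilization requirement at $c$. I would close by noting that $L$ is invariant once reached --- Step~ii makes every legitimate configuration perturbation-proof (\Cref{d3,t4}) and Step~iii sends the selfish-rule probabilities to zero there --- although this is not strictly needed for weak-stabilization. Since $c$ was arbitrary, the solution is weak-stabilizing.

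The main obstacle I anticipate is twofold. One has to reconcile the apparently conflicting hypotheses --- agents \emph{can} change their primary variables arbitrarily (\Cref{t9}) versus the ``no selfish variable change'' premise used by \Cref{th:viloation} --- by observing that restricting to the no-deflection computation $\xi$ exactly reinstates that premise \emph{along} $\xi$, while Step~i is what lets an already-deflected $c$ be brought back inside the regime of \Cref{th:viloation}. Second, and more delicate, one must verify that the three separately engineered properties compose into a \emph{single} probabilistic self-stabilizing algorithm with \emph{no} illegitimate configuration in which every agent's equilibrium behavior strategy assigns probability zero to every enabling move: this is precisely where the assumption ``no illegitimate configuration is a Nash equilibrium'' is spent, via \Cref{t6}, and it requires checking that the guards added in Step~i are exactly the complements of the situations a deflection can create and that adding those rules leaves the legitimate set and its closure untouched. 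Once that bookkeeping is in place, the rest follows from \Cref{th:viloation} and the definition of probabilistic self-stabilization.
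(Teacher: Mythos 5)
Your proposal is correct and follows essentially the same route as the paper's proof: you exhibit the computation in which the scheduler never fires a selfish rule (equivalently, no agent deflects), invoke violation-tolerance to conclude that this computation reaches a legitimate configuration, and invoke perturbation-proofness for closure once there. Your version merely spells out more explicitly the role of the behavior strategies, the no-illegitimate-Nash-equilibrium assumption, and the reconciliation with the premise of \Cref{th:viloation}, all of which the paper leaves implicit.
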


\begin{proof}
	Consider any unauthorized action as a new rule. Because the order of evaluations of the rules is arbitrary, we assume a sequence of computations where the distributed scheduler 
	\red{never select a selfish rule.}
	 Because the system is violation-tolerant, the agents will not permanently violate the rules and hence the system reaches a legitimate configuration (possible convergence).
	Then, since the system is perturbation-proof, no agent has incentive to execute an unauthorized action (closure).\end{proof}

Noting that any weak-stabilizing algorithm is self-stabilizing under a distributed randomized scheduler \cite{devismes2008weak}, the third step (i.e., randomization with behavior strategies) significantly reduces the convergence time of the algorithm because under the assumption that
no illegitimate configuration is a Nash equilibrium,
the probabilities of \red{selfish} rules are never one and their mean converges to zero as the number of enabled agents and thus the competition decreases.

\section{A Non-cooperative Self-Stabilizing Approach to Clustering}
\label{section:casestudy}

In this section, we illustrate earlier concepts through a case study: clustering in a DIS. We present algorithms that provably converge and exhibit closure despite the presence of selfish agents.

Clustering is an important technique for achieving scalability in randomly deployed DISs. We consider the case of a connected failure-prone network of autonomous 
energy-constrained 
agents. A self-stabilizing algorithm for clustering provides automatic recovery from transient faults caused by unexpected failures, a dynamic environment, or topological changes and needs no specific initialization.

In light of energy-constraints, a clustering solution should satisfy two properties. First, to allow efficient communication between each pair of agents, every agent should have at least one cluster-head in its neighborhood. Second, it is desirable to prevent cluster-heads from being neighbors in order to improve energy efficiency and network throughput \cite{moscibroda2004efficient}. These two properties lead to the concept of a maximal independent set (MIS) in graph theory.

We model a DIS as a graph $G = (V,E)$. An {\em independent set} $(IS) $ of $G$ is a subset of $V$ such that $\forall v,u\in IS: (v,u)\notin E$. Members of $IS$ are referred to as {\em cluster-heads}. $IS$ is an {\em MIS} if any agent $v \notin IS$ has a neighbor in $IS$.

We assume a message passing computational model with communication topology $G$. While an agent can only change its own state, it is able to directly read the states of agents within distance two from it (e.g, using full transmission power to send states and half of it to communicate data).


In \cite{shukla1995observations}, the authors proposed the first self-stabilizing algorithm to construct an MIS (\Cref{fig:basicmis}). This algorithm works with either a central scheduler or a distributed randomized scheduler.
\begin{algorithm}[ht]
	\begin{algorithmic}
		\Variables$\ \ state$: Binary
		\EndVariables
		\Predicates
		\State $\begin{aligned}[t]
		& pending(v) \equiv v.state=\mathrm{OUT} \wedge \forall w \in N(v):w.state=\mathrm{OUT}\\
		& conflict(v) \equiv v.state=\mathrm{IN} \wedge \exists w \in N(v):w.state=\mathrm{IN}
		\end{aligned}$
		\EndPredicates
		\Rules
		\State $\begin{aligned}[t]
		{R_1}:\ \ \ & pending(v) \longrightarrow v.state:=\mathrm{IN}\\
		{R_2}: \ \ \ & conflict(v) \longrightarrow v.state:=\mathrm{OUT} 
		\end{aligned}$
		\EndRules
	\end{algorithmic}
	\captionof{algocf}{Basic MIS ($b$MIS)}
	\label{fig:basicmis}
\end{algorithm}

The primary variable $state$ is a binary variable that specifies whether agent $v$ belongs to an independent set (IS) or not. If $v$ and its neighbors are not in an IS ($pending$), the guard of the first rule evaluates to true, and thereupon $v$ becomes a member of the IS. Furthermore, if $v$ and one of its neighbors both belong to an IS ($conflict$), the guard of the second rule evaluates to false and thereupon $v$ exits from the IS.


Here, the system property is \eqref{1_eq}, which evaluates to true if there is no $conflict$ or $pending$ in the system, 
\begin{equation}
\label{1_eq}
\small{
	\begin{aligned}
	\mathbb{P}\equiv\forall v:\big(\neg conflict(v) \wedge \neg pending(v)\big)
	\end{aligned}}.
\end{equation}

We assume an agent gains a reward $\vartheta>0$ from being part of a cluster
and incurs a cost $\zeta>0$ from being a cluster-head (communication and computation overhead), where $\zeta<\vartheta$ \cite{koltsidas_game_2011}. In this case, cluster members obtain a profit $\vartheta$ and cluster-heads a profit $\vartheta-\zeta$. We define the gain as
\begin{equation}
\label{2_eq}
g_v=\begin{cases}
0, &\textrm{if $\ pending(v)$,}\\ 
\vartheta,  &\textrm{if  $\ v.state=\mathrm{OUT} \wedge \neg pending(v)$,}\\ 
\vartheta-\zeta,  &\textrm{if  $\ v.state=\mathrm{IN}$.} \end{cases}
\end{equation}

\begin{theorem}\label{t11}
	Assume a system $Q$ with payoff function \eqref{2_eq} forms an MIS. Then, no non-MIS configuration of $Q$ is a Nash equilibrium. 
\end{theorem}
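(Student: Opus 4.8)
The plan is to argue directly from \Cref{df:nashs}: to show a non-MIS configuration $c$ is not a Nash equilibrium, I will exhibit one agent $v$ and one available action whose unilateral execution reaches a configuration $c'$ with $g_v(c') > g_v(c)$. The first step is to observe that a configuration is non-MIS exactly when the system property \eqref{1_eq} fails, i.e., when some agent $v$ satisfies $pending(v)$ or $conflict(v)$; in either case the corresponding rule ($R_1$ resp.\ $R_2$) is enabled at $v$, so its action lies in $A_v^c$. This reduces the theorem to a short two-case check, one case per predicate, where in each case I follow the agent's move for a single step and read off its gain from \eqref{2_eq}.

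In the case $pending(v)$ we have $g_v(c)=0$; letting $v$ execute $R_1$ (set $v.state:=\mathrm{IN}$) yields $c'$ in which all of $v$'s neighbours are still $\mathrm{OUT}$ (only $v$ moved), so $\neg conflict(v)$ holds and $g_v(c')=\vartheta-\zeta>0$. In the case $conflict(v)$ we have $v.state=\mathrm{IN}$ with some neighbour $w$ also $\mathrm{IN}$, hence $g_v(c)=\vartheta-\zeta$; letting $v$ execute $R_2$ (set $v.state:=\mathrm{OUT}$) yields $c'$ in which $w$ is still $\mathrm{IN}$, so $\neg pending(v)$ holds and $g_v(c')=\vartheta>\vartheta-\zeta$, using $\zeta>0$. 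In both cases the length-one sequence $(c,c')$ witnesses a strict gain increase for $v$, so $c$ fails the Nash-equilibrium condition.

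The only points requiring care are interpretational rather than computational: one must read \Cref{df:nashs} so that it suffices for the post-deviation sequence to terminate immediately after $v$'s move (existence of some improving $c'$, not improvement along every continuation), and one must note that $A_v^c$ contains the actions of $v$'s enabled rules so that the deviations used are genuinely available. The hypothesis that $Q$ forms an MIS is used only to identify legitimate configurations with MIS configurations, hence non-MIS with illegitimate, guaranteeing that a non-MIS $c$ has an enabled agent to launch the argument; beyond that, convergence of the algorithm is not needed. I do not expect a substantive obstacle — the result is essentially a one-line consequence of \eqref{2_eq} once the two predicates are examined separately.
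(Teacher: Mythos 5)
Your proof is correct and follows the same two-case decomposition as the paper (an agent with $pending$ enabled, or an agent in $conflict$), with Case~1 handled identically: the pending agent enters the IS and its gain jumps from $0$ to $\vartheta-\zeta>0$. Where you genuinely diverge is Case~2. The paper argues via the hypothesis that $Q$ forms an MIS: since the system eventually reaches an MIS, one of the two conflicting agents will not be a cluster-head at the end and therefore profits $\zeta$ by moving — a multi-step, convergence-dependent argument. You instead compute the gain one step after the unilateral execution of $R_2$: since the neighbour $w$ is still $\mathrm{IN}$, $\neg pending(v)$ holds in $c'$ and $g_v(c')=\vartheta>\vartheta-\zeta$. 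Your version is more elementary and more self-contained — it does not use the ``forms an MIS'' hypothesis at all in that case, and it pins down exactly which agent deviates and which single configuration witnesses the improvement, whereas the paper's phrasing leaves the deviating agent implicit. Both arguments are valid readings of \Cref{df:nashs}, which only requires the existence of \emph{some} post-deviation sequence $(c,\dots,c')$ ending at a higher-utility configuration; your explicit remark that a length-one continuation suffices is the right interpretational point, and the paper's own Case~1 confirms that reading.
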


\begin{proof}Assume there is a non-MIS configuration $c$ that is a Nash equilibrium. 
Because $c$ is not an MIS, there exists an agent $v$ that either is pending or has a conflict with a neighboring agent $w$. In the first case, $v$ can increase its profit from zero to $\vartheta-\zeta$ by making a move, and in the second case, because the system reaches an MIS, one of $v$ and $w$ will not be a cluster-head at the end and thus it will profit $\zeta$ if it makes a move. This contradict the assumption that $c$ is a Nash equilibrium.\end{proof}

The underlying assumption in previous works on self-stabilizing construction of an MIS \cite{guellati_survey_2010} is that the rules are followed by all agents. However, in the light of non-cooperative DISs, if agents are selfish, previously proposed algorithms will not function as expected.
Therefore, we need to redesign algorithms so that a selfish agent will conform to the modified rules.

\subsection{Towards a Violation-Tolerant Solution}
\label{subsection:vtmis}

Energy-constrained selfish agents may be reluctant to join an independent set (IS) and hence may hesitate to follow the rules. An example where 
two agents decide to violate a rule is illustrated in Figure \ref{fig:example2}.
Such behavior can be modeled as a non-cooperative game.

\begin{figure}[h]
\centering
\includegraphics[width=0.35\linewidth]{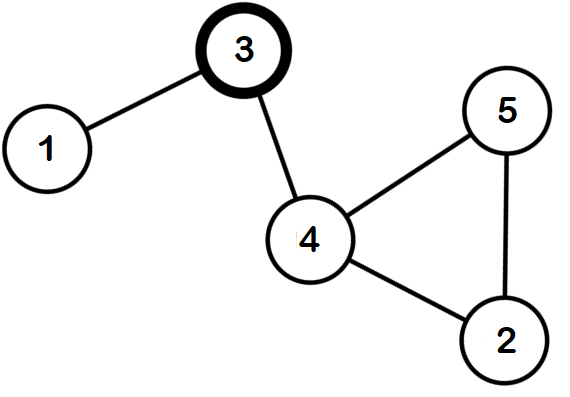}
\caption{Thicker nodes denote cluster-heads. Neither agent 2 nor 5 prefers being a cluster-head, so they may violate any rule that makes them enter the IS.}
\label{fig:example2}
\end{figure}

We model self-stabilizing clustering of selfish agents (SoS) as an extensive form game consisting of a sequence of stage games played in every round. In SoS, each player has at most two possible actions, Switch or Preserve, corresponding, respectively, to the options of changing the value of $state$ or maintaining it as is. Expected payoffs depend on the topology, configuration, algorithm, and gain function. 

We aim to design a probabilistic self-stabilizing algorithm that tolerates deviations. 
To do so, we consider $b$MIS and make its first rule probabilistic. Then, we transform SoS to a stochastic Bayesian game
and solve it to obtain the behavior strategies required to generate probabilistic rules. Doing this yields a violation tolerant algorithm (\Cref{fig:drmis}) that ensures a selfish agent enters the IS with probability $p>0$, i.e., the behavior strategy that corresponds to action Switch. Note that $p$ varies depending on both agent and configuration.
\begin{algorithm}[ht]
	\begin{algorithmic}
	\Variables
\State $state$: Binary
\State $p$: Real \tcp{$p\in(0,1]$}
\EndVariables
	\Predicates
	\State $\begin{aligned}[t]
	& pending(v) \equiv v.state=\mathrm{OUT} \wedge \forall w \in N(v):w.state=\mathrm{OUT}\\
	& conflict(v) \equiv v.state=\mathrm{IN} \wedge \exists w \in N(v):w.state=\mathrm{IN}
	\end{aligned}$
	\EndPredicates
	\Rules
	\State $\begin{aligned}[t]
	{R_1}:\ \ \ & pending(v) \xrightarrow{\text{\ \ $p$\ \ }} v.state:=\mathrm{IN}\\
	{R_2}: \ \ \ & conflict(v) \longrightarrow v.state:=\mathrm{OUT} 
	\end{aligned}$
	\EndRules
\end{algorithmic}
\captionof{algocf}{violation tolerant MIS ($vt$MIS)}
\label{fig:drmis}
\end{algorithm}

\begin{theorem}\label{t145}
 Under the assumption that the gain function is given by \eqref{2_eq}, $vt$MIS is probabilistically self-stabilizing under a distributed fair scheduler.
\end{theorem}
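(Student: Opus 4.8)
The plan is to reduce the statement to the already-established self-stabilization of $b$MIS (\Cref{fig:basicmis}). Recall that probabilistic self-stabilization \cite{herman1990probabilistic} asks for two things: \emph{closure} --- every legitimate configuration is stable --- and \emph{convergence} --- from any configuration the system reaches a legitimate configuration with probability one under the distributed fair scheduler. First I would dispose of closure, which is immediate: a configuration satisfying \eqref{1_eq} has no $pending$ and no $conflict$ agent, hence neither $R_1$ nor $R_2$ is enabled anywhere and no move is possible. The work is in almost-sure convergence.

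For convergence I would argue in three steps. \emph{(i)} Show that $p>0$ in every illegitimate configuration: this is where the hypothesis that the gain function is \eqref{2_eq} enters. By \Cref{t11} no non-MIS configuration is a Nash equilibrium, so in any illegitimate configuration a $pending$ agent strictly raises its discounted payoff by playing Switch rather than Preserve; hence the HBA/BNE behaviour strategy of \Cref{subsec:gamemodel} puts positive weight on the action of $R_1$, i.e.\ $p>0$ (consistent with the declared domain $p\in(0,1]$). In fact this makes the present statement an instance of \Cref{th:viloation} for this case study --- but I would give the direct argument below, since it exhibits the reduction explicitly and does not rely on the proof of \Cref{th:viloation}. \emph{(ii)} Reproduce serial $b$MIS runs inside $vt$MIS: since $vt$MIS and $b$MIS have identical guards, $R_2$ is the same rule in both, and $R_1$ of $vt$MIS performs the same assignment as $b$MIS but only when its bias-$p$ coin fires, any \emph{serial} $b$MIS computation $c=c_0\to c_1\to\cdots\to c_{k_c}$ (exactly one enabled agent moving per step) that terminates in an MIS is reproduced by $vt$MIS with strictly positive probability $q_c$: at step $i$ the scheduler activates precisely the singleton containing the agent that moves at step $i$ of the computation (that agent is enabled in $c_i$ because the guards coincide, and the randomized scheduler activates each non-empty subset of enabled agents with positive probability), and when that move is an $R_1$-move its coin fires, with probability $p>0$. \emph{(iii)} Take the min/max over the finite state space: the MIS property depends only on the binary $state$ variables, so there are just $2^{|V|}$ essential configurations; since $b$MIS is self-stabilizing under a central scheduler, each admits a converging serial $b$MIS computation of some finite length $k_c$, whence $T:=\max_c k_c<\infty$ and $q:=\min_c q_c>0$. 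Thus from any configuration $vt$MIS reaches an MIS within $T$ rounds with probability at least $q$; partitioning an arbitrary execution into consecutive length-$T$ blocks and conditioning successively (tower property), the probability of never reaching an MIS is at most $(1-q)^m$ for every positive integer $m$, hence zero. So $vt$MIS converges to a legitimate configuration almost surely, and with closure this proves the claim.

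The hard part will be the concurrency of the distributed scheduler: several mutually adjacent $pending$ agents can be activated in one round, and this is precisely why plain $b$MIS is not self-stabilizing under an adversarial non-randomized distributed scheduler (two neighbours enter the independent set at once, create a conflict, and the pattern can recur). I would handle it exactly as above --- not by controlling concurrency, but by observing that only a positive lower bound on the probability of \emph{some} favourable serialized run is needed, which the randomized scheduler (positive probability of singleton activations) and the strictly positive $p$ together deliver, with the global progress imported wholesale from $b$MIS. The one supporting point needing care is the uniform positivity used in step (iii): the coin probabilities appearing along a serialized run are values of $p$, which stay bounded below by a positive constant because, once beliefs stabilize (\eqref{eq:stabtprim}), $p$ depends only on the finitely many essential configurations, and the finitely-many-initial-rounds caveat before stabilization does not affect the almost-sure statement. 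Finally, it is worth stressing that the whole argument hinges on $p>0$ off the legitimate set, i.e.\ on \Cref{t11} and hence on the gain-function hypothesis \eqref{2_eq} --- were an illegitimate Nash equilibrium to exist, a $pending$ agent could rationally set $p=0$ there and the system would deadlock, contradicting convergence (cf.\ \Cref{t6}).
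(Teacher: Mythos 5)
Your overall architecture (closure is immediate; convergence via a positive-probability ``good event'' repeated over blocks of rounds) is sound, but there is one substantive gap relative to the stated hypothesis. Your step (ii) manufactures the good event by having the scheduler activate a \emph{singleton} at every step so as to replay a serial $b$MIS computation, and you justify this by appealing to the distributed \emph{randomized} scheduler, which selects each non-empty subset of enabled agents with positive probability. The theorem, however, is stated for a distributed \emph{fair} scheduler, a strictly larger class: the synchronous scheduler (which activates \emph{every} enabled agent in every round, and which the paper later uses to analyze $vt$MIS's complexity) is fair and distributed but never activates a singleton, so your serialization event has probability zero under it and the block argument collapses. The paper's proof avoids this by drawing the symmetry-breaking randomness from the rule rather than the scheduler: when several mutually adjacent agents have $R_1$ enabled, each has $p<1$, so even if the scheduler fires all of them simultaneously there is a positive probability $q$ that exactly one agent $u$ in $\{v\}\cup N(v)$ enters the IS while none of $u$'s neighbors does, after which $u$ and its neighbors are permanently disabled; iterating gives the geometric bound $(1-q)^k\to 0$ per neighborhood. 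Your proof can be repaired by replacing ``the scheduler picks a singleton'' with ``all activated coins in the neighborhood resolve so that exactly one agent switches,'' but that repair is essentially the paper's argument, and it additionally needs $p<1$ under competition (which you never use) rather than only $p>0$.

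Two smaller points. First, modulo the scheduler issue, your route is genuinely different from and in some ways cleaner than the paper's: you get a global almost-sure convergence statement by importing termination wholesale from $b$MIS under a central scheduler and taking a uniform minimum over the finite configuration space, whereas the paper argues locally, neighborhood by neighborhood, and is somewhat informal about how local stabilization events compose into global convergence (``similar consecutive executions occur at other agents''). Your explicit uniform bounds $T$ and $q$ and the tower-property block argument are a more rigorous way to finish, and your handling of the pre-stabilization rounds of the belief dynamics is a caveat the paper does not address at all. Second, your step (i) slightly overstates what the game model delivers: at a mixed Bayesian Nash equilibrium a pending agent is \emph{indifferent} between Switch and Preserve, so ``Switch strictly raises the payoff'' is not the right justification for $p>0$; the cleaner justification (and the one the paper implicitly uses) is simply that the algorithm constrains $p\in(0,1]$, with \Cref{t11} guaranteeing that no illegitimate configuration is an equilibrium at which all agents could rationally freeze.
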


\begin{proof}The state of every agent enabled by $R_2$ transforms to OUT and remains unchanged unless $R_1$ is enabled. If an agent $v$ is enabled by $R_1$, it enters the IS with probability $p>0$. 
Noting that $p$ is less than one if $R_1$ is enabled by $w \in N(v)$, there is a probability $q>0$ that among $v$ and its neighbors, during a round, one agent, e.g., $u \in \{v\} \cup N(v)$, enters the IS while no neighbor of $u$ (including $v$ if $u \neq v$) takes an action. After this case occurs, $u$ and its neighbors will no longer be enabled, i.e., the state of $v$ remains unchanged. Therefore, the probability that $v$ will not be enabled after $k$ rounds is $1-(1-q)^k$, which converges to one as $k\rightarrow \infty$, where $k$ is the number of completed rounds. 
Similar consecutive executions occur at other agents until an MIS is formed. \red{Strictly speaking the probability that an MIS will eventually be found converges to one (probabilistic convergence) and then there will be no enabled rule in the system (closure)}; therefore, $Q$ is a probabilistic self-stabilizing system.
\end{proof}

According to \eqref{2_eq}, the gain of each agent in each configuration depends on its state and the states of its neighbors. In each configuration, some agents may take action but the new states are not realized until the next round. In this regard, each agent needs to know the gain functions of its neighbors, which depend on their neighbors, to make the move that is best for it given the moves of its neighbors. This dependency chain results in a network game. We assume that the knowledge of any agent about the network graph is limited to its two-hop distance.
Consequently, each agent is playing with its one-hop and two-hop neighbors; however, it needs to estimate the payoff of agents about whom it has incomplete information, its two-hop neighbors, denoted by $\mathcal{L}_v^*(2)$. To do so, it creates beliefs about the types of its two-hop neighbors and updates these beliefs at the end of each round.

We define a joint type $\theta=\prod_{u \in \mathcal{L}_{v}^*(k)}\theta_u$ as a specification on the states of agents beyond the boundaries of neighborhood, where $\theta_u$ is a binary variable that indicates whether an agent $u \in \mathcal{L}_{\mathfrak{a}}^*(k)$ has a neighbor that is cluster-head or not.

In each round, given that communications is fast and reliable, under a distributed scheduler, a new observation resembles a $2$-local state that may differ from the previous one only in the variables of agents that were selected by the scheduler to execute actions in the previous round. 
In other words, the next $2$-local state is one of the possible $2$-local states in the path of transition from the current one to another one that is realized after taking all the given actions.



\red{We consider a simplified model of a distributed randomized scheduler.
Let $X_{v}^t$ be a Bernoulli variable that takes value 1 if the scheduler selects agent $v$ at round $t$ and 0 otherwise.
We assume $X_{v}^t$ for all $v\in V$ and $t\in \mathbb{N}^0$ follows the same distribution $$X_{v}^t \stackrel{\text { iid }}{\sim} \operatorname{Bernoulli}(p_S).$$}
Let $\nabla$ denote the set of primary variables that differ in values between $2$-local states $\lambda$ and $\lambda'$, and let $\partial$ be the number of all variables whose values \red{can potentially} change after joint action $\vec{a}$ in $\lambda$. We have
\begin{equation}
\label{eq:dis.transition}
\small{\mathcal{P}(\lambda'\mid \lambda,\vec{a})=
\begin{cases}
p_S^{|\nabla|}(1-p_S)^{\partial-|\nabla|} & \mbox{if $\forall \mathrm{var}\in\nabla:\lambda.\mathrm{var}\xrightarrow{\vec{a}}\lambda'.\mathrm{var}$}\\ 
0 &\mbox{otherwise } \end{cases}}
\end{equation}



\red{We model the behavior of selfish agents as a stochastic Bayesian game (\Cref{subsec:gamemodel}) with respect to a self-stabilizing clustering problem (i.e., SoS) where agents run $vt$MIS and are able to violate the rules.}

Assume agent $\mathfrak{a}\in V$ is pending.
Hence, $\mathfrak{a}$ decides to execute the first rule or violate it according to its behavior strategy profile given an equilibrium exists in the game. 

We define the type of agent $u \in \mathcal{L}_{\mathfrak{a}}^*(2)$ as
\begin{equation}
\label{typeeq}
\theta_u =\begin{cases}
\mathrm{IN}, &\textrm{if $\exists w \in N(u)/\mathcal{L}_{\mathfrak{a}}(1): w.state = \mathrm{IN}$,}\\ 
\mathrm{OUT}, &\textrm{if $\forall w \in N(u)/\mathcal{L}_{\mathfrak{a}}(1): w.state = \mathrm{OUT}$.} \end{cases}
\end{equation} 

Given $2$-local state $\lambda\in \{IN,OUT\}^{|\mathcal{L}_{\mathfrak{a}}(2)|}$ and joint type $\theta \in \{IN,OUT\}^{|\mathcal{L}_{\mathfrak{a}}^*(2)|}$, where $\Gamma=\{IN, OUT\}$, the set of available actions for agent $v\in V$ is
$$A_v{(\lambda,\theta)}=\begin{cases}
\{\textrm{Switch, Preserve}\} &\textrm{\small{if $pending(v)$ given $\lambda$ and $\theta$}}\\ 
\{\textrm{Switch}\} &\textrm{\small{if $conflict(v)$ given $\lambda$ and $\theta$}} \\ 
\{\textrm{Preserve}\}  &\textrm{otherwise} \end{cases}$$

\red{At the first round, agent $\mathfrak{a}$ estimates that the initial state of every agent is OUT with probability $p_0=|\{w\in\mathcal{L}_\mathfrak{a}(2)|w.state=\mathrm{OUT}\}|/|\mathcal{L}|$. Then, with respect to \eqref{typeeq}, for all $u\in \mathcal{L}_\mathfrak{a}^*(2)$, it approximates $\mu(\theta_u=\mathrm{OUT}\mid h^0)=p_0^{\max(0,\bar{d}-|N_{\cap}(\mathfrak{a},u)|)}$, where $\bar{d}=\lceil \sum_{w\in \mathcal{L}_\mathfrak{a}(2)} \mathrm{degree}(w)/|\mathcal{L}_\mathfrak{a}(2)|\rceil$ and $N_{\cap}(\mathfrak{a},u)=\{w\in \mathcal{L}_{\mathfrak{a}}(1)|u\in N(w)\}$.}
	
\red{We update the beliefs regarding the next stage only if $$\sum_{\theta_u\in\Theta_u}\Pr(\theta_u|\theta_u,h^{t+1})\mu(\theta_u|h^{t+1})<0.5;$$otherwise, we assume the neighbors of $u$ are stabilized.} \red{To approximate $\Pr(\theta_u^{t+1}\mid \theta_u^{t},h^{t})$, we do as follows. Let function $\operatorname{state}(v,t)$ returns the state of agent $v$ at a given round $t$. We define the type of neighborhood of an agent $v$ at rounds $t$ as 
$$\footnotesize{\operatorname{type}(v,\lambda)=
\begin{cases}
\mathrm{IN}, &\theta^t=IN \vee \exists w\in N(a):v\in N(w) \wedge\operatorname{state}(w,\lambda)=\mathrm{IN},\\
\mathrm{OUT}, &\text{otherwise}.
\end{cases}}$$
From the history of the game, we only consider the state of $u$ at rounds $t$ and $t+1$, denoted by $\operatorname{st}_0$ and $\operatorname{st}_1$, respectively, and the type of its neighborhood at rounds $t$ and $t+1$, denoted by $\operatorname{tp}_0$ and $\operatorname{tp}_1$. Next, we consider a $\bar{d}$-regular graph $G'(V',E')$, where $|V'|=|\mathcal{L}_v(2)|$, as a network of agents. We assume the initial states of agents comes from Bernoulli distribution $\Pr(\mathrm{IN})=1-p_0$. Assume $c^0$ as an initial configuration. We model the stabilization problem using a stochastic game of complete information where agents run $vt$MIS, know each other's state, and take actions analogous to subgame-perfect equilibrium. Solving the game, we have the probability distribution of next actions for every agent $v$ as $\Pr(\vec{a}|c^0)$. Then, using those probability distributions, we compute the distribution of the next configuration as $\Pr(c^1|c^0)=\Pr(\vec{a}|c^0)\Pr(c^1|c^0,\vec{a})$, where $\Pr(c^1|c^0,\vec{a})$ depends to the distributed scheduler and is computed according to \eqref{eq:dis.transition}. 
Let function $\operatorname{st}'(v,c)$ returns the state of agent $v\in V'$ in configuration $c$. We define function
\[\operatorname{tp}'(v,c)=
\begin{cases}
\mathrm{IN}, &\exists w\in N(v):\operatorname{st}'(w,c)=\mathrm{IN},\\
\mathrm{OUT}, &\forall w\in N(v):\operatorname{st}'(w,c)=\mathrm{OUT},
\end{cases}
\]
which returns the exact type of neighborhood of agent $v\in V'$ in configuration $c$ for graph $G'$. Then, we have
$$
\footnotesize{\begin{aligned}
&\Pr(\theta_u^{t+1}\mid \theta_u^{t},h^{t})=\\&\frac{\sum\limits_{c_0,c_1\in C}\sum\limits_{v\in V'}\Pr(c_0)\Pr(c_1|c_0)\mathbf{1}_{\left[\operatorname{tp}'(v,c_0)=\operatorname{tp}_0 \wedge \operatorname{tp}'(v,c_1)=\operatorname{tp}_1 \wedge \operatorname{st}'(v,c_0)=\mathrm{st}_0 \wedge \operatorname{st}'(v,c_1)=\mathrm{st}_1\right]}}{\sum\limits_{c_0,c_1\in C}\sum\limits_{v\in V'}\Pr(c_0)\Pr(c_1|c_0)\mathbf{1}_{\left[\operatorname{tp}'(v,c_0)=\operatorname{tp}_0 \wedge \operatorname{st}'(v,c_0)=\mathrm{st}_0 \wedge \operatorname{st}'(v,c_1)=\mathrm{st}_1\right]}},
\end{aligned}}
$$
where $\mathbf{1}$ is the indicator function.}



\subsection{Towards a Perturbation-Proof Solution}

According to \Cref{d3}, neither $b$MIS nor $vt$MIS is perturbation-proof with respect to \eqref{2_eq} because a selfish agent can profit from perturbing a legitimate configuration. Figure \ref{fig:example1} shows an example where a selfish agent may benefit from perturbing a legitimate configuration.

\begin{figure}[h]
	\centering
	\subfloat[configuraion I]{\includegraphics[width=0.35\linewidth]{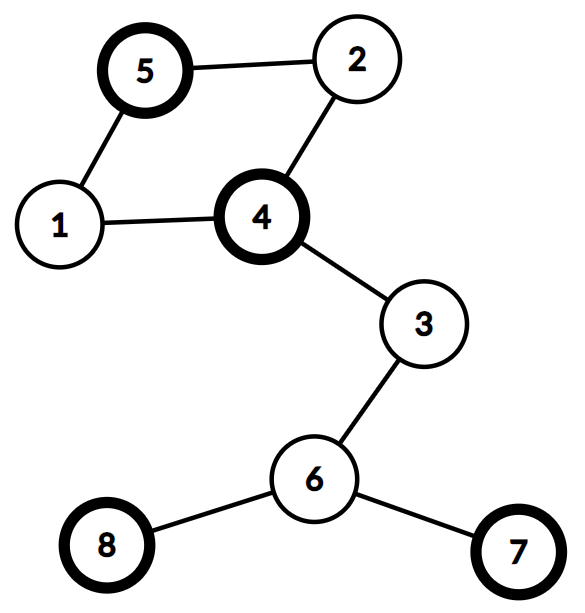}
	}
	\subfloat[configuraion II]{\includegraphics[width=0.35\linewidth]{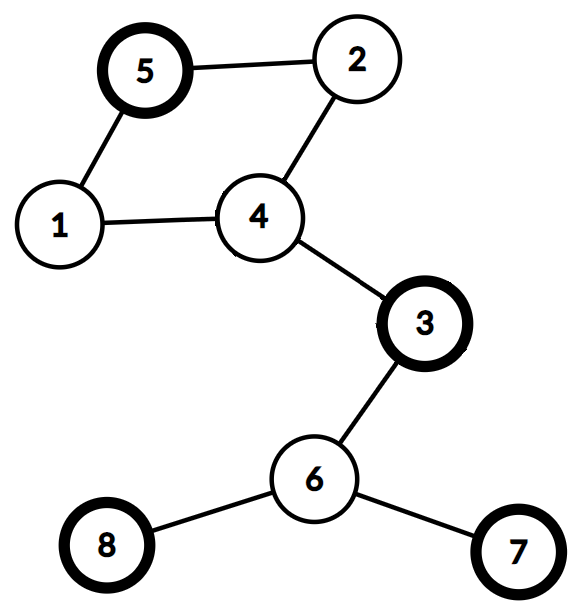}}
	\caption{Given that agent 4 perturbs the legitimate configuration I, it may benefit if the system converges to the legitimate configuration II.}
	\label{fig:example1}
\end{figure}

We propose $pf$MIS (Algorithm \ref{fig:pfmis}) as a perturbation-proof solution for MIS ($pf$MIS) to handle any IN-to-OUT 1-fault . 


The first two rules of $pf$MIS are similar to their equivalents in $bMIS$ with the introduction of the predicate $hesitate$ to the first rule. 
The secondary variable $parent$
contains the identity of the
neighboring cluster-head of agent $v$; in case the agent has two or more neighboring cluster-heads, it takes value $\perp$. Rules $R_3$-$R_5$ are responsible for setting the value of $parent$. \red{Note that the symbol "$\exists!$" in $R_3$ denotes the unique existential quantification.}

The predicate $hesitate$ allows a pending agent $v$ to determine whether any of its neighbors has incurred a 1-fault: if $v.parent$ corresponds to one of the neighbors of $v$, such as $w$, $u.parent$ in each neighbor $u$ of $v$ except $w$ (i.e., $u\in N(v)/\{w\}$) is not $v$, and the states of all neighbors of $w$ are OUT and $parent$ in all of them is $w$ or $\perp$, $v$ concludes that an IN-to-OUT 1-fault  has occurred in $w$ except in the special case where $w.parent=v$ and $z.parent$ in each neighbor $z$ of $w$ except $v$ (i.e., $z\in N(w)/\{v\}$) is $\perp$. In that case, because the fault may have occurred in either $v$ or $w$, the agent with the smaller $id$ has to enter the IS. In this paper, to keep the contamination depth to zero, we assume that an agent can assign another agent to be its $parent$ only if the other agent is a cluster-head, so that the special case, where $id$s are compared, does not happen.

\begin{algorithm}[ht]
	\begin{algorithmic}
		\Constants
		\State $id$: Integer
		\EndConstants
		\Variables
		\State $state$: Binary
		\State $parent$: Agent 
		\EndVariables
		\Predicates
		\State $\begin{aligned}[t]
		& pending(v) \equiv v.state=\mathrm{OUT} \wedge \forall w \in N(v):w.state=\mathrm{OUT}\\
		& conflict(v) \equiv v.state=\mathrm{IN} \wedge \exists w \in N(v):w.state=\mathrm{IN}
		\end{aligned}$
		\State $\begin{aligned}[t]
		hesi&tate(v) \equiv v.parent \not = \perp \wedge \exists w \in N(v): \big[v.parent=w \\&\wedge\forall z\in N(v)\setminus\{w\}:z.parent\not =v \wedge \forall u \in N(w) \setminus \{v\}:\\&\big(u.state=\mathrm{OUT} \wedge (u.parent = w \vee u.parent = \perp)\big) \\&\wedge
		(w.parent=\perp \vee w.parent \not= v \vee w.id<v.id \\&\vee \exists y\in N(w) \setminus \{v\}:y.parent=w)\big]
		\end{aligned}$
		\EndPredicates
		\Rules
		\State $\begin{aligned}[t]
		{R_1}:\ \ \ & pending(v)\wedge \neg hesitate(v) \longrightarrow v.state:=\mathrm{IN}\\
		{R_2}:\ \ \  & conflict(v) \longrightarrow v.state:=\mathrm{OUT}\\
		{R_3}:\ \ \  &v.state=\mathrm{OUT} \wedge \not\exists u \in N(v):pending(u) \wedge \\&\ \ \ \exists! w \in N(v): (w.state = \mathrm{IN} \wedge v.parent \not = w)\\&\ \ \ \longrightarrow v.parent:= w\\
		{R_4}:\ \ \  &v.state=\mathrm{OUT} \wedge \exists w,z \in N(v): (w \not =z \wedge w.state = \mathrm{IN} \\& \ \ \ \wedge z.state = \mathrm{IN}) \wedge v.parent \not = \perp \longrightarrow v.parent:= \perp\\
		{R_5}:\ \ \  &v.state=\mathrm{IN} \wedge \neg conflict(v)\wedge v.parent \not = \perp \\& \ \ \ \longrightarrow v.parent:= \perp
		\end{aligned}$
		\EndRules
	\end{algorithmic}
	\captionof{algocf}{perturbation-proof MIS ($pf$MIS)}
	\label{fig:pfmis}
\end{algorithm}

\red{We begin by proving that no pending agent deadlocks when $hesitate(\cdot)$ is true.}

\begin{lemma}\label{le00} Suppose both $pending(v)$ and $hesitate(v)$ are true for an agent $v$ in a system running $pf$MIS. Then, $v$ has at least one neighbor that has $R_1$ enabled.\end{lemma}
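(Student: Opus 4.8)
The plan is to prove something slightly stronger than what is asked: the very neighbour $w$ that witnesses $hesitate(v)$ is itself enabled by $R_1$. So I would fix $w\in N(v)$ for which the bracketed clause in the definition of $hesitate(v)$ holds; this hands us four facts: (a) $v.parent=w$; (b) $z.parent\neq v$ for every $z\in N(v)\setminus\{w\}$; (c) $u.state=\mathrm{OUT}$ and $u.parent\in\{w,\perp\}$ for every $u\in N(w)\setminus\{v\}$; and (d) the tie-break disjunction $w.parent=\perp\vee w.parent\neq v\vee w.id<v.id\vee(\exists y\in N(w)\setminus\{v\}:y.parent=w)$. The goal is then to establish $pending(w)\wedge\neg hesitate(w)$, which is exactly the guard of $R_1$ at $w$.

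Establishing $pending(w)$ is routine: from $pending(v)$ we get $v.state=\mathrm{OUT}$ and, since $w\in N(v)$, also $w.state=\mathrm{OUT}$; fact (c) gives $u.state=\mathrm{OUT}$ for all $u\in N(w)\setminus\{v\}$; as $N(w)=\{v\}\cup(N(w)\setminus\{v\})$, every neighbour of $w$ is OUT, so $pending(w)$ holds.

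The real work is $\neg hesitate(w)$, which I would handle by cases on $w.parent$. If $w.parent=\perp$, the leading conjunct of $hesitate(w)$ fails and we are done. Otherwise any witness $w'$ of $hesitate(w)$ satisfies $w.parent=w'$, and I split on whether $w'=v$. If $w'\neq v$ then $v\in N(w)\setminus\{w'\}$, so the clause $\forall z\in N(w)\setminus\{w'\}:z.parent\neq w$ inside $hesitate(w)$ forces $v.parent\neq w$, contradicting (a). The genuinely delicate case is $w'=v$, i.e.\ $w.parent=v$: here $hesitate(w)$ forces both $\forall z\in N(w)\setminus\{v\}:z.parent\neq w$ and the disjunction $v.parent=\perp\vee v.parent\neq w\vee v.id<w.id\vee(\exists y\in N(v)\setminus\{w\}:y.parent=v)$. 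Using (a) ($v.parent=w$) and (b) (no $z\in N(v)\setminus\{w\}$ has $z.parent=v$), this disjunction collapses to $v.id<w.id$. Symmetrically, feeding $w.parent=v$ together with $\forall z\in N(w)\setminus\{v\}:z.parent\neq w$ into disjunction (d) of $hesitate(v)$ collapses it to $w.id<v.id$. Since ids are distinct, $v.id<w.id$ and $w.id<v.id$ cannot both hold, a contradiction; hence $hesitate(w)$ is false in every case, and $R_1$ is enabled at $w$.

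The main obstacle is exactly this mutual-parent configuration $v.parent=w$, $w.parent=v$: one must unwind the quantified subclauses of both copies of $hesitate(\cdot)$ and show they jointly demand an impossible ordering of the two ids. (The paper's design assumption — that a node adopts a neighbour as its $parent$ only when that neighbour is a cluster-head — removes this case from reachable executions and gives a shorter route, but the id argument above makes the lemma hold from an arbitrary configuration, which is the setting relevant to self-stabilization.)
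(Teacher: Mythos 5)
Your proof is correct and follows essentially the same route as the paper's: take the witness $w$ of $hesitate(v)$, show $pending(w)$, and refute $hesitate(w)$ by forcing $w.parent=v$ and then deriving contradictory id orderings from the two tie-break disjunctions. Your write-up is somewhat more explicit (e.g., the $w.parent=\perp$ case and the symmetric collapse of both tie-breaks), but the substance matches the paper, which likewise does not rely on the ``parent must be a cluster-head'' design assumption here.
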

\begin{proof}Because $hesitate(v)$ is true, there exists $w \in N(v)$ such that $v.parent=w$ and $pending(w)$ is true. We prove that $hesitate(w)$ is false by contradiction and thus $R_1$ is enabled for $w$. Suppose $hesitate(w)$ is true. Then, $u.parent$ in every neighbor $u$ of $w$ except $w.parent$ cannot be $w$, i.e., $hesitate(w) \Rightarrow \forall u\in N(w)/\{w.parent\}: u.parent \not = w$. Because $v.parent=w$, we conclude $w.parent=v$. In this case, $hesitate(w)$ is true only if $w.id>v.id$ or $v$ has a neighbor $z$ except $w$ that $z.parent=v$. In both conditions, $hesitate(v)$ is false which contradicts the assumption. \end{proof}

\begin{lemma}\label{le0}If no agent is enabled in a system running $pf$MIS, the configuration is an MIS.\end{lemma}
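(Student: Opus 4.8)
The plan is to reduce the statement to the familiar characterization of a maximal independent set through the predicates $pending$ and $conflict$, exactly as for $b$MIS. First I would observe that whether a configuration forms an MIS depends only on the $state$ variables of the agents, and that among the five rules of $pf$MIS only $R_1$ and $R_2$ ever write $state$; rules $R_3$--$R_5$ merely maintain the secondary variable $parent$. Hence it suffices to show that, when no agent is enabled, (i) no agent $v$ satisfies $conflict(v)$, and (ii) no agent $v$ satisfies $pending(v)$.

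Part (i) is immediate: if $conflict(v)$ held for some $v$, then the guard of $R_2$ would be satisfied and $R_2$ enabled at $v$, contradicting the hypothesis. For part (ii), suppose toward a contradiction that $pending(v)$ holds for some agent $v$. Either $hesitate(v)$ is false, in which case the guard $pending(v)\wedge\neg hesitate(v)$ of $R_1$ is satisfied and $R_1$ is enabled at $v$; or $hesitate(v)$ is true, in which case \Cref{le00} supplies a neighbor $w$ of $v$ for which $R_1$ is enabled. In either case some agent is enabled, a contradiction; so no agent is pending.

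Finally I would assemble the conclusion. From (i), the set $S=\{v\in V: v.state=\mathrm{IN}\}$ contains no two adjacent agents, so $S$ is an independent set. From (ii), every agent $v$ with $v.state=\mathrm{OUT}$ fails $pending(v)$, and since $v$ itself is OUT this forces $\exists w\in N(v): w.state=\mathrm{IN}$; that is, every agent not in $S$ has a neighbor in $S$ (in particular $S\neq\emptyset$, since $G$ is connected and nonempty). By the definition of MIS, $S$ is an MIS, i.e., the configuration is an MIS.

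I do not expect a genuine obstacle here: the content of the lemma is essentially the same equivalence between disabledness and the MIS property that already holds for $b$MIS, and the one new wrinkle — that a pending agent might be blocked on $R_1$ because $hesitate(\cdot)$ is true — has already been dispatched by \Cref{le00}. The only point that needs a moment's care is noticing that the extra rules $R_3$--$R_5$ are irrelevant to the argument, because they never change a $state$ value, so the fact that they are disabled is neither used nor needed.
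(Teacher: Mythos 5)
Your proof is correct and follows essentially the same route as the paper's: rule $R_2$ rules out conflicts, and \Cref{le00} handles the case of a pending agent blocked by $hesitate$, yielding the MIS property. The only cosmetic difference is that you argue directly (no enabled rule $\Rightarrow$ no conflict and no pending) while the paper argues by contradiction from a non-MIS configuration; the substance is identical.
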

\begin{proof}Suppose that the system is stabilized and its configuration denoted by $c$ is not an MIS. This leads to two cases. In the first case, $c$ is not an IS (independent set). So, there exists at least two neighbors $v$ and $w$ that are IN. In this case, $R_2$ is enabled at either $v$ or $w$, contradicting our assumption. In the second case, $c$ is an IS but is not maximal. In this case, there must be at least one OUT agent $v$ that has no IN neighbor, which means $pending(v)$ is enabled. In this case, $R_1$ is not enabled for $v$ only if $hesitate(v)$ is true. Therefore, according to \Cref{le00}, $v$ has at least one neighbor $w$ that is enabled by $R_1$, which is a contradiction.\end{proof}

\begin{lemma}\label{le1}(Convergence) Starting with any configuration, a system running $pf$MIS eventually reaches an MIS under a distributed randomized scheduler.\end{lemma}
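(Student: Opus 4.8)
The plan is to reduce convergence under the distributed randomized scheduler to the two structural facts already proved --- \Cref{le00} and \Cref{le0} --- via the usual ``weak stabilization plus a uniformly positive transition probability over a finite state space'' argument. Concretely, I will first show that from an arbitrary configuration a central scheduler can reach, in boundedly many moves, a configuration in which no rule is enabled (hence an MIS, by \Cref{le0}), and then observe that a distributed randomized scheduler mimics such a finite computation with uniformly positive probability and therefore reaches an MIS almost surely. A preliminary remark that makes the mimicking step clean: at each agent at most one of $R_1,\dots,R_5$ is enabled, since $pending(v)$ excludes $R_3,R_4$ (which need an $\mathrm{IN}$ neighbour) at an $\mathrm{OUT}$ agent while $R_3$ and $R_4$ are separated by whether the agent has one or at least two $\mathrm{IN}$ neighbours, and $conflict(v)$ separates $R_2$ from $R_5$ at an $\mathrm{IN}$ agent; so selecting a singleton determines the move.

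The central computation proceeds in three phases. \emph{(a) Conflict removal.} While some agent has a conflict, fire $R_2$ there: this strictly decreases the number of edges with both endpoints $\mathrm{IN}$ and creates no new conflict, so after at most $|E|$ moves the configuration is an independent set. \emph{(b) Growth to maximality.} While the independent set is not maximal it contains a pending agent $v$; if $v$ is not hesitating it has $R_1$ enabled, and if it is hesitating then by \Cref{le00} one of its neighbours does. Firing $R_1$ at such an agent turns a pending agent $\mathrm{IN}$, which, its neighbours all having been $\mathrm{OUT}$, creates no conflict and strictly decreases the number of pending agents; so after at most $|V|$ such moves the $state$-vector is an MIS, and from then on $R_1$ and $R_2$ are permanently disabled, i.e.\ the states are frozen. \emph{(c) Pointer cleanup.} With the states frozen as an MIS, each agent fires at most one of $R_3,R_4,R_5$ to set $parent$ to its correct value ($w$ if it has a unique $\mathrm{IN}$ neighbour $w$, and $\perp$ otherwise), and this re-enables nothing: the guards of $R_3$--$R_5$ at other agents depend only on the frozen states and that agent's own $parent$, and $R_1$ stays disabled because no agent is pending. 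After at most $|V|$ further moves no rule is enabled. Altogether the computation has length at most $\ell^\star := |E| + 2|V|$, uniformly in the initial configuration.

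For the lift, note that a distributed randomized scheduler draws a uniformly random non-empty subset of the $m \le |V|$ enabled agents, hence selects any prescribed enabled singleton with probability at least $1/(2^{|V|}-1) \ge 2^{-|V|}$. Consequently, from any configuration the probability of following a prescribed central computation to a rule-free configuration within $\ell^\star$ rounds is at least $p^\star := 2^{-|V|\ell^\star} > 0$. A rule-free configuration is absorbing (the scheduler then selects nothing), and any $state$-MIS configuration keeps its states frozen; so, splitting time into blocks of $\ell^\star$ rounds, the probability that an MIS has not been reached after $j$ blocks is at most $(1-p^\star)^j \to 0$. Hence the system reaches an MIS in finitely many rounds with probability one. (Equivalently, phases (a)--(c) show $pf$MIS is weak-stabilizing \cite{gouda2001theory}, and every weak-stabilizing algorithm self-stabilizes under a distributed randomized scheduler \cite{devismes2008weak}.)

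The main obstacle I anticipate is phase (b): ruling out a deadlock in which a non-maximal independent set has only hesitating pending agents --- which is exactly what \Cref{le00} forbids --- and verifying that firing $R_1$ at a pending agent neither introduces a conflict nor disturbs an already-corrected $parent$ pointer, so that the three phases compose without interfering with one another; the lift in the last paragraph, although it needs the finiteness of $C$ and the singleton-probability bound, is then routine.
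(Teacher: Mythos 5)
Your proof is correct, and it reaches the conclusion by a genuinely different route than the paper. The paper's proof is a local, per-agent case analysis: it walks through the possible situations of a single agent (IN without conflict, IN with conflict, pending and hesitating, pending and not hesitating, OUT with an IN neighbour), invokes \Cref{le00} to rule out deadlock among hesitating pending agents, and then appeals to the fact that under the randomized scheduler there is positive probability that, within a neighbourhood, only one pending agent fires $R_1$ and becomes permanently IN; it concludes via \Cref{le0} once no agent is enabled. You instead make the weak-stabilization structure explicit: you construct a bounded-length central computation in three phases (conflict removal by $R_2$, growth to maximality by $R_1$ using \Cref{le00} to avoid deadlock, and pointer cleanup by $R_3$--$R_5$), verify the phases do not interfere (in particular that $R_1$ stays disabled during cleanup because nothing is pending, even though $hesitate$ reads $parent$ variables), and then lift to the distributed randomized scheduler with the standard uniform lower bound $2^{-|V|}$ on selecting a prescribed singleton, giving almost-sure convergence by a block argument --- essentially the reduction of \cite{devismes2008weak} that the paper itself invokes only later, for \Cref{t9}. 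Both arguments rest on the same two structural lemmas; yours buys an explicit quantitative witness ($\ell^\star = |E| + 2|V|$ moves and a per-block success probability) and a cleaner separation between the combinatorial and probabilistic parts, at the cost of being longer, while the paper's buys brevity at the cost of leaving the almost-sure convergence step informal.
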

\begin{proof}In an initial configuration, the state of an agent $v$ can be IN or OUT. In the case of IN, if $v$ has no IN neighbor (i.e., no conflict situation), $v$ and its neighbors permanently maintain their states; otherwise, $v$ may execute $R_2$ and become OUT. In the case of OUT, if $v$ has no IN neighbor and $hesitate(v)$ is true, it waits for one of its neighbors to become IN (\Cref{le00}) after which it cannot execute any rule, or if $hesitate(v)$ is false, it executes $R_1$ and its state becomes IN. Since the scheduler is randomized, there is a probability $p>0$ that among the agents in the neighborhood of $v$, only $v$ executes $R_1$ and permanently becomes IN. Finally, if $v$ has an IN neighbor $w$, it maintains its OUT state as long as $w$ is IN. When no agent is enabled, an MIS is created and remains so according to \Cref{le0}.\end{proof}

\begin{lemma}\label{le2}(Closure) Given that a system running $pf$MIS is in a legitimate configuration, it will not leave it, provided no fault occurs.\end{lemma}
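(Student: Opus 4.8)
The plan is to reduce closure to a single invariant: starting from a legitimate configuration, the primary variables --- the $state$ fields --- never change as long as no fault occurs. This immediately gives closure, since the defining predicate $\mathbb{P}$ of \eqref{1_eq} is a predicate on the $state$ fields only, so it will continue to hold. First I would recall, via \Cref{le0}, that a legitimate configuration of $pf$MIS is an MIS: no agent is $pending$ and no agent is in $conflict$.

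Next I would show by induction on the computation that $R_1$ and $R_2$ are never enabled. The guard of $R_2$ is $conflict(v)$ and the guard of $R_1$ contains $pending(v)$, both of which depend only on the $state$ fields; in an MIS every OUT agent has an IN neighbor (hence $\neg pending$) and no two IN agents are adjacent (hence $\neg conflict$), so neither guard holds while the $state$ fields are as in the current MIS. Therefore the only rules that can ever fire are $R_3$, $R_4$, $R_5$, each of which assigns only to the secondary variable $parent$ and leaves every $state$ field untouched. Hence the $state$ fields are invariant, $pending$ and $conflict$ stay false everywhere, and $R_1$, $R_2$ remain disabled --- closing the induction. Since $\mathbb{P}$ reads only $state$ fields, $\mathbb{P}$ holds in every subsequent configuration, i.e.\ the system stays in $L$. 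I would also add a one-line remark that the $parent$ pointers themselves settle (each of $R_3$, $R_4$, $R_5$ can fire at most once per agent once the agent's $state$ and its number of IN neighbors are fixed), so the system in fact reaches a configuration with no enabled rule and remains there.

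The only point requiring care --- and the step I expect to be the main obstacle --- is the interaction between the $parent$-maintenance rules $R_3$--$R_5$ and the predicate $hesitate$, which reads $parent$ fields of neighbors: one must rule out that rewriting $parent$ values flips $hesitate(v)$ from true to false and thereby newly enables $R_1$ at some $v$. This is exactly handled by the observation above that $R_1$'s guard also contains $pending(v)$, which is already false in an MIS and cannot become true without a $state$ change, and no reachable action ever changes a $state$. I would make this dependency explicit in the write-up rather than leave it implicit, since it is the whole reason the $parent$ bookkeeping does not compromise closure.
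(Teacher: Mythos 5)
Your proof is correct and follows essentially the same route as the paper's: only $R_1$ and $R_2$ write to $state$, their guards require $pending$ or $conflict$ which are false in an MIS, and $R_3$--$R_5$ only touch $parent$ and fire at most once. Your explicit treatment of the $hesitate$/$parent$ interaction is a welcome clarification the paper leaves implicit, but it does not change the argument.
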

\begin{proof} We prove the closure property by contradiction. Suppose the system is in a legitimate configuration for which the closure condition does hold. This means at least one rule that changes the primary variable $state$ is enabled. Suppose the enabled rule is $R_1$ or $R_2$. Then, the configuration is not an MIS which contradicts the assumption that the system is in a legitimate configuration. Note that once an MIS is created, Rules 3-5 are enabled at most once after which no rule is enabled.\end{proof}

\begin{theorem}\label{t13}
	$pf$MIS is self-stabilizing under a distributed randomized scheduler.
\end{theorem}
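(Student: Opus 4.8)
The plan is to read off Theorem~\ref{t13} directly from \Cref{d1} by checking its two clauses against the lemmas already established for $pf$MIS. For the clustering case study the system property $\mathbb{P}$ of \eqref{1_eq} declares a configuration legitimate exactly when every agent satisfies $\neg pending(v) \wedge \neg conflict(v)$, i.e., when $\{v : v.state=\mathrm{IN}\}$ is an MIS of $G$. So the first step is simply to fix this identification: the set $L$ of legitimate configurations is the set of MIS configurations, a set defined on the primary variable $state$ alone (the secondary variable $parent$ plays no role in $\mathbb{P}$).

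Next, for the convergence clause (a) I would invoke \Cref{le1}: from an arbitrary initial configuration the system reaches an MIS --- hence a configuration in $L$ --- under a distributed randomized scheduler. Because the argument of \Cref{le1} relies on the positive probability that, among a pending agent and its neighbors, exactly one executes $R_1$ in a given round, this convergence is in the probabilistic sense (equivalently, under Gouda's strong fairness, every such opportunity is eventually realized), which is the appropriate reading of ``finite number of actions'' here. For the closure clause (b) I would invoke \Cref{le2}: once in a configuration of $L$, no rule that rewrites a $state$ variable is enabled, so $\mathbb{P}$ is preserved; the only rules that may still fire are $R_3$--$R_5$, each at most once more, and these touch only $parent$, leaving the MIS --- and therefore $\mathbb{P}$ --- intact throughout. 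Together with \Cref{le0} (which confirms that a configuration with no enabled rule is indeed an MIS, so the two notions do not drift apart), clauses (a) and (b) of \Cref{d1} are both met.

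The proof is therefore short, and the care needed is conceptual rather than computational: one must be explicit that ``legitimate'' refers to the $state$-variables forming an MIS, so that the post-stabilization clean-up of the $parent$ pointers by $R_3$--$R_5$ does not count as a closure violation, and one must phrase convergence in the probabilistic form that a randomized scheduler warrants. I do not expect a genuine obstacle here; the substantive work is already carried by \Cref{le1} and \Cref{le2}, and assembling them with \Cref{d1} should take no more than a few sentences.
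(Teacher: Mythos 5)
Your proposal is correct and follows exactly the paper's route: the paper's own proof of \Cref{t13} is a one-line appeal to \Cref{le1} (convergence) and \Cref{le2} (closure), which is precisely the assembly you describe. The extra care you take about the $parent$ clean-up via $R_3$--$R_5$ and the probabilistic reading of convergence is consistent with the paper's lemma statements and does not change the argument.
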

\begin{proof}According to \Cref{le1,le2}, $pf$MIS exhibits both closure and convergence. It is thus a self-stabilizing algorithm.\end{proof}

\begin{lemma}\label{le4}(Fault-containment) If an IN-to-OUT 1-fault  occurs in a system running $pf$MIS, only the faulty agent changes its state during convergence.\end{lemma}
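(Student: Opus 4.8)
The plan is to show that recovery from an IN-to-OUT 1-fault terminates in a single move, performed by the faulty agent itself, so no other agent ever changes its $state$. Fix the legitimate configuration $c\in L$ just before the fault (an MIS with consistent $parent$ pointers), let $v$ be the faulty agent (so $v.state=\mathrm{IN}$ in $c$), and let $c'$ be obtained from $c$ by overwriting $v.state$ with $\mathrm{OUT}$. First I would record the structural facts about $c$, none of which a rewrite of $v.state$ disturbs: since no rule is enabled in $c$, we have $v.parent=\perp$ (else $R_5$ is enabled at $v$) and $w.state=\mathrm{OUT}$ for every $w\in N(v)$ (independence); moreover every $u\in N(v)$ satisfies $u.parent\in\{v,\perp\}$, because $v$ is a cluster-head neighbor of $u$, so disabledness of $R_3$ forces $u.parent=v$ when $v$ is $u$'s only $\mathrm{IN}$ neighbor, and disabledness of $R_4$ forces $u.parent=\perp$ when $u$ has a second $\mathrm{IN}$ neighbor.

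Next I would prove that $v$ is the \emph{unique} enabled agent in $c'$. For $v$ itself, $pending(v)$ holds and $hesitate(v)$ is false because its first conjunct $v.parent\neq\perp$ fails, so $R_1$ is enabled while $R_2$--$R_5$ are plainly disabled. For a neighbor $u\in N(v)$ there are two cases. If $u.parent=\perp$ in $c$, then $u$ keeps an $\mathrm{IN}$ neighbor in $c'$, so $pending(u)$ and $conflict(u)$ are false; $R_3$ is disabled because $v\in N(u)$ and $pending(v)$ now holds; $R_4$ is disabled because $u.parent=\perp$; and $R_5$ is disabled because $u.state=\mathrm{OUT}$. If $u.parent=v$ in $c$, then $v$ was $u$'s only $\mathrm{IN}$ neighbor, so in $c'$ all neighbors of $u$ are $\mathrm{OUT}$ and $pending(u)$ holds; the crucial claim is that $hesitate(u)$ also holds in $c'$, with witness $w:=v$: we have $u.parent=v$; no $z\in N(u)\setminus\{v\}$ has $z.parent=u$ (parents point only to cluster-heads and $u$ is $\mathrm{OUT}$); every $x\in N(v)\setminus\{u\}$ is $\mathrm{OUT}$ with $x.parent\in\{v,\perp\}$ by the structural fact above; and $v.parent=\perp$ discharges the final clause. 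Hence $R_1$ is disabled at $u$, and $R_3$ (again since $pending(v)$ holds), $R_4$, $R_5$ are disabled as well. Finally, an agent $y$ with $dist(v,y)\ge 3$ has no guard of $pf$MIS that reads $v.state$, so it stays disabled; and for $dist(v,y)=2$ the only guard quantity the fault can alter is $pending(x)$ for common neighbors $x\in N(v)\cap N(y)$, which can only flip from false to true and hence can only further disable $R_3$ at $y$, while $hesitate(y)$ is untouched because its witness $y.parent$ is a cluster-head and therefore non-adjacent to the cluster-head $v$; so $y$ stays disabled too.

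With $v$ the unique enabled agent in $c'$, the distributed scheduler is forced to select $\{v\}$; $v$ executes $R_1$, restoring $v.state=\mathrm{IN}$ and leaving every other variable — including $v.parent=\perp$ — unchanged, so the system returns to $c$ and halts. Hence only $v$ ever changed its $state$, which is exactly the claim (and, together with \Cref{le1}, this re-derives convergence in this case, now with contamination depth zero).

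I expect the main obstacle to be the case analysis establishing $hesitate(u)$ for an orphaned neighbor $u$ (the case $u.parent=v$), in particular its clause requiring that all of $v$'s \emph{other} neighbors be $\mathrm{OUT}$ with $parent\in\{v,\perp\}$; this is precisely where the standing assumption that $parent$ pointers point only to cluster-heads, together with the $parent$-structure of legitimate configurations, does the real work. A secondary point needing care is ruling out that the fault spuriously enables a rule at a distance-two agent, which is why the $pending$-monotonicity observation and the non-adjacency of distinct cluster-heads must be invoked.
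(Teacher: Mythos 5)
Your proposal is correct and follows essentially the same route as the paper's (much terser) proof: show that after the fault $pending(v)$ holds with $hesitate(v)$ false, while every neighbor is either still non-pending or blocked by $hesitate$, so only $v$ can fire $R_1$, relying on the standing assumption that $parent$ pointers are only ever set to cluster-heads. Your write-up is more careful than the paper's sketch — in particular you correctly note that neighbors with a second cluster-head are simply not pending (rather than hesitating), and you explicitly check agents at distance two — but these are fill-ins of the same argument, not a different approach.
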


\begin{proof} Suppose an IN-to-OUT 1-fault  occurs in $v$. Because $v$ was a member of an MIS before the fault, $pending(v)$ becomes true. Then, it suffices to check that $hesitate(v)$ is false while it is true for the neighbors of $v$; therefore, only $v$ can execute $R_1$. Note that we assume $v$ cannot assign $w \in N(v)$ to $parent$ unless $w$ is a cluster-head; otherwise, in the rare case that no other agent except $w$ has assigned $v$ to its $parent$ and $w$ has no other IN neighbor, the term $v.id>w.id$ will be evaluated and if it is true, the contamination depth will be one instead of zero because $w$ enters the IS instead of $v$. With this in mind, with the execution of actions in the faulty agent and only that agent, the system converges to a legitimate configuration.\end{proof}

\begin{theorem}\label{t14}
	DIS $Q$ is perturbation-proof if it runs $pf$MIS using the gain function $\mathrm{(2)}$.
\end{theorem}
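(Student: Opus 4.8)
The plan is to combine the fault-containment property of $pf$MIS (\Cref{le4}) with the generic reduction from contamination depth zero to perturbation-proofness (\Cref{t4}), after first isolating which perturbations a selfish agent governed by the gain function \eqref{2_eq} would ever attempt.

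First I would note that a perturbation is by definition a $1$-fault and that a rational agent performs one only if the ensuing re-convergence leaves it strictly better off. Under \eqref{2_eq} the gain takes only the values $0$, $\vartheta-\zeta$, and $\vartheta$, with $\vartheta$ the global maximum. In a legitimate (MIS) configuration every agent is either a cluster-head with gain $\vartheta-\zeta$ or an OUT agent with an IN neighbor and gain $\vartheta$. An OUT agent already attains the maximum possible gain, so no re-stabilized configuration can raise its payoff, and it has no incentive to perturb; a $1$-fault touching only the secondary-type information encoded in $parent$ changes no one's gain either. Hence the only perturbations relevant to \eqref{2_eq} are IN-to-OUT $1$-faults, performed by a cluster-head hoping to end up as an OUT agent with gain $\vartheta$.

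Next I would invoke \Cref{le4}: when an IN-to-OUT $1$-fault $\hat{e}_v$ strikes a system running $pf$MIS, only the faulty agent $v$ changes its state during convergence, i.e. $D(\hat{e}_v)=0$. Because $pf$MIS is self-stabilizing (\Cref{t13}) the post-fault configuration must reach an MIS (\Cref{le0}); since no agent other than $v$ may move, the only way to restore maximality is for $v$ itself to re-enter the IS, returning the system to exactly the legitimate configuration that held before the fault. Thus $pf$MIS always returns to the legitimate configuration the system was in before a perturbation, so \Cref{t2} certifies that each such configuration is a Nash equilibrium; equivalently, one applies \Cref{t4} directly to $\hat{e}_v$ with contamination depth zero. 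Since this covers every IN-to-OUT $1$-fault and OUT agents never perturb, all legitimate configurations of $Q$ are Nash equilibria for the gain functions associated with the agents, which is precisely perturbation-proofness in the sense of \Cref{d3}.

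The main obstacle I anticipate is not the appeal to \Cref{t4} but the bookkeeping of the first step: one must argue carefully that \eqref{2_eq} induces no profitable OUT-to-IN perturbation and no profitable $parent$-only perturbation, so that \eqref{2_eq} genuinely lies in the family of gain functions ``that can lead to $\hat{e}_v$'' in the hypothesis of \Cref{t4}. A secondary subtlety is the $id$-comparison branch of $hesitate$: the argument relies on the standing assumption that an agent may adopt only a cluster-head as its $parent$, which is exactly what forces the contamination depth in \Cref{le4} to be zero rather than one, and hence what makes \Cref{t4} applicable here.
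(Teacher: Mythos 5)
Your proposal is correct and follows essentially the same route as the paper: restrict attention to IN-to-OUT $1$-faults as the only perturbations a rational agent would attempt under \eqref{2_eq}, invoke \Cref{le4} to get contamination depth zero, and conclude via \Cref{t2}/\Cref{t4}. You simply spell out in more detail the two steps the paper states tersely, namely why OUT agents (already at the maximal gain $\vartheta$) and $parent$-only changes yield no profitable perturbation, and why depth-zero recovery forces a return to the original legitimate configuration.
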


\begin{proof} Following (2), the only possible perturbation is an IN-to-OUT 1-fault . As stated in \Cref{le4}, $Q$ recovers from any IN-to-OUT 1-fault  with contamination depth zero. According to Theorem 2, $Q$ is thus perturbation-proof.\end{proof}

\subsection{Towards a Deflection-Tolerant Solution}
\label{subsection:dtmis}

Next, with the help of the three operations introduced in \Cref{subsec:approach:deflection}, we present a self-stabilizing algorithm that tolerates deflections. An example of possible deflections in a system is illustrated in \Cref{fig:example3}.

\begin{figure}[h]
	\centering
	\includegraphics[width=0.35\linewidth]{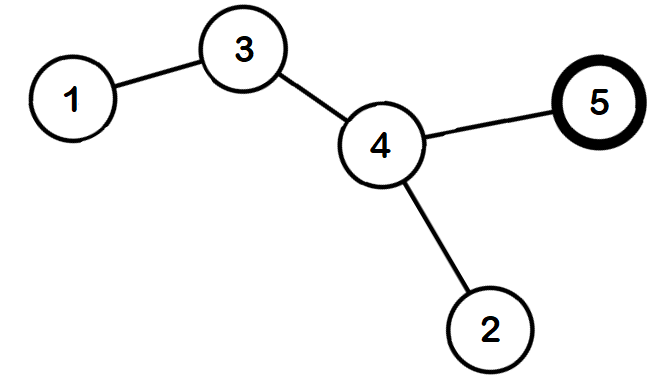}
	\caption{Agents numbered 1 to 3 and 5 may deflect: agents numbered 1 to 3 violate the rules of entry into the IS and number 5 leaves the IS.}
	\label{fig:example3}
\end{figure}

We start by identifying unauthorized actions \red{and their corresponding guards to transform them to selfish rules}. 
There is one selfish rule, namely that an agent exits an IS \red{when there is no conflict}. Then, we design a perturbation-proof algorithm based on $pf$MIS; however, we modify it to make it able to handle any IN-to-OUT 1-fault  despite \red{the new selfish rule}. Finally, we model 
\red{the behavior of agents as a game}
and transform any rule that is violation-prone or \red{selfish} into a probabilistic rule accounting for
behavior strategies.

The resulted algorithm, $dt$MIS (\Cref{fig:smis}), is a clustering algorithm that tolerates deflections and works under a distributed randomized scheduler. \red{The last rule of $dt$MIS ($R_7$) is the added selfish rule that represents the unauthorized action of an agent that leaves the IS when there is no conflict.
The basic idea of the perturbation-proof part of $dt$MIS is similar to that of $vt$MIS. 
The predicate $hesitate$ is refined and a secondary variable $parents$ is employed as a list of neighboring cluster-heads to make the algorithm perturbation-proof despite $R_7$. Note that $R_1$ and $R_7$, which are prone to violation and deflection, respectively, are the only probabilistic rules.}

\begin{algorithm}[ht]
	\begin{algorithmic}
		\Constants
		\State $id$: Integer
		\EndConstants
		\Variables
		\State $state$: Binary
		\State $parents$: Array \tcp{set of agents}
		\State $p$: Real \tcp{$p\in(0,1]$}
		\State $q$: Real \tcp{$q\in[0,1)$}
		\EndVariables
		\Predicates
		\State $\begin{aligned}[t]
		& P(v) \equiv v.parents\\
		& pending(v) \equiv v.state=\mathrm{OUT} \wedge \forall w \in N(v):w.state=\mathrm{OUT}\\
		& conflict(v) \equiv v.state=\mathrm{IN} \wedge \exists w \in N(v):w.state=\mathrm{IN}\\
		\end{aligned}$
		\State $\begin{aligned}[t]&hesitate(v) \equiv P(v) \not = \emptyset \wedge \exists w \in N(v):\big[ w \in P(v)\\& \wedge\forall z\in N(v)\setminus\{w\}:v\not\in P(z) \\&\wedge \forall z \in N(w) \setminus \{v\}:(z.state=\mathrm{OUT} \wedge w\in P(z)) \\&\wedge
		(v\not \in P(w)\vee w.id<v.id \vee \exists z\in N(w) \setminus \{v\}:w\in P(z))\big]
		\end{aligned}$
		\EndPredicates
		\Rules
		\State $\begin{aligned}[t]
		{R_1}:\ \ \ & pending(v)\wedge \neg hesitate(v) \xrightarrow{\text{\ \ $p$\ \ }} v.state:=\mathrm{IN}\\
		{R_2}:\ \ \  & conflict(v) \longrightarrow v.state:=\mathrm{OUT}\\
		{R_3}:\ \ \  &v.state=\mathrm{OUT} \wedge \exists w \in N(v)\cap P(v): (\neg pending(w) \\&\ \ \ \wedge w.state = \mathrm{OUT})\longrightarrow v.parents:= P(v)\setminus \{w\}\\
		{R_4}:\ \ \  &v.state=\mathrm{OUT} \wedge \exists w \in N(v)\setminus P(v): w.state = \mathrm{IN} \\&\ \ \ \longrightarrow v.parents:= P(v)\cup \{w\}\\
		{R_5}:\ \ \  &v.state=\mathrm{IN} \wedge \neg conflict(v)\wedge P(v) \not = \emptyset \\&\ \ \ \longrightarrow v.parent:= \emptyset\\
		{R_6}: \ \ \ & P(v)\setminus N(v)\not = \emptyset\longrightarrow v.parents:= P(v)\setminus N(v)\\
		{R_7}: \ \ \ & v.state=IN \wedge \neg conflict(v) 
		\xrightarrow{\text{\ \ $q$\ \ }} v.state:=OUT
		\end{aligned}$
		\EndRules
	\end{algorithmic}
\captionof{algocf}{deflection tolerant MIS ($dt$MIS)}
\label{fig:smis}
\end{algorithm}

Algorithm $dt$MIS is probabilistically self-stabilizing under a distributed fair scheduler. For the gain function \eqref{2_eq}, $dt$MIS is perturbation-proof and, according to $R_1$ and $R_7$, it tolerates violations and unauthorized actions, respectively, during convergence. For the most part, the proofs associated with $dt$MIS proceed along the same lines of $pf$MIS and $vt$MIS, and are thus omitted.

\red{The stochastic Bayesian game} for $dt$MIS is similar to the one for $vt$MIS except in the case of actions available to agent $v\in \mathcal{L}_{\mathfrak{a}}(k)$, which are
$$\scriptsize{A_v{(\lambda,\theta)}=\begin{cases}
	\{\textrm{Switch}, \textrm{Preserve}\}, &\textrm{if $pending(v)$ given $\lambda$ and $\theta$},\\ 
	\{\textrm{Switch}\}, &\textrm{if $conflict(v)$ given $\lambda$ and $\theta$}, \\ 
	\{\textrm{Switch, Preserve}\},  &\textrm{if $v.state=\mathrm{IN} \wedge\neg conflict(v)$ given $\lambda$ and $\theta$}, \\
	\{\textrm{Preserve}\}, &\textrm{otherwise}.
	\end{cases}}$$  

\subsection{Alternative Solutions (Special Cases)}

In this section, we present two alternative self-stabilizing algorithms that both work under an unfair distributed scheduler, for clustering in the face of selfish agents.

\subsubsection{Violation-Proof Solution (Special Case \uppercase\expandafter{\romannumeral 1})} 

 In some cases, when an agent refuses to execute a rule, the system reaches a situation where only that agent is enabled in the system. We call such a situation a dead-end.
  In a dead-end, assuming that the enabled agent profits more if it makes a move, it has no choice but to not violate.

A self-stabilizing algorithm is \emph{violation-proof} if it makes any violation-prone situation lead to a dead-end. Note that although existence of dead-ends does not contradict \Cref{t7}, a violation-proof algorithm does not need to account for behavior strategies. We do not model a dead-end situation by a game because there is no other enabled agent in the neighborhood.

\begin{theorem}\label{t_vp} Suppose a self-stabilizing DIS $Q$ where
	neither illegitimate configurations are Nash equilibria nor agents are able to selfishly execute actions.
	Assume that agents' self-interests lead to refusal of some rules.
	If any situation where 
	such rules are
	enabled is a dead-end in which the enabled agent profits more by making a move, 
the system is violation-proof.
\end{theorem}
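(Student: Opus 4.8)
The plan is to show that, under the stated hypotheses, selfishness can never stall the system: every potential violation is \emph{forced} to resolve, so the reachable executions of $Q$ coincide with those of the underlying self-stabilizing algorithm, and both convergence and closure then follow. First I would note that, since agents cannot selfishly execute actions, the only admissible selfish behavior is violation, i.e., refusing to fire an enabled rule, and violations can only concern the refusal-prone rules singled out in the hypothesis. By assumption, in any configuration $c$ in which such a rule is enabled, $c$ is a dead-end: the would-be violator $v$ is the unique enabled agent. Under a (possibly unfair) distributed scheduler, a non-empty subset of enabled agents must be selected whenever one exists, so $v$ is selected at every step until it moves, and if $v$ refuses forever the configuration stays frozen at $c$.

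Next I would argue that $v$ strictly prefers to move. Refusing forever yields utility $g_v(c)$; by the hypothesis that the enabled agent ``profits more by making a move,'' Definition \ref{df:nashs} gives an action of $v$ leading to a sequence $(c,\dots,c')$ with a strictly higher utility for $v$ at $c'$, which is exactly the statement that $c$ is not a Nash equilibrium from $v$'s standpoint. The blanket assumption that no illegitimate configuration is a Nash equilibrium guarantees that this profitability is not nullified downstream: every illegitimate configuration reached afterwards again admits a beneficial move for some agent, so no frozen, equally bad configuration can trap the system. Hence the rational choice for $v$ is to execute the rule; it does not violate indefinitely, and every dead-end is therefore resolved in finitely many steps.

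Since no violation persists and no unauthorized action is possible, the trajectories of $Q$ are precisely those of the underlying self-stabilizing algorithm, which exists by assumption. Convergence then holds: from any initial configuration the system reaches a legitimate one after finitely many actions, because the dead-ends met along the way are each resolved in finite time and, by termination of the underlying algorithm, only finitely many of them arise before a legitimate configuration is reached. For closure, a legitimate configuration has no enabled rule, and with no selfish action available the system cannot leave $L$. Thus $Q$ converges and maintains closure in spite of selfish agents, i.e., it is violation-proof.

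The hard part will be the rationality-plus-scheduler argument in the second step: one must argue carefully that ``dead-end $+$ the enabled agent profits by moving'' genuinely forces the move even though the agent could in principle stall for arbitrarily long, and that the beneficial continuation promised by the hypothesis remains compatible with the ``no illegitimate Nash equilibrium'' assumption, so that the move cannot merely lead to another frozen configuration of equal value. The secondary subtlety is ruling out an infinite chain of successive dead-ends, which I reduce to termination of the underlying self-stabilizing algorithm; making this reduction precise under a distributed randomized scheduler (where termination is only almost sure) is the remaining point that needs care.
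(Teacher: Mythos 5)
Your proposal is correct and follows essentially the same route as the paper: the paper's proof is just the observation that in a dead-end the violation-prone rule is the only enabled rule and the agent profits by moving, so rationality forces it to execute the rule. Your additional scaffolding (scheduler behavior, the reduction of the no-infinite-chain-of-dead-ends concern to termination of the underlying algorithm, and the explicit convergence/closure conclusions) elaborates what the paper leaves implicit but does not change the argument.
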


\begin{proof}
	Assume an agent that violates a rule to prevent the system from converging to some configurations that are less profitable. Because such a rule is the only enabled rule in the neighborhood and the agent profits if it makes a move, it executes that rule. 
\end{proof}

In the event of a dead-end, one question is whether the enabled agent experiences the dead-end permanently if it continues to the violate? The agent cannot answer this question with certainty because its knowledge is limited to local information.
In response to this uncertainty, the agent may execute the enabled rule with a probability greater that 
the probability that the dead-end situation continues for any sequence of computations unless the enabled agent executes its enabled rule.  

We propose a violation-proof solution for MIS, $vp$MIS (Algorithm \ref{fig:vpmis}).

\begin{algorithm}
	\begin{algorithmic}
		\Constants
		\State $id$: Integer
		\EndConstants
		\Variables
		\State $state$: Binary
		\State $p_c$: Real \tcp{$p_c\in(0,1]$}
		\EndVariables
		\Predicates
		\State $\begin{aligned}[t]
		& d(v) \equiv |N(v)|\\
		& cmp(v,w) \equiv d(v)>d(w)\vee \big(d(v)=d(w)\wedge v.id<w.id\big)\\
		& pending(v) \equiv v.state=\mathrm{OUT} \wedge \forall w \in N(v):w.state=\mathrm{OUT}\\
		& conflict^*(v) \equiv v.state=\mathrm{IN} \wedge \exists w \in N(v):\big(w.state=\mathrm{IN} \ \wedge \\ & \ \ \ \ \ \ \neg cmp(v,w)\big)
		\end{aligned}$
		\EndPredicates
		\Rules
		\State $\begin{aligned}[t]
		{R_1}: \ \ \ & pending(v) \wedge \forall w \in N(v):\big(\neg pending(w) \vee cmp(v,w) \big) \\& \ \ \  \xrightarrow{\text{\ \ $p_c$\ \ }} v.state:=\mathrm{IN}\\
		{R_2}: \ \ \ & conflict^*(v) \longrightarrow v.state:=\mathrm{OUT}\\
		\end{aligned}$
		\EndRules
	\end{algorithmic}
	\captionof{algocf}{violation-proof MIS ($vp$MIS)}
	\label{fig:vpmis}
\end{algorithm}

Let $d(v)$ refers to the number of neighbors of agent $v$ and let $cmp(v,w)$ returns a single True or False value based on the outcome of comparison between the number of neighbors of $v$ and $w$ and in case of equality, their unique identifiers. Then, predicate $conflict^*(v)$ decides whether or not the state of agent $v$ is IN and there exists at least one IN neighbor of agent $v$ that either its degree is more than the degree of $v$ or both agents have the same degrees but it has a smaller identifier. 

The guard of the first rule ($R_1$) is equivalent to the event of a dead-end (i.e., if an agent is enabled by $R_1$, none of its neighbors are enabled) with an element of uncertainty; therefore, $R_1$ is probabilistic. It updates the state of agent $v$ to IN with a probability $p_c$ if there exists no neighbor of $v$ like $w$ such that either the state of $w$ is IN or it has no IN neighbor and $cmp(v,w)$ is false. The element of uncertainty arises only if $v$ has at least one neighbor $w$ such that $cmp(w,v)$ is true. In this case, if $v$ is enabled by $R_1$, $w$ has at least one IN neighbor; however, this is not necessarily true in the next computations. Therefore, given that $v$ is enabled by $R_1$, the value of $p_c$ is greater than the probability that any agent $w \in N(v)$ that satisfies $cmp(w,v)$ has at least one IN neighbor in the next rounds, i.e., the probability that $v$ remains enabled by $R_1$ forever if the state of $v$ never changes. The second rule updates the state of $v$ to OUT If $conflict^*(v)$ evaluates to true.

\begin{lemma}\label{lec1a1}If an agent is enabled by $R_1$, none of its neighbors are enabled.\end{lemma}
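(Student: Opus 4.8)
The plan is to argue by contradiction: suppose agent $v$ is enabled by $R_1$ but some neighbor $w \in N(v)$ is also enabled — either by $R_1$ or by $R_2$ (these are the only two rules in $vp$MIS). I would dispatch the two cases separately, using the structure of the guards.

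First I would handle the case that $w$ is enabled by $R_2$, i.e.\ $conflict^*(w)$ holds. Then $w.state = \mathrm{IN}$. But if $v$ is enabled by $R_1$, its guard requires $pending(v)$, which in particular forces $w.state = \mathrm{OUT}$ for every $w \in N(v)$. This is an immediate contradiction, so no neighbor of $v$ can be enabled by $R_2$.

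Next I would handle the case that $w$ is enabled by $R_1$, i.e.\ $pending(w) \wedge \forall z \in N(w):\big(\neg pending(z) \vee cmp(w,z)\big)$ holds. Since $v \in N(w)$ and $pending(v)$ holds (from $v$ being $R_1$-enabled), instantiating the universal quantifier at $z = v$ yields $cmp(w,v)$. Symmetrically, $v$'s own $R_1$-guard, instantiated at the neighbor $w$ with $pending(w)$ true, yields $cmp(v,w)$. So both $cmp(v,w)$ and $cmp(w,v)$ hold simultaneously. I would then unfold the definition $cmp(x,y) \equiv d(x)>d(y)\vee\big(d(x)=d(y)\wedge x.id<y.id\big)$ and show this is impossible: if $d(v)\neq d(w)$ the two inequalities $d(v)>d(w)$ and $d(w)>d(v)$ directly contradict each other; if $d(v)=d(w)$ then $cmp$ reduces to the strict identifier comparison, and $v.id<w.id$ together with $w.id<v.id$ is impossible since identifiers are distinct. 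Hence no neighbor of $v$ can be enabled by $R_1$ either, completing the contradiction.

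The only mild subtlety — and the one place I would be careful — is making the quantifier instantiations rigorous: one must check that $v\in N(w)$ and $w\in N(v)$ (immediate from $w\in N(v)$ and the graph being undirected) and that $pending$ of the relevant endpoint is genuinely available when plugging into the other agent's guard. Once the instantiations are justified, the argument is a short propositional/arithmetic contradiction, so I expect no real obstacle here; the lemma is essentially the design rationale of the guard of $R_1$ made explicit.
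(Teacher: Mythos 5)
Your proof is correct and follows essentially the same route as the paper's: the $R_2$ case is killed because $pending(v)$ forces $w.state=\mathrm{OUT}$, and the $R_1$ case is killed because the guard of $v$ forces $cmp(v,w)$ while the guard of $w$ would force $cmp(w,v)$, which the antisymmetry of $cmp$ rules out. Your write-up is in fact slightly more explicit than the paper's, which simply asserts $cmp(w,v)$ is false without spelling out the antisymmetry step.
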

\begin{proof}Suppose an agent $v$ is enabled by $R_1$ and let $w$ be a neighbor of $v$. Agent $w$ cannot be enabled by $R_1$ because it has a neighbor, $v$, such that $pending(v)$ is true while $cmp(w,v)$ is false. It is not enabled by $R_2$ either because the state of $w$ is OUT.\end{proof}

\begin{lemma}\label{lec1a2}If no agent is enabled in the system, the configuration is legitimate.\end{lemma}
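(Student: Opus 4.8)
The plan is to argue by contradiction, paralleling \Cref{le0}: suppose that in a configuration $c$ of a system running $vp$MIS no agent is enabled, yet $c$ is not legitimate. Since a legitimate configuration here is precisely one in which $\mathbb{P}$ of \eqref{1_eq} holds, i.e. an MIS, the failure of $c$ to be legitimate means either $c$ is not an independent set, or $c$ is an independent set that is not maximal. In each case I would exhibit an enabled rule, contradicting the hypothesis.

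The groundwork is to observe that $cmp(\cdot,\cdot)$ induces a strict total order on $V$: it ranks agents first by degree and then by their (pairwise distinct) identifiers, so for any two distinct agents exactly one of $cmp(v,w)$ and $cmp(w,v)$ holds, and the relation is transitive; hence every finite nonempty set of agents has a unique $cmp$-maximum. For the first case, if adjacent agents $v$ and $w$ both have state $\mathrm{IN}$, take the one that is $cmp$-smaller, say $v$, so that $\neg cmp(v,w)$; then $v.state=\mathrm{IN}$ and $w\in N(v)$ witnesses $conflict^*(v)$, so $R_2$ is enabled at $v$ — a contradiction. For the second case, let $P$ be the set of pending agents, which is nonempty because non-maximality of the independent set forces some OUT agent to have no IN neighbor; let $u^\ast$ be the $cmp$-maximum of $P$. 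Every pending neighbor $w$ of $u^\ast$ satisfies $cmp(u^\ast,w)$, so $\forall w\in N(u^\ast): \neg pending(w)\vee cmp(u^\ast,w)$, and together with $pending(u^\ast)$ this makes $R_1$ enabled at $u^\ast$ — again a contradiction.

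The main obstacle is the second case, and the subtlety there is that the guard of $R_1$ demands $cmp$-dominance only over \emph{pending neighbors}, not over all agents; taking the $cmp$-maximum of the \emph{entire} pending set $P$ is precisely what sidesteps any need to analyze the structure of the subgraph induced by the pending agents. A small final check is that exhausting both cases actually yields $\neg conflict(v)\wedge\neg pending(v)$ for every $v$: ruling out the first case gives an independent set, hence $\neg conflict$ everywhere, and ruling out the second gives maximality, hence $\neg pending$ everywhere, so $\mathbb{P}$ holds and $c$ is legitimate. The probabilistic label $p_c$ attached to $R_1$ plays no role in the argument, since enabledness depends only on the guards.
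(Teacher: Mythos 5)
Your proof is correct and follows the same two-case contradiction structure as the paper (non-independent set versus non-maximal independent set). In fact, your handling of the second case is tighter than the paper's: the paper asserts that $v$ or one of its neighbors must have $R_1$ enabled, which is not immediate since a pending neighbor that $cmp$-dominates $v$ may itself be dominated by a further pending agent, whereas your extremal argument — taking the $cmp$-maximum of the entire pending set, which exists because $cmp$ is a strict total order on a finite set — cleanly guarantees some enabled agent.
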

\begin{proof}Suppose that no agent is enabled but the MIS is not constructed. So, there exists at least two neighboring agents $v$ and $w$ that both are cluster-heads or there is at least one agent $v$ that has no IN neighbor. In the first case, $R_2$ is enabled either for $v$ or $w$ that contradicts our assumption. In the second case, the state of neighbors of $v$ must be OUT and thus either $v$ or one of its neighbors will execute $R_1$ to enter into the IS, which is a contradiction.\end{proof}

\begin{theorem}\label{tc1a1} $vp$MIS is self-stabilizing under an unfair distributed scheduler.\end{theorem}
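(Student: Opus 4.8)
The plan is to prove closure directly from the predicate definitions and to obtain convergence by bounding the number of state‑changing moves, so that every execution terminates (with probability one) in a configuration with no enabled agent, which is an MIS by \Cref{lec1a2}. For closure, I would check that an MIS admits no enabled rule: a cluster‑head $v$ has $v.state=\mathrm{IN}$, so $pending(v)$ is false, and it has no IN neighbor (the set is independent), so $conflict^*(v)$ is false; an OUT agent $v$ has $v.state\neq\mathrm{IN}$, so $conflict^*(v)$ is false, and by maximality it has an IN neighbor, so $pending(v)$ is false. Hence no agent is enabled, the configuration is frozen, and the system never leaves it.

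For convergence I would exploit the strict total order induced by $cmp$ together with the fact that $state$ is the only variable written by a rule and can only move $\mathrm{OUT}\to\mathrm{IN}$ (via $R_1$) or $\mathrm{IN}\to\mathrm{OUT}$ (via $R_2$). The pivotal claim is that once an agent $v$ successfully executes $R_1$ it stays IN forever: at that step $pending(v)$ holds, so every neighbor of $v$ is OUT, and by \Cref{lec1a1} no neighbor of $v$ is enabled, hence none moves in that step; considering the first later step at which $v$ would turn OUT or a neighbor of $v$ would turn IN, just before it $v$ is IN and all neighbors OUT, so $conflict^*(v)$ is false (forbidding $v$'s $R_2$) and every neighbor fails $pending(\cdot)$ because $v$ is IN (forbidding its $R_1$) — a contradiction, so no such step exists. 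Consequently each agent executes $R_1$ successfully at most once and $R_2$ at most once, so any execution has at most $2|V|$ state‑changing moves. After the last of them the configuration is constant, say $c^*$, with a constant set of enabled agents; were that set non‑empty, in the (necessarily infinite) execution some enabled agent would be selected infinitely often, and it cannot be $R_2$‑enabled (a selection would change $state$), so it is $R_1$‑enabled, but being selected infinitely often and flipping with probability $p_c>0$ each time it flips eventually with probability one — changing $state$, a contradiction. Hence with probability one $c^*$ has no enabled agent and is therefore an MIS by \Cref{lec1a2}. Combined with closure, this gives the claim.

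The chief obstacle is the unfair distributed scheduler, which forbids any fairness‑based progress argument, so convergence must rest entirely on the finiteness of the move count; within that count the delicate point is excluding oscillation of an agent between $\mathrm{IN}$ and $\mathrm{OUT}$, which is exactly what the irreversibility claim above (built on \Cref{lec1a1} and the $cmp$ ordering) rules out. A secondary subtlety is that, since $R_1$ is probabilistic, convergence holds with probability one rather than for every coin sequence, so the statement is to be read in the sense of probabilistic self‑stabilization — the natural reading here, since the only way $R_1$ remains enabled is a dead‑end, in which the enabled agent strictly profits by moving and hence eventually does so.
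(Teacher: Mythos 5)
Your proof is correct and follows essentially the same route as the paper's: both rest on \Cref{lec1a1} (an $R_1$-enabled agent has no enabled neighbors, so a successful entry into the IS is irreversible and permanently silences the neighborhood) and \Cref{lec1a2} (a configuration with no enabled agent is an MIS). Your move-counting bound of $2|V|$ together with the pigeonhole/probability-one step for the constant enabled set is in fact more careful than the paper's argument at the one delicate point --- why an unfair distributed scheduler cannot indefinitely starve the remaining enabled agents and why the probabilistic guard of $R_1$ still fires eventually --- where the paper simply asserts that the enabled agent ``eventually'' enters the IS.
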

\begin{proof}Suppose an agent $v$ is enabled in the initial configuration. If it is enabled by $R_1$, according to \Cref{lec1a1}, none of its neighbors are enabled. Given that $p_c$ is non-zero and $v$ remains enabled, the action of $R_1$ eventually updates the state of $v$ to IN. After executing $R_1$, the neighbors of $v$ will not execute any rule because they has an IN neighbor. Furthermore, it is impossible that $v$ be henceforward enabled because it has no IN neighbor and its neighbors never change their states. Now, suppose $v$ is initially enabled by $R_2$ and thus has an IN neighbor $w$ such that $\neg cmp(v,w)$. After executing $R_2$, assuming that the state of $w$ becomes OUT and $v$ has no IN neighbor, either $v$ or one of its neighbors, but not both (\Cref{lec1a1}), will be enabled by $R_1$ and enters IS and then none of them execute any rule. Therefore, the system finally stabilizes (i.e., there is no enabled agent) into a configuration that according to \Cref{lec1a2}, is legitimate.\end{proof}

\begin{theorem}\label{tc1a2}
	Suppose that a system $Q$ executes $vp$MIS and the gain function is \eqref{2_eq}. Then, $Q$ is violation-proof.
\end{theorem}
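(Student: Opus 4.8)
The plan is to derive this as a consequence of \Cref{t_vp}, i.e., to verify that $Q$ running $vp$MIS under the gain function \eqref{2_eq} satisfies the four hypotheses of that theorem: (i) $Q$ is self-stabilizing, (ii) no illegitimate configuration is a Nash equilibrium, (iii) agents cannot selfishly change their primary variables, and (iv) every configuration in which a violation-prone rule is enabled is a dead-end in which the enabled agent profits by making a move. Hypothesis (i) is \Cref{tc1a1}; hypothesis (iii) is the standing assumption of this section, namely that $state$ is read-only to every process except the self-stabilizing algorithm. For hypothesis (ii): by \Cref{tc1a1} and \Cref{lec1a2} the legitimate configurations of $Q$ are exactly its MIS configurations, so $vp$MIS ``forms an MIS'' and \Cref{t11} applies directly, giving that no non-MIS --- hence no illegitimate --- configuration of $Q$ is a Nash equilibrium.

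The substance of the proof is hypothesis (iv). First I would observe that $R_1$ is the only violation-prone rule of $vp$MIS: executing $R_2$ turns a conflicting cluster-head (gain $\vartheta-\zeta$ by \eqref{2_eq}) into a non-pending non-cluster-head (gain $\vartheta$), a strict improvement, so no selfish agent ever refuses $R_2$; by contrast $R_1$ turns a pending agent into a cluster-head, incurring the cost $\zeta$, which is exactly what an energy-constrained agent is reluctant to do. Next, whenever $R_1$ is enabled at an agent $v$, \Cref{lec1a1} shows that none of $v$'s neighbors is enabled, so the situation is a dead-end in the sense relevant here. In that dead-end $v$ is pending, so $g_v = 0$ by \eqref{2_eq}; if $v$ executes $R_1$ it becomes a cluster-head with no cluster-head neighbor and its gain rises to $\vartheta-\zeta>0$, whereas if $v$ refuses and the dead-end is in fact permanent, $v$ is locked at gain $0$. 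Since $v$ cannot, from local information alone, certify that the dead-end will be broken by some neighbor, and since the probability $p_c$ attached to $R_1$ is by construction larger than the probability that the dead-end persists over every continuation in which $v$ stays put, a rational $v$ maximizes expected gain by eventually executing $R_1$; this is precisely the statement that $v$ ``profits more by making a move''. With (i)--(iv) in hand, \Cref{t_vp} yields that $Q$ is violation-proof.

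The step I expect to be the main obstacle is making hypothesis (iv) watertight. The naive comparison ($0 \mapsto \vartheta-\zeta$) only pits moving against remaining pending, but a refusing agent could instead be ``rescued'' by a neighbor entering the IS and so end with gain $\vartheta > \vartheta-\zeta$; the deterministic dominance of moving therefore fails, which is the very reason $R_1$ is a probabilistic rule and $p_c$ is tied to the dead-end-persistence probability. Getting this argument right requires careful accounting of what $v$ can infer locally and an appeal to \Cref{tc1a1} (which ensures stabilization for every $p_c>0$); a secondary but necessary remark is that the operative notion of ``dead-end'' for $vp$MIS is the one provided by \Cref{lec1a1} (no enabled neighbor), which suffices because an agent's decision depends only on its neighborhood.
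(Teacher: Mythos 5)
Your proposal is correct and follows essentially the same route as the paper: the paper's own proof likewise identifies $R_1$ as the only violation-prone rule, invokes \Cref{lec1a1} to establish that an $R_1$-enabled agent is in a dead-end, and argues that since the dead-end persists forever with probability at least $p_c$ (the probability attached to $R_1$), refusing can be non-profitable forever and the agent eventually executes the rule. The paper states this directly rather than packaging it as an explicit verification of the hypotheses of \Cref{t_vp}, but your extra scaffolding (checking self-stabilization via \Cref{tc1a1}, the absence of illegitimate Nash equilibria via \Cref{t11}, and the temporary-versus-permanent dead-end subtlety) only makes the same argument more explicit.
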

\begin{proof}An agent may violate any rule that makes it cluster-head ($R_1$) in hopes that one of its neighbors enters IS. However, according to \Cref{lec1a1}, $R_1$ is a dead-end that can be either permanent or temporary. Let $p_c$ denotes the probability that it is permanent. Because violating $R_1$ can be non-profitable forever with a probability more than $p_c$, the agent will eventually execute $R_1$ after some hesitations.\end{proof}

\subsubsection{Deflection-Proof Solution (Special Case \uppercase\expandafter{\romannumeral 2})}



We call a self-stabilizing system deflection-proof if it ensures that no deflection is beneficial. Note that a deflection-proof solution prevents from violations and perturbations too. Next theorem suggests a method to devise a deflection-proof algorithm.  

\begin{theorem}\label{tdp}Assume that no illegitimate configuration of a self-stabilizing DIS $Q$ is a Nash equilibrium.
	The system $Q$ is deflection-proof assuming that there exists a legitimate configuration $l$ that irrespective of whether or not $v$ deflects, any sequence of computations reaches either $l$ or a dead-end that enforces execution of actions leading to $l$.
\end{theorem}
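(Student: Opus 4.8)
The plan is to show that whether or not a given agent $v$ deflects, the system ends up in one and the same terminal legitimate configuration $l$, so that $v$'s terminal gain --- and hence, since a payoff is the difference of gains of consecutive configurations and these telescope along a trajectory, its accumulated utility from the current configuration onward --- is unchanged by the deflection; a deflection that does not raise utility is not beneficial, which is exactly what deflection-proofness demands. I would therefore fix an arbitrary agent $v$ and an arbitrary current configuration $c$, let $l$ be the legitimate configuration supplied by the hypothesis for this situation, and analyze the two executions (the ``honest'' one and the ``deflecting'' one), showing each reaches $l$.

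For the honest execution I would first rule out stalling: since no illegitimate configuration is a Nash equilibrium (\Cref{df:nashs}), in every illegitimate configuration at least one agent strictly profits by executing an enabled rule, so that agent (being rational) executes it and the system keeps evolving and cannot halt in an illegitimate configuration. By the standing hypothesis, the execution then reaches either $l$ or a dead-end that enforces actions leading to $l$; in the dead-end branch I would invoke the definition of a dead-end --- a single enabled agent whose move is profitable --- to conclude that agent does move, and the enforced actions carry the system to $l$. So the honest execution reaches $l$, where $v$'s gain is $g_v(l)$. For the deflecting execution the argument is verbatim the same, because the hypothesis quantifies over \emph{any} sequence of computations ``irrespective of whether or not $v$ deflects'' (treating, as in the three-step construction, each unauthorized action as a selfish rule so that it counts as a computation): the same dichotomy applies, the same non-stalling observation applies, and the system again settles in $l$ with $v$'s gain $g_v(l)$. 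Comparing the two, $v$'s terminal gain is $g_v(l)$ either way, so the accumulated payoff to $v$ along any path from $c$ to $l$ equals $g_v(l)-g_v(c)$ regardless of deflection; hence no deflection by $v$ is beneficial, and as $v$ and $c$ were arbitrary, $Q$ is deflection-proof --- which parallels how \Cref{t2} and \Cref{d3} treat perturbations.

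I expect the main obstacle to be the dead-end branch together with reading ``irrespective of whether or not $v$ deflects'' strongly enough to cover \emph{repeated} deflections interleaved arbitrarily with the scheduler's other choices: the hypothesis must be taken to quantify over all such computation sequences, and the clause ``enforces execution of actions leading to $l$'' must genuinely forbid the enabled agent from sitting in the dead-end forever --- this is exactly where the profitability built into the notion of a dead-end does the work, and I would make that dependence explicit. A secondary subtlety is the utility model: I am assuming an agent's objective is captured by the gain of the terminal legitimate configuration (equivalently the telescoped payoff), which I would justify by the paper's standing assumption that for each agent every illegitimate configuration is dominated in gain by some legitimate one, and by the analogous treatment of perturbations in \Cref{t2}.
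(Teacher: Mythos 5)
Your proposal is correct and follows essentially the same route as the paper's own (much terser) proof: both arguments observe that, by hypothesis, every computation — with or without the deflection — terminates in the same legitimate configuration $l$ (possibly via a dead-end that forces the moves leading to $l$), so the deflecting agent's eventual gain is unchanged and the deflection confers no benefit. Your additional care about non-stalling via the no-illegitimate-Nash-equilibrium assumption and about the telescoping of payoffs only makes explicit what the paper leaves implicit.
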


\begin{proof}
	Suppose $v$ violates an enabled rule or illegally executes an action. In both cases, the system reaches $l$ or a dead-end that makes $v$ execute the actions that lead to $l$. Therefore, $v$ does not benefit from unauthorized actions or violations.
\end{proof}





The rest of this section elaborates on a deflection-proof solution to clustering. We start with the introduction of an algorithm (\Cref{cl:al1}) that always builds a unique MIS irrespective of the initial configuration as follows. 

\begin{small}
	\begin{algorithm}[b]
		\SetKwInOut{Input}{Input}
		\SetKwInOut{Output}{Output}
		\Input{graph $G(V,E)$}
		\Output{set $I$}
		$I = \emptyset$\;
		\While{$V \not = \emptyset$}{\ForEach{$v \in V$}{$S = \{u.id \mid u \in \left(v \cup N\left(v\right)\right) \}$\;\If{$v = \arg \min(S)$}{$I = I \cup \{v\}$\;
				}}
			\ForEach{$v \in I$}{$V = V - (\{v\}\cup N(v))$\;}
		update $E$ according to $V$\;} 
		\caption{returns an unique MIS} 
		\label{cl:al1}
	\end{algorithm}
\end{small}

Let $V$ denotes a variable that is initially equal to the set of agents. At the first iteration, each agent (e.g., $v$) whose identifier is less than the identifiers of all of its neighbors becomes cluster-head and thus its neighbors never enter the IS. Therefore, $\{v\}\cup N(v)$ is subtracted from set $V$. These operations repeat iteratively until $V$ is equal to $\emptyset$.

Next, we present \Cref{fig:dpmis} as a self-stabilizing solution based on \Cref{cl:al1}. This algorithm works with an unfair distributed scheduler. Its sole predicate $liberated(v)$ evaluates true if agent $v$ has a neighboring cluster-head with a smaller identifier. \Cref{fig:dpmis} is simple enough that we would not bother to give a formal proof that it is self-stabilizing.
\begin{algorithm}
	\begin{algorithmic}
		\Constants
		\State $id$: Integer
		\EndConstants
		\Variables
		\State $state$: Binary
		\EndVariables
		\Predicates
		\State $\begin{aligned}[t]
		& liberated(v) \equiv \exists w \in N(v):(w.state=\mathrm{IN} \wedge w.id < v.id)
		\end{aligned}$
		\EndPredicates
		\Rules
		\State $\begin{aligned}[t]
		{R_1}: \ \ \ & v.state=\mathrm{OUT} \wedge \forall w \in N(v):\big(v.id<w.id \ \vee \\& \ \ \ liberated(w)\big)  \longrightarrow v.state:=\mathrm{IN}\\
		{R_2}: \ \ \ & v.state=\mathrm{IN} \wedge \exists w \in N(v):\neg liberated(w) \\& \ \ \  \longrightarrow v.state:=\mathrm{OUT}
		\end{aligned}$
		\EndRules
	\end{algorithmic}
	\captionof{algocf}{deflection-proof MIS ($dp$MIS)}
	\label{fig:dpmis}
\end{algorithm}

\begin{lemma}\label{lec2a1}A system that executes $dp$MIS has only one legitimate configuration.\end{lemma}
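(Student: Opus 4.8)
The plan is to show that in a legitimate configuration of $dp$MIS the value of $state$ at \emph{every} agent is forced, so at most one legitimate configuration can exist; existence then follows from the fact that $dp$MIS is self-stabilizing (equivalently, the set $I$ returned by \Cref{cl:al1} induces such a configuration). Throughout I use that a configuration is legitimate iff no agent is enabled, i.e.\ neither $R_1$ nor $R_2$ is enabled anywhere. Order the agents by their unique identifiers and argue by well-founded induction along this order.

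The first step is to establish that a legitimate configuration is an independent set. I would do this by a minimal-counterexample argument: if some two adjacent agents were both $\mathrm{IN}$, pick the $\mathrm{IN}$ agent $v$ of smallest $id$ that has an $\mathrm{IN}$ neighbour, and let $w$ be such a neighbour. If $w.id < v.id$ then $w$ is itself an $\mathrm{IN}$ agent with an $\mathrm{IN}$ neighbour and smaller $id$, contradicting minimality. If $w.id > v.id$, then disabledness of $R_2$ at $w$ forces $liberated(u)$ for every $u \in N(w)$, in particular $liberated(v)$, which yields an $\mathrm{IN}$ neighbour of $v$ with still smaller $id$ — again contradicting minimality. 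Hence no such pair exists.

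The second step is the id-induction that pins down each $state$. For the globally smallest agent $v$: if $v.state=\mathrm{OUT}$ then, since $v.id<w.id$ for every neighbour $w$, the universal clause in the guard of $R_1$ holds trivially, so $R_1$ is enabled at $v$ — contradiction; hence $v.state=\mathrm{IN}$. For a general agent $v$, assuming the states of all smaller-$id$ agents are already determined: if some smaller-$id$ neighbour of $v$ is $\mathrm{IN}$, independence (step one) forces $v.state=\mathrm{OUT}$; otherwise every smaller-$id$ neighbour of $v$ is $\mathrm{OUT}$, and if $v.state=\mathrm{OUT}$ then disabledness of $R_1$ at $v$ produces a neighbour $w$ with $w.id<v.id$ and $\neg liberated(w)$, while the induction hypothesis applied to the $\mathrm{OUT}$ agent $w$ gives $w$ a smaller-$id$ $\mathrm{IN}$ neighbour, i.e.\ $liberated(w)$ — a contradiction — so $v.state=\mathrm{IN}$. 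In every case $v.state$ is uniquely determined (indeed $v$ is $\mathrm{IN}$ exactly when no smaller-$id$ neighbour of $v$ is $\mathrm{IN}$, which is precisely the set built by \Cref{cl:al1}), so the legitimate configuration is unique.

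I expect the independence step to be the main obstacle, because disabledness of $R_2$ is not a purely local witness: $R_2$ being disabled at $v$ does not by itself prevent $v$ from having an $\mathrm{IN}$ neighbour, so the argument must be routed through a neighbour and closed off using well-foundedness of the $id$ order (the minimal-element / infinite-descent device) rather than a direct local check. Once independence is available, the id-induction is routine bookkeeping over the two guards, and the identification of the resulting $\mathrm{IN}$ set with the output of \Cref{cl:al1} is immediate from the characterization above.
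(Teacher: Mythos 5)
Your proof is correct, and it takes a different route from the paper's. The paper argues directly on a \emph{pair} of configurations: assuming two distinct legitimate configurations $l_1$ and $l_2$, it picks an agent on which they disagree and shows the disagreement forces a disagreeing neighbour of strictly smaller identifier, giving an infinite descending chain of identifiers in a finite graph. You instead work inside a \emph{single} legitimate configuration and show that the value of $state$ at every agent is forced: first the independence property (your minimal-counterexample step, routing the non-local disabledness of $R_2$ through a neighbour and closing it off by well-foundedness of the $id$ order), then a strong induction along identifiers showing $v$ is IN exactly when no smaller-$id$ neighbour is IN. Both arguments ultimately rest on the same well-foundedness of the identifier order, but your decomposition proves strictly more: it makes explicit the independence fact that the paper's one-line inference (``$v$ has a neighbour $w$ with $w.id<v.id$'') quietly relies on, it identifies the unique legitimate configuration concretely as the greedy MIS produced by \Cref{cl:al1}, and it covers existence, whereas the paper's contradiction only rules out there being two. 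The cost is a longer proof; the gain is an explicit characterization that also certifies the claim, made just after the lemma, that this configuration is the system's only Nash equilibrium.
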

\begin{proof}
Suppose $Q$ runs $dp$MIS and has more than one legitimate configuration (e.g., $l_1$ and $l_2$). There is hence at least one agent $v$ that is cluster-head in $l_1$, but it is not cluster-head in $l_2$. We can conclude that $v$ has a neighbor (e.g. $w$) such that $w.id<v.id$. Because the state of $w$ is IN in $l_2$ and OUT in $l_1$, $w$ must have a neighbor with smaller identifier and thus we will reach the same conclusions for that neighbor of $w$. This arrangement of identifiers in descending order will continue until infinity, which contradicts our underlying assumption that the number of agents are finite.
\end{proof}

\begin{theorem}\label{tc2a2}
	Suppose that system $Q$ executes $dp$MIS and the gain function is \eqref{2_eq}. Then, $Q$ is deflection-proof.
\end{theorem}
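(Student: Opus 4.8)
The plan is to apply Theorem~\ref{tdp}, taking the distinguished legitimate configuration $l$ to be the \emph{unique} MIS produced by $dp$MIS (Lemma~\ref{lec2a1}). Once the two hypotheses of Theorem~\ref{tdp} are verified for $Q$, deflection-proofness follows.

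The first hypothesis --- that no illegitimate configuration of $Q$ is a Nash equilibrium --- follows directly from Theorem~\ref{t11}, since $dp$MIS forms an MIS and $Q$ uses the gain function~\eqref{2_eq}: in any non-MIS configuration, some agent is either pending (gain $0$, improvable to at least $\vartheta-\zeta$ by a move) or in conflict (by the time an MIS is reached one endpoint of the conflicting edge is left outside the IS, hence profits $\zeta$ by moving).

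For the second hypothesis I must exhibit $l$ such that, for any agent $v$, whether or not $v$ deflects, every sequence of computations reaches $l$ or a dead-end whose forced rule-executions lead to $l$. Take $l$ to be the unique MIS. If $v$ never deflects, $Q$ self-stabilizes to $l$ (recall $dp$MIS is self-stabilizing under an unfair distributed scheduler), so assume $v$ deflects; under gain~\eqref{2_eq} the only deflections that could conceivably help $v$ are refusing $R_1$ (to avoid becoming a cluster-head) and spuriously setting $v.state:=\mathrm{OUT}$ (to stop being a cluster-head), so I would concentrate on those. If $v$ refuses $R_1$: because every guard of $dp$MIS is decided by the $\mathrm{IN}$/$\mathrm{OUT}$ states and the fixed identifier order, $v$'s larger-identifier neighbors cannot execute $R_1$ in $v$'s place (they would need $liberated(v)$, which $v$ being $\mathrm{OUT}$ and non-liberated denies), so the system either still reaches $l$ or halts at a dead-end whose only enabled rule is $R_1$ at $v$. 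If $v$ illegally exits the IS: its former dependents lose $liberated(\cdot)$ but, since $v$ still outranks them, none of them can execute $R_1$, while $v$ itself re-satisfies the guard of $R_1$ --- again a dead-end at $v$. In each dead-end, executing the forced $R_1$ strictly increases $v$'s gain (from $0$, while pending, to $\vartheta-\zeta$), so $v$ gains nothing by persisting; any remaining, non-beneficial deflections leave $Q$ self-stabilizing to $l$. With both hypotheses established, Theorem~\ref{tdp} yields that $Q$ is deflection-proof.

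I expect the main obstacle to be the ``illegal exit'' case of the second hypothesis: ruling out that $v$ converts a spurious $\mathrm{IN}\!\to\!\mathrm{OUT}$ move into a configuration where some neighbor becomes the cluster-head, which would leave $v$ a cluster-member with the strictly larger gain $\vartheta$. Settling this requires carefully tracing how a single illegal flip propagates through $liberated(\cdot)$ and confirming that the identifier ordering forbids any such role swap, so that the post-deflection configuration is genuinely a dead-end forcing the system back to the unique MIS $l$.
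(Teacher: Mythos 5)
Your proposal is correct and takes essentially the same route as the paper: both instantiate \Cref{tdp} with $l$ equal to the unique legitimate configuration supplied by \Cref{lec2a1}, the paper stating this in two lines while you additionally trace the refusal-of-$R_1$ and illegal-exit dead-end cases. The only caveat is that your appeal to \Cref{t11} for the first hypothesis rules out non-MIS Nash equilibria but not the MIS configurations other than $l$, which are also illegitimate under $dp$MIS; since the paper's own proof does not address that hypothesis at all, your version is, if anything, the more explicit of the two.
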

\begin{proof}
According to \Cref{lec2a1}, $Q$ has a legitimate configuration (an MIS) $l$ that irrespective of deflections, any sequence of computations reaches to either $l$ or a permanent dead-end that enforces execution of actions that leads to $l$. Therefore, according to \Cref{tdp}, $Q$ is deflection-proof.
\end{proof}

Note that all MIS configurations of a system running $dp$MIS are illegitimate except one configuration which is also the only configuration that is a Nash equilibrium.
\section{Complexity Analysis of the Algorithms}
\label{section:analysisComplexity}

In \Cref{table:complexity}, we show the average-case time complexities of the proposed self-stabilizing algorithms under schedulers that best suits them. Parameters $n$ and $D$ are the number of nodes and network diameter, respectively. \red{In our analysis, we assume the degree of nodes is upper bounded.} Note that the complexities of probabilistic algorithms $vt$MIS, $dt$MIS, and $vp$MIS depend heavily on the probabilities that agents apply to enter or exit the IS. These probabilities may vary over time with the execution of the algorithms.
\red{Thus, we define random variables that indicate whether an agent executes a probabilistic rule, and analyze time complexities using their expectations.}


\begin{table*}
	\centering
	\begin{tabular}{c|c|c|c|c|c|}
			\cline{2-6}
			&$pf$MIS& $vt$MIS& $dt$MIS& $vp$MIS& $dp$MIS\\ 
			\hline
			\multicolumn{1}{ |c| }{scheduler}& \textit{central} & \textit{synchronous} & \textit{synchronous} & \textit{distributed} & \textit{distributed} \\
			\multicolumn{1}{ |c| }{convergence time 
			} & ${O}(n)$ & ${O}(\log n)$ & ${O}(\log n)$ & ${O}(D)$ & ${O}(\log n)$ \\
					\multicolumn{1}{ |c| }{state transitions
		}& ${O}(n)$ & ${O}(n)$ & ${O}(n)$ & ${O}(n)$ & ${O}(n)$ \\
			\hline 
	\end{tabular}
	\caption{Expected complexities of the proposed distributed self-stabilizing algorithms for the MIS problem.}
	\label{table:complexity}
\end{table*}

\subsection{$pf$MIS}

We analyze the time complexity of $pf$MIS under a central scheduler. A central scheduler evaluates only one enabled agent in each round. Note that this sequential selection of agents prevents new conflicts from happening. Now, assume an enabled agent $v$ is selected by the scheduler in round $t$. If $v$ has a conflict, it exits the IS and never has a conflict again. It can be easily shown that there is no conflict in the system after $n$ rounds; however, note that leaving the IS may cause a pending situation in another agent. \red{Therefore, there may be pending agents after the first $n$ rounds. We now show that during the second $n$ rounds, all pending situations are resolved and no new conflicts or pending situations occur.}
Then, if $v$ is enabled by the first rule, it enters the IS and its state remains unchanged. Bear in mind that if predicate $hesitate(v)$ is enabled in a pending agent $u$, although $u$ is not enabled, one of its neighbors is enabled by the first rule and thus enters the IS when it is its turn. Therefore, we can conclude that after $2n$ rounds, there is no enabled agent in the system. Hence, both round and move \red{(i.e., state transition)} complexities of $pf$MIS are $O(n)$ under a central scheduler. 

\subsection{$vt$MIS}

We analyze the time complexity of $vt$MIS under a synchronous scheduler. Under $vt$MIS, an agent executes the first rule
\red{with a probability that corresponds}
to strategy Preserve. \red{Without loss of generality, we assume that all agents enter the IS with 
a probability $\tilde{p}$.}
Therefore, in every round, any pending agent enters the IS with probability $\tilde{p}$ (Rule 1) and any agent that has a conflict leaves the IS (Rule 2) with probability one. 

\begin{lemma}
	\label{lemmaVTmis1}
	Assume an agent $v\in V$. Then, in every two consecutive rounds $t_i$ and $t_{i+1}$, one of these two cases is true:
	\begin{itemize}
		\item $v$ is stabilized, i.e., $v$ is a cluster-head that has no conflict, or $v$ has a neighboring cluster-head that has no conflict.
		\item $\exists w \in (v \cup N(v)): pending(w).$ 
	\end{itemize}
\end{lemma}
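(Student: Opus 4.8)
The plan is to fix an arbitrary agent $v$ and a pair of consecutive rounds $t_i,t_{i+1}$, and to argue by a short case analysis on the $1$-local state of $v$ at round $t_i$, resting on two elementary facts about $vt$MIS under the synchronous scheduler. The first fact I would record is: if a node $x$ has $x.state=\mathrm{IN}$ and $\neg conflict(x)$ at some round, then the same holds at the next round, because $x$ is then not enabled and, moreover, no $\mathrm{OUT}$ neighbour of $x$ can fire $R_1$ (having the $\mathrm{IN}$ neighbour $x$, such a neighbour fails $pending$). The second fact is nearly trivial but needed: an $\mathrm{OUT}$ node can turn to $\mathrm{IN}$ in a round only by executing $R_1$, which requires it to satisfy $pending$ at the start of that round.

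With these in hand I would first dispose of the easy case: if some $w\in\{v\}\cup N(v)$ satisfies $pending(w)$ at round $t_i$, the second alternative of the lemma already holds, so assume no agent in $\{v\}\cup N(v)$ is pending at $t_i$, and split on $v.state$ at $t_i$. (a) If $v.state=\mathrm{IN}$ and $\neg conflict(v)$, the first fact makes $v$ a conflict-free cluster-head at both $t_i$ and $t_{i+1}$, i.e. the first alternative. (b) If $v.state=\mathrm{IN}$ and $conflict(v)$, then every $\mathrm{IN}$ neighbour of $v$ is itself in conflict (with $v$ as witness), so $v$ and all of them fire $R_2$ and become $\mathrm{OUT}$ at $t_{i+1}$; by the second fact together with the no-pending assumption, no neighbour of $v$ becomes $\mathrm{IN}$ at $t_{i+1}$, hence all neighbours of $v$ are $\mathrm{OUT}$ and $v$ is $\mathrm{OUT}$, so $pending(v)$ holds at $t_{i+1}$, i.e. the second alternative. (c) If $v.state=\mathrm{OUT}$, then $v$ has an $\mathrm{IN}$ neighbour (it is not pending); if one such neighbour $w$ is conflict-free, the first fact makes $w$ a conflict-free cluster-head neighbour of $v$ at $t_i$ and $t_{i+1}$; otherwise every $\mathrm{IN}$ neighbour of $v$ is in conflict, all of them go $\mathrm{OUT}$ at $t_{i+1}$, no neighbour of $v$ becomes $\mathrm{IN}$ (again by the second fact and the no-pending assumption), and $v$ itself stays $\mathrm{OUT}$, so $pending(v)$ holds at $t_{i+1}$.

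The only subtle point, and the part I expect to require the most care, is ruling out that in the two ``conflict'' subcases some node of $N(v)$ re-enters the $\mathrm{IS}$ at $t_{i+1}$ and recreates a conflict at $v$; this is exactly where synchronicity and the assumption that nothing in $\{v\}\cup N(v)$ is pending at $t_i$ get used. It is also the reason the statement must range over the pair $t_i,t_{i+1}$ rather than a single round: in subcase (b) neither alternative need hold at $t_i$, but $pending(v)$ is forced at $t_{i+1}$. Degenerate low-degree situations, such as an isolated $\mathrm{OUT}$ vertex, need no extra argument since $pending(\cdot)$ is then vacuously satisfied.
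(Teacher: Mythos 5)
Your proof is correct and follows essentially the same route as the paper's: a case split on whether $v$ is a conflict-free cluster-head, has a conflict, or is OUT with only conflicted IN neighbours, combined with the observations that all IN neighbours of a conflicted node also fire $R_2$ and that an OUT neighbour of an IN node is not pending and so cannot fire $R_1$. Your version merely makes the paper's implicit steps (your two "facts" and the explicit dispatch of the already-pending case at $t_i$) more rigorous.
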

\begin{proof}
	Suppose agent $v$ is not stabilized in round $t$. Agent $v$ therefore is either pending, has a conflict, or is OUT and not pending but all of its IN neighbors have conflicts. \red{If $v$ is pending, according to the second case, the lemma holds self-evidently.} If it has a conflict, $v$ and its IN neighbors leave the IS and are thus pending in round $t+1$. Note that none of the OUT neighbors of an agent that has a conflict enters the IS. Finally, if $v$ has a neighbor $w\in N(v)$ that has a conflict, $w$ and its IN neighbors leave the IS and thus $w$ is pending in round $t+1$.   
\end{proof}

Suppose an undirected graph $G(V,E)$. Recall that whenever an agent $v$ becomes a cluster-head without arising a conflict, $v$ and its neighbors will not be enabled anymore. It can be easily verified that these agents no longer affect the other agents such that we can remove $v \cup N(v)$ and every edge adjusted to them from graph $G$, resulting a new graph $G'(V',E')$.
Let $X_{v}$ be a r.v. that takes value 1 if agent $v$ is removed and 0 otherwise,
$$
X_{v}= \begin{cases}1, & \text { if } v \in V \text{ is removed}, \\ 0, & \text { otherwise. }\end{cases}
$$

We define r.v. $X=\sum_{v \in V} X_{v}$ as the number of agents that are removed. Next, we prove that $\mathbb{E}[X]>0$ if the maximum degree is upper bounded as $|V|$ goes to infinity.

\begin{lemma}
	\label{lemmaVT}
	$\mathbb{E}[X] \geq \sigma|V|$, where $\sigma > c > 0$, if $G$ is a bounded degree graph.
	
\end{lemma}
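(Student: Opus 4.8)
The plan is to combine a domination (covering) bound with a one-round coin-flipping estimate, using \Cref{lemmaVTmis1} to guarantee that a constant fraction of the agents are \emph{pending} over a short window. Let $\Delta$ denote the maximum degree of $G$, which is $O(1)$ by hypothesis, and let $\tilde{p}\in(0,1)$ be the common probability with which a pending agent enters the IS; the value $\tilde{p}=1$ is excluded, since the synchronous execution need not converge then. First I would fix a round $t$ and examine the two consecutive rounds $t$ and $t+1$. By \Cref{lemmaVTmis1}, every agent of $G$ that is not stabilized has, in its closed neighbourhood $N[v]:=\{v\}\cup N(v)$, some agent that is pending in round $t$ or in round $t+1$. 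Writing $R$ for the set of stabilized agents (each of which lies in $N[w]$ of a conflict-free cluster-head $w$, hence is removed, so $X_v=1$ for $v\in R$) and $P$ for the set of agents pending in round $t$ or $t+1$, this says $V\subseteq R\cup\bigcup_{w\in P}N[w]$, whence $|V|\le|R|+(\Delta+1)\,|P|$. Therefore either $|R|\ge|V|/2$, in which case already $X\ge|R|\ge|V|/2$, or $|P|\ge|V|/\bigl(2(\Delta+1)\bigr)$, and then one of the two rounds --- call it $s$, with pending set $P_s$ --- satisfies $|P_s|\ge|V|/\bigl(4(\Delta+1)\bigr)$.

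Next I would run the coin-flipping estimate on round $s$. Under the synchronous scheduler the only rule that moves an agent into the IS is $R_1$, which is enabled only at pending agents, and each pending agent independently sets $state:=\mathrm{IN}$ with probability $\tilde{p}$. Fix a pending agent $v$; all of its neighbours are then $\mathrm{OUT}$ and at most $\Delta$ of them are pending, so letting $Y_v$ indicate the event that $v$ turns $\mathrm{IN}$ while none of its neighbours does,
\[
\Pr[Y_v=1]\ \ge\ \tilde{p}\,(1-\tilde{p})^{\Delta}\ =:\ \rho\ >\ 0 .
\]
When $Y_v=1$, agent $v$ is a cluster-head with no $\mathrm{IN}$ neighbour, i.e.\ a conflict-free cluster-head, so $v$ and $N(v)$ are removed for good and in particular $X_v=1$; since distinct successful agents are distinct vertices, $X\ \ge\ \sum_{v\in P_s}Y_v$. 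Taking expectations and combining with the previous paragraph (using $\rho\le 1$ in the case $|R|\ge|V|/2$) gives
\[
\mathbb{E}[X]\ \ge\ \rho\,|P_s|\ \ge\ \frac{\tilde{p}\,(1-\tilde{p})^{\Delta}}{4(\Delta+1)}\,|V|\ =:\ \sigma\,|V| ,
\]
and $\sigma$ depends only on $\tilde{p}$ and $\Delta$, hence is a positive constant $\sigma>c>0$ independent of $|V|$.

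The hard part will be the bookkeeping needed to make the two-round argument fully rigorous, because both $R$ and $P$ --- and hence which of rounds $t$ and $t+1$ is the ``heavy'' one --- are random, being determined by the coin flips of round $t$. I would handle this by conditioning on the configuration at the start of round $t$: then $P_t$ and the set of agents already in a stable local pattern are deterministic, and I would split into the case where round $t$ itself already exhibits $\Omega(|V|)$ pending agents (apply the estimate to round $t$) and the complementary case, in which \Cref{lemmaVTmis1} forces $\Omega(|V|)$ pending agents in round $t+1$ for \emph{every} outcome of round $t$ (apply the estimate to round $t+1$ and then average over round $t$). A minor point to verify is that the events $\{Y_v=1\}$ need not be disjoint across $v$ --- harmless, since each success is charged only to the cluster-head $v$ itself, and those are pairwise distinct.
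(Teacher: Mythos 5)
Your proof is correct, and it rests on the same two ingredients as the paper's: \Cref{lemmaVTmis1}, which places a pending agent in the closed neighbourhood of every unstabilized agent over two consecutive rounds, and the solo-success estimate $\tilde{p}(1-\tilde{p})^{\Delta}$ for a pending agent to enter the IS while none of its neighbours does. Where you differ is in the bookkeeping that converts these into a linear lower bound on $\mathbb{E}[X]$. The paper argues per agent: for every $v\in V$ the pending witness $w\in\{v\}\cup N(v)$ succeeds solo with probability at least $\tilde{p}(1-\tilde{p})^{\Delta}$, and when it does all of $N[w]\ni v$ is removed, so $\Pr[X_v=1]\ge \tilde{p}(1-\tilde{p})^{\Delta}/2$ and linearity of expectation gives $\sigma=\tilde{p}(1-\tilde{p})^{\Delta}/2$ directly. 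You instead pass through a domination bound ($|V|\le|R|+(\Delta+1)|P|$), split into the cases ``many agents already removed'' and ``many agents pending'', and credit each solo success only to the cluster-head itself rather than to its whole closed neighbourhood; this costs you an extra factor of order $\Delta$ in $\sigma$, which is immaterial for the lemma since $\Delta=O(1)$. What your route buys is exactly the point you flag at the end: the paper's ``divide by two because the lemma spans two consecutive rounds'' step quietly averages over which of the two rounds carries the pending witnesses, whereas your case split --- condition on the configuration entering round $t$, then apply the coin-flipping estimate to whichever round provably contains $\Omega(|V|)$ pending agents --- makes that step rigorous. In short, you trade a constant for a cleaner treatment of the randomness in the two-round window; both arguments establish the lemma.
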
 

\begin{proof}
According to \Cref{lemmaVTmis1}, in every two consecutive rounds, for each agent $v\in V$ that is not stabilized yet, there is at least one pending agent $w \in v \cup N(v)$. If only $w$ and not its neighbors enter the IS, they (including $v$) never will be enabled. Let $d_w$ denotes the degree of $w$. The probability that agent $w$ enters the IS without a conflict is lower bounded by $\tilde{p} (1-\tilde{p})^{d_w}$. This is an inequality because not all neighbors of $w$ are necessarily pending. Let $G$ be a bounded degree graph with maximum degree $d$ (as $|V|$ goes to infinity), and let $p_E$ denote the minimum probability that an agent successfully enters the IS and stabilizes, then we have
$$\tilde{p} (1-\tilde{p})^{d} \leq p_E$$
Hence, in each round, every agent $v \in V$ stabilizes with a probability greater than ${p_E/2}$, which yields $\mathbb{E}[X] \geq \sigma|V|$, where constant $\sigma={\tilde{p} (1-\tilde{p})^{d}/2}$. Note that we divide $p_E$ by two because \Cref{lemmaVTmis1} holds for two consecutive rounds. 


\end{proof}

We aim to prove that the algorithm terminates in $O(\log n)$ rounds with a high probability.

Let $n_t$ denote the number of remaining agents at the beginning of round $t$. We define r.v. $Z_{t}$ as 
$$
Z_{t}= \begin{cases}1, & \text {if round $t$ removes at least $\sigma n_t$ agents,} \\ 0, & \text {otherwise.}\end{cases}
$$

Next, we find a lower bound for $\mathbb{E}\left[Z_{t}\right]$. Since $X$ satisfies $\Pr(X \leq a)=1$ for constant $n_t$, by applying the reverse Markov's inequality and following \Cref{lemmaDP}, we have
$$
\begin{aligned}
\mathbb{E}\left[Z_{t}\right]=&\Pr(Y > \sigma n_t/2) \\
& \geq \frac{\mathbb{E}[Y]-\sigma n_t/2}{n_t-\sigma n_t/2}\\&= \frac{\sigma n_t-\sigma n_t/2}{n_t-\sigma n_t/2}\\&=\frac{\sigma}{2-\sigma}
\end{aligned}
$$

Let $\chi$ denotes the number of rounds that $Z_t=1$. Then, after $T=(2-\sigma)c \log (n)/\sigma$ rounds, $\mathbb{E}\left[\chi=\sum_{t=1}^T Z_{t}\right] \geq c \log (n)$. Because the rounds are independent, we can apply Chernoff bound. Hence,
$$
\begin{aligned}
\Pr\left(\chi\leq(1-\delta) c \log \left(n\right)\right) & \leq e^{-\frac{\delta^{2} c \log n}{2}} \\
& \leq \frac{1}{n^{c'}}
\end{aligned}
$$

Therefore, $vt$MIS terminates in $O(\log n)$ rounds with a high probability. Next, we prove that the average-case move complexity is $O(n)$.

As we discussed earlier, when an agent $v$ joins the IS without arising a conflict, we can remove $v \cup N(v)$ and every edge adjusted to them from graph $G$. Let $V_t$ denote to the set of remaining agents in round $t$. Then, according to \Cref{lemmaVT}, the expectation of the number of remaining agents in the next round is $(1-\sigma)|V_t|$. Furthermore, each agent can make at most one move in each round but if it is stabilized (removed), it never makes a move. Let r.v. $M_t$ denote the total number of moves that agents have made until the end of round $t$. We have
\begin{equation}
\label{moveVTproof}
\begin{aligned}
M_t&\le\sum_{i=1}^{t}|V_i| \implies\\
\mathbb{E}[M_t]&\le \sum_{i=1}^{t}\mathbb{E}[|V_i|]\\
&\le \sum_{i=1}^{t}(1-\sigma)|V_i|\\
&\le n/\sigma
\end{aligned}
\end{equation}
Hence, the average-case move complexity is $O(n)$.

\subsection{$dt$MIS}

We analyze the time complexity of $dt$MIS under a synchronous scheduler. Similar to $vt$MIS, we assume that each agent enters the IS with \red{a probability
$\tilde{p}$.}
Regarding the seventh rule ($R_7$), if there is no conflict, we assume that each cluster-head $v$ leaves the IS with a probability $$\tilde{q}=\begin{cases}
0, & \text{if $\forall w\in N(v): hesitate(w) \vee \neg pending(w)$,}\\
\epsilon>0, & \text{otherwise.}
\end{cases}$$
Therefore, unlike $\tilde{p}$, the value of $\tilde{q}$ is characterized depending on whether $v$ can profit from a 1-fault IN-to-OUT or not. Finally, in each round, any agent that has a conflict definitely leaves the IS.

\begin{lemma}
	\label{DTR7lemma}
	Under a synchronous scheduler, a cluster-head that has no conflict at most one time leaves the IS (executes $R_7$) and then it stabilizes in the next round.
\end{lemma}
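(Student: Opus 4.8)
The plan is to show that after $v$'s first $R_7$ move it can never again satisfy the guard $v.state=\mathrm{IN}\wedge\neg conflict(v)$ of $R_7$, because the refined $hesitate$ predicate together with the $parents$ bookkeeping keeps it out of the IS. First I would record the local picture right after the move: suppose the conflict-free cluster-head $v$ executes $R_7$ at round $t$. Since $\neg conflict(v)$, every neighbor of $v$ is $\mathrm{OUT}$ at round $t$, and no neighbor is $pending$ at round $t$ (each has $v$ as an $\mathrm{IN}$ neighbor), so no neighbor executes $R_1$ in round $t$. Hence at round $t+1$ we have $v.state=\mathrm{OUT}$, $pending(v)$ true, and the configuration restricted to $v\cup N(v)$ is exactly the hypothetical ``post-departure'' configuration that defines $\tilde q$.

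Next I would exploit $\tilde q>0$: since the move happened with positive probability, it is not the case that every $w\in N(v)$ satisfies $hesitate(w)\vee\neg pending(w)$ in that configuration, so some neighbor $w^{\star}$ has $pending(w^{\star})$ true and $hesitate(w^{\star})$ false, i.e.\ $w^{\star}$ has $R_1$ enabled at round $t+1$. The heart of the argument is the companion claim that in the very same configuration $hesitate(v)$ is \emph{true}, so $R_1$ is disabled at $v$. I would derive this from the Boolean form of the refined $hesitate$ and from $R_3$--$R_6$: the presence of a neighbor poised to enter the IS, together with the $parents$ lists being consistent at the instant $R_7$ fires (any stale membership would have enabled $R_5$ at $v$ or $R_3$ at a neighbor), is exactly the pattern that makes $hesitate(v)$ fire. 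This mirrors the use of $hesitate$ in \Cref{le4} for the IN-to-OUT $1$-fault, the new ingredient being that $\tilde q>0$ has already excluded the complementary case in which $v$ itself is supposed to recover.

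Granting the companion claim the conclusion is mechanical. At round $t+1$ the synchronous scheduler lets $w^{\star}$ (and any other non-hesitating pending neighbor) execute $R_1$ while $v$, blocked by $hesitate(v)$, does not; from round $t+2$ onward $v$ has an $\mathrm{IN}$ neighbor, so $pending(v)$ is false and $v$ is never again enabled to enter the IS. Thus $v.state$ never returns to $\mathrm{IN}$, $R_7$ cannot be applied to $v$ a second time, and --- apart from the at-most-once $R_3/R_4$ parent-maintenance moves triggered by the neighbor's entry --- $v$ is stable by the round following its departure.

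I expect the main obstacle to be exactly the companion claim: verifying, from the exact form of the refined $hesitate$ predicate and of $R_3$--$R_6$, that whenever $\tilde q>0$ the $parents$ bookkeeping sits in the state that forces $hesitate(v)$ true and $hesitate(w^{\star})$ false in the post-departure configuration. A secondary wrinkle is the synchronous case where several neighbors of $v$ are simultaneously non-hesitating and all enter, producing a transient conflict cleared by $R_2$; one must check that $v$ still never re-acquires $\mathrm{IN}$ through such a cascade, which is also where the literal phrase ``in the next round'' may have to be read as ``within a bounded number of rounds.''
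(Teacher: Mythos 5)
Your argument inverts the mechanism the algorithm actually implements, and its central ``companion claim'' is false. You assert that in the post-departure configuration $hesitate(v)$ is true (blocking $v$) while some neighbor $w^{\star}$ has $R_1$ enabled, so that $w^{\star}$ enters the IS and $v$ never returns to IN. But $hesitate(v)$ requires $P(v)\neq\emptyset$ as its first conjunct, and a conflict-free cluster-head has had its $parents$ set cleared by $R_5$; so immediately after $v$ executes $R_7$ we have $P(v)=\emptyset$ and $hesitate(v)$ is \emph{false}. Conversely, each pending neighbor $w$ of $v$ has (after executing $R_4$) $v\in P(w)$, all of $v$'s other neighbors are OUT with $v$ in their $parents$ sets, and $w\notin P(v)$, which is precisely the pattern that makes $hesitate(w)$ \emph{true}. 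So the roles are exactly reversed: the neighbors hesitate and only $v$ has $R_1$ enabled --- this is the same zero-contamination recovery as in \Cref{le4}. The paper's proof uses this to conclude that $v$ re-enters the IS in the following round and is thereafter stable ($\tilde q=0$ once every neighbor holds $v$ in its $parents$ set), which is what ``stabilizes in the next round'' means; your conclusion that $v.state$ never returns to IN contradicts both the algorithm's design and the intended reading of the lemma.

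A secondary flaw is your use of $\tilde q>0$. The condition defining $\tilde q$ is evaluated when $v$, still a cluster-head, decides whether to fire $R_7$; under the synchronous scheduler the neighbors execute $R_4$ in that same round. Hence the witness $w^{\star}$ you extract (pending and non-hesitating \emph{before} the move) will typically satisfy $hesitate(w^{\star})$ in the post-departure configuration, so you cannot carry it forward as an agent with $R_1$ enabled at round $t+1$. The correct argument (the paper's) is much shorter: the only round in which $\tilde q$ can be positive is the one immediately after $v$ entered the IS, before $R_4$ has propagated; if $v$ leaves then, the simultaneously updated $parents$ sets force every pending neighbor to hesitate, $v$ alone re-executes $R_1$, and from then on $\tilde q=0$, so $R_7$ fires at most once.
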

\begin{proof}
Assume a pending agent $v$. Considering the rules of $dt$MIS, if $v$ enters the IS without arising a conflict in round $t$, its neighbors put $v$ in their parent set (Rule 4) in round $t+1$. Then, in the next rounds, according to \Cref{subsec:approach:perturbation}, $v$ will never perturb ($q=0$). Note that $v$ may leave the IS in round $t+1$ but only $v$ and not its neighbors will make a move in the next round and return the IS again. 
\end{proof}

According to \Cref{DTR7lemma}, this once execution of $R_7$ has no effect on the other agents and thus it can increase the total number of rounds at most one (last round). Similarly, the number of moves increases at most $n$. The rest of analysis of $dt$MIS is the same as the analysis of $vt$, which implies that move and round complexities of $dt$MIS are $O(\log n)$ and $O(n)$, respectively.

\subsection{$vp$MIS}
\label{subsub:ComplexityvpMIS}


Without loss of generality, we assume that the distributed scheduler evaluates all of the enabled agents in each round and that the probability dedicated to the first rule (i.e., $p_c$) is one. We later remove these assumptions.

Assume an agent $v$ which is pending or has a conflict. If $v$ has a conflict, its conflict is resolved after at most one round and then $v$ never has a conflict again. Similarly, if $v$ is pending and is enabled by the first rule, it enters the IS and then its state remains unchanged. Finally, if $v$ is pending but it is not enabled, it has at least one pending neighbor (e.g., $w\in N(v)$) with a smaller $id$. Note that $w$ may not be enabled for the same reason. This sequence of agents, which are pending but not enabled, is at most $D-1$. Therefore, the final state of $v$ is determined after at most $D-1$ rounds.

Now, we consider that the distributed scheduler arbitrarily selects each agent with a probability $p_S>0$. Assume an agent $v$ whose first rule is enabled and has at least one neighbor with a smaller $id$ than it. Since probability $p_c$ corresponds to the uncertainty of $v$ about the permanency of the dead-end, we set $p_c=p_S$ so that in average $v$ waits $1/p_S$ rounds. Keep in mind that since the first rule is enabled, no neighbor of $v$ that has a smaller $id$ is pending, and the dead-end is temporary only if at least one of these neighbors (e.g., $w\in N(v)$) becomes pending in the next rounds. This happens only if the second rule is enabled in any neighbor of $w$ that is in the IS ($e.g., u\in N(w)$) due to a conflict with another agent in the IS. It takes in average $1/p_S$ rounds that the scheduler evaluates $u$ and thus $v$ randomizes its first rule with probability $p_S$.  


As we discussed earlier, no agent is enabled after at most $D$ rounds under the assumption that $p_c=1$ and $p_S=1$. Now, considering that the distributed scheduler evaluates each agent with probability $p_S$ in range $(0,1]$ and that each enabled agent enters the IS with probability $p_c=p_S$, the expectation of the number of rounds before convergence increases to $D/p_S^2$. Nevertheless, given that constant $p_S$ is independent of the number of agents, the round complexity of $vp$MIS is $O(D)$.

Note that if an agent $v$ enters the IS, it never executes an action again, and if it exists the IS, it never has a conflict again and thus the only action that it can execute in the next rounds is to enter the IS. Therefore, the number of moves that each agent may execute during the convergence is at most two. Hence, the move complexity of $vp$MIS is $O(n)$

\subsection{$dp$MIS}

Here, we first assume that the distributed scheduler selects all the agents in each round. Then, we will relax this assumption such that the scheduler selects each agent with a given probability. Suppose an undirected graph $G(V,E)$. In each round of algorithm $dp$MIS, any agent $v \in V$ whose identifier is less than the identifiers of all of its neighbors enters the IS (Rule 1) and its neighbors leave the IS (Rule 2). Agent $v$ and its neighbors then never will be enabled in the next rounds because none of their rules can be enabled anymore. Besides, because for each $w\in N(v)$, there is a neighbor with less identifier (i.e., $v$), it no longer matters whether or not the identifier of $w$ is less than its neighbors. Therefore, we can update sets $V$ and $E$ for the next round in the same way as lines 9-12 in \Cref{cl:al1}.

Note that an agent $v$ enters the IS when it has the smallest $id$ in $v \cup N(v)$. Therefore, it joints with probability $\frac{1}{d_{v}+1}$, where $d_v$ denotes the degree of $v$. 


Let $Y_{v u}$ be a r.v. that takes value 1 if edge $vu$ is removed and 0 otherwise,
$$
Y_{v u}= \begin{cases}1, & \text { if } v u \in E \text{ is removed}, \\ 0, & \text { otherwise. }\end{cases}
$$

We define r.v. $Y=\sum_{v u \in E} Y_{v u}/2$ as the number of edges that are removed.

\begin{lemma}
	\label{lemmaDP}
	$2 \mathbb{E}[Y] \geq|E|$.
\end{lemma}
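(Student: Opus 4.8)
The plan is to recognize the quantity $Y$ as the number of edges of the current graph $G(V,E)$ removed during one synchronous round of $dp$MIS, and the statement $2\mathbb{E}[Y]\ge |E|$ as an instance of Luby's classical ``expected edge removal'' estimate for the random-priority maximal independent set algorithm. The randomness here is over the identifiers: since the $\mathrm{id}$s induce a uniformly random linear order on $V$, a vertex $v$ enters the independent set in this round exactly when its $\mathrm{id}$ is smallest in $\{v\}\cup N(v)$, which happens with probability $1/(d_v+1)$, and an edge $\{u,v\}$ is removed exactly when some vertex of $N[u]\cup N[v]$ enters the set. By linearity, $\mathbb{E}[Y]=\sum_{e\in E}\Pr[e\text{ is removed}]$, so it suffices to show $\sum_{e\in E}\Pr[e\text{ is removed}]\ge |E|/2$.

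First I would dispatch the easy part: because two adjacent vertices cannot enter the set simultaneously, $\Pr[\{u,v\}\text{ removed}]\ge\Pr[u\text{ enters}]+\Pr[v\text{ enters}]=\frac{1}{d_u+1}+\frac{1}{d_v+1}$, which already gives $\mathbb{E}[Y]\ge\sum_{v\in V}\frac{d_v}{d_v+1}$ and hence the bound for graphs whose neighbourhoods are sparse; but this edge-by-edge estimate is provably too weak in general (e.g.\ when two high-degree adjacent vertices have only higher-degree neighbours, the probability that a given incident edge is removed can be well below $1/2$, even though globally a constant fraction of edges still disappears). To cover dense neighbourhoods I would bring in the ``good vertex'' device: call $v$ \emph{good} if a constant fraction of its neighbours have degree at most $d_v$, and call an edge good if one of its endpoints is good. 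The two supporting facts to establish are (i) a constant fraction — ideally at least half — of the edges of $G$ are good, proved by orienting each edge joining two bad vertices toward its higher-degree endpoint and charging, using that a bad vertex sends most of its incident edges to strictly higher-degree neighbours; and (ii) a good vertex is removed (enters the set, or has a neighbour that does) with probability at least an absolute constant, obtained from a union-bound/inclusion--exclusion estimate over its $\Theta(d_v)$ low-degree neighbours, each of which is minimal in its own closed neighbourhood with probability $\ge 1/(d_v+1)$. Composing (i) and (ii), at least a constant fraction of edges are removed with at least constant probability; combining this with the sparse-neighbourhood bound and tracking the constants yields $\sum_{e}\Pr[e\text{ removed}]\ge |E|/2$, with the extremal cases (cliques, stars) handled directly by noting that the entering vertex of minimum $\mathrm{id}$ dominates its whole neighbourhood, so essentially every incident edge is removed.

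The hard part will be step (ii) and the attendant bookkeeping: the union bound over a good vertex's low-degree neighbours must be corrected for positive correlation (at most one of them can be minimal, the minimality events of overlapping closed neighbourhoods are dependent, etc.) so that, after multiplication by the ``fraction of good edges'' factor from (i), the resulting constant actually reaches the clean value $1/2$ claimed in the statement rather than some smaller absolute constant; if only a weaker constant is obtainable, one would instead carry $\mathbb{E}[Y]\ge c|E|$, which still suffices for the $O(\log n)$ round count used later for $dp$MIS. Two minor points also need attention: $Y$ is written as a sum of $Y_{vu}$ over incident pairs divided by two, so each undirected edge must be counted exactly once; and the random order is over the \emph{current} vertex set only, so the estimate is reapplied to the residual graph at the start of every round.
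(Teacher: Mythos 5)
You have correctly identified the setting --- one synchronous round of $dp$MIS is the permutation (random-priority) step of Luby's algorithm, a vertex enters the IS exactly when its id is minimal in its closed neighbourhood, and the lemma is the classical ``half the edges are removed in expectation'' bound --- but the route you propose does not prove the inequality as stated. The good-vertex/good-edge machinery is the standard tool for the \emph{independent-marking} variant of Luby's algorithm, and it only yields $\mathbb{E}[Y]\ge c|E|$ for some small absolute constant $c$: the union bound over a good vertex's low-degree neighbours, once corrected for the dependence you mention, loses constant factors that this route cannot realistically recover to reach exactly $1/2$. You acknowledge this yourself and fall back on ``a weaker constant suffices downstream,'' which is true for the $O(\log n)$ round count but is not the lemma; the statement to be proved is $2\mathbb{E}[Y]\ge|E|$ on the nose. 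Your ``easy part'' is likewise strictly weaker than needed, since $\sum_{uv\in E}\bigl(\tfrac{1}{d_u+1}+\tfrac{1}{d_v+1}\bigr)$ can fall well below $|E|/2$ on dense graphs --- which is exactly why you reached for good vertices in the first place.

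The paper obtains the exact constant by a different, shorter charging argument that you should adopt. For each ordered adjacent pair $(v,u)$ let $X_{vu}$ indicate that $v$'s id is smaller than that of every $w\in N(v)\cup N(u)\setminus\{v\}$; since this set has at most $d_v+d_u-1$ elements, $\mathbb{E}[X_{vu}]\ge 1/(d_v+d_u)$. The event $X_{vu}=1$ forces $v$ into the IS and therefore removes \emph{all} $d_u$ edges incident to $u$, and for a fixed $u$ at most one neighbour $v$ can realize $X_{vu}=1$ (two such neighbours would each need the smaller id). Hence $\sum_{v\in N(u)}d_u X_{vu}\le\sum_{v\in N(u)}Y_{vu}$, and summing over $u$ gives $\sum d_u X_{vu}\le 2Y$, so that $2\mathbb{E}[Y]\ge\sum_{uv\in E}\bigl(\tfrac{d_u}{d_u+d_v}+\tfrac{d_v}{d_u+d_v}\bigr)=|E|$. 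The idea you were missing is not to bound the probability that an individual edge is removed, but to credit each ``local minimum'' event with the full degree's worth of edges it destroys at the dominated endpoint; the degree weights then telescope to exactly $1$ per edge, which is where the factor $2$ comes from.
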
 

\begin{proof}
	We define r.v. $X_{v u}$ as
	$$
	X_{v u}= \begin{cases}1, & \text { if } \forall w \in N(v) \cup N(u) - \{v\}, \ v.id < w.id, \\ 0, & \text { otherwise, }\end{cases}
	$$
	where $v u \in E$.
	
	Note that if $X_{v u}=1$, then $v.state=\mathrm{IN}$ and $u.state=\mathrm{OUT}$. The probability that $v$ has the smallest $id$ among $d(v)+d(u)$ agents is $1 /\left(d_{v}+d_{u}\right)$. Therefore, given that $v$ and $u$ may share agents, we have
		\begin{equation}
	\label{eq:proofAnDp1}\mathbb{E}\left[X_{v u}\right]
	\geq 1 /\left(d_{v}+d_{u}\right).\end{equation}
	
	Now, assume an agent $u \in V$. Note that at most one $X_{v u}=1$, and $\forall v \in N(u)$, if $X_{v u}=1$, then $\forall w \in N(u): Y_{w u}=1$. Thus,
	$$
	\sum_{v \in N(u)} d_{u} X_{v u} \leq \sum_{v \in N(u)} Y_{v u}
	$$
	which yields
	\begin{equation}
	\label{eq:proofAnDp2}
		\sum_{v u \in E} d_{u} X_{v u} \leq \sum_{vu \in E} Y_{v u}=2 Y.
	\end{equation}
	
	Following \eqref{eq:proofAnDp1} and \eqref{eq:proofAnDp2}, we have 
	$$
	\begin{aligned}
	2 \mathbb{E}[Y] & \geq \mathbb{E}\left[\sum_{v u \in E} d_{u} X_{v u}+d_{v} X_{u v}\right] \\
	&=\sum_{v u \in E} d_{u} \mathbb{E}\left[X_{v u}\right]+d_{v} \mathbb{E}\left[X_{u v}\right]\\
	&\geq \sum_{v u \in E} \frac{d_{u}}{d_{v}+d_{u}}+\frac{d_{v}}{d_{v}+d_{u}} \\
	&=\sum_{v u \in E} 1 \\
	&=|E|.
	\end{aligned}
	$$
\end{proof}

Let $m_t$ denote the number of remaining edges at the beginning of round $t$. Then, we define random variable $Z_{t}$ as 
$$
Z_{t}= \begin{cases}1, & \text {if round $t$ removes at least $m_t/3$ edges,} \\ 0, & \text { otherwise.}\end{cases}
$$

Next, we find a lower bound for $\mathbb{E}\left[Z_{t}\right]$. Since $Y$ satisfies $\Pr(Y \leq a)=1$ for constant $m$, by applying the reverse Markov's inequality and following \Cref{lemmaDP}, we have
$$
\begin{aligned}
\mathbb{E}\left[Z_{t}\right]=&\Pr(Y > \frac{m}{3}) \\
& \geq \frac{\mathbb{E}[Y]-\frac{m}{3}}{m-\frac{m}{3}}\\&= \frac{\frac{m}{2}-\frac{m}{3}}{m-\frac{m}{3}}\\&=\frac{1}{4}
\end{aligned}
$$

Given that the distributed scheduler arbitrarily selects each agent with a probability $p_S>0$, $\mathbb{E}\left[Z_{t}\right] \ge \frac{p_S}{4}$ instead.

Let $\chi$ denotes the number of rounds that $Z_t=1$ is satisfied. Then, after $T=4 c \log (n)/p_S$ rounds, $\mathbb{E}\left[\chi=\sum_{t=1}^T Z_{t}\right] \geq c \log (n)$. Because identifiers are randomly assigned to agents, the rounds are independent and thus we can apply Chernoff bound. Hence,
$$
\begin{aligned}
\Pr\left(\chi\leq(1-\delta) c \log \left(n\right)\right) & \leq e^{-\frac{\delta^{2} c \log n}{2}} \\
& \leq \frac{1}{n^{c'}}
\end{aligned}
$$

Therefore, $dp$MIS terminates in $O(\log n)$ rounds with a high probability. Note that the worst-case round complexity is $O(n)$, which is obtained if in every round, the number of agents that join the IS is upper bounded by a constant number (see \Cref{fig:analysis_1}). Observe that in each round, at least one agent joins the IS, which is the agent with the smallest identifier.

\begin{figure}[!h]
	\centering
	\includegraphics[width=0.4\textwidth]{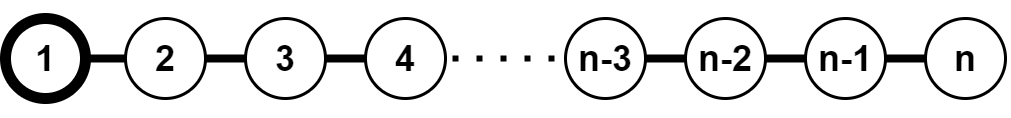}
	\captionof{figure}{An example of the worst-case round complexity where $dp$MIS after $n-1$ rounds terminates.}
	\label{fig:analysis_1}
\end{figure}

The average-case move complexity of $dp$MIS is $O(n)$. The proof is similar to \eqref{moveVTproof}. Note that the worst-case move complexity is $O(n^2)$ (see \Cref{fig:analysis_2}).

\begin{figure}[!h]
	\centering
	\includegraphics[width=0.4\textwidth]{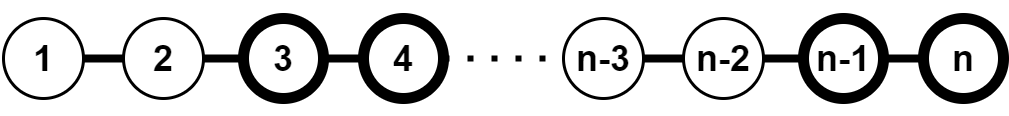}
	\captionof{figure}{An example of the worst-case move complexity where $dp$MIS after $n^2$ moves terminates.}
	\label{fig:analysis_2}
\end{figure}

\section{Evaluation Results}
\label{section:results}

In this section, we examine the fairness, time complexity, the ability to contain faults, defeat deviations during convergence and deal with selfish agents, and the outcomes of $b$MIS, the five algorithms proposed in \Cref{section:casestudy} ($pf$MIS, $vt$MIS, $dt$MIS, $vp$MIS, $dp$MIS), $t$MIS \cite{turau_linear_2007}, and $a$MIS \cite{arapoglu2019asynchronous} through simulation. 
We include algorithms $t$MIS and $a$MIS because they have linear time complexities. Similar to $vp$MIS and $dp$MIS, both $t$MIS and $a$MIS work under an unfair distributed scheduler.
In $t$MIS, a new state value WAIT is devised to prevent nodes from simultaneously entering the IS, so that among the WAIT nodes, a node with the smallest ID enters.
Unlike $t$MIS, $a$MIS uses two-hop information and stabilizes after at most $n-1$ moves where $n$ is the number of nodes.

\subsection{Experimental Setup}

We consider several scenarios involving different density networks. Simulations are run with MATLAB and each scenario is repeated at least 10000 times, i.e., 100 random initial configurations of 100 random scale-free connected undirected graphs generated using the preferential attachment graph model \cite{barabasi1999emergence}. We also employ the Erd\H{o}s-R\'enyi model \cite{erdhos_evolution_1960}, but do not report the results because they are similar to those obtained for the other graph model.

We run algorithms under a distributed randomized scheduler.  
We define \red{\textit{synchrony} as an indicator of how well the scheduler simultaneously selects agents, which is equivalent to}
the probability that the scheduler selects an agent to make a move during a round.

We use the gain function of \eqref{2_eq} with $\vartheta=10$ and $\zeta=1$. 
To simplify the model, we assume that each agent always knows its exact $2$-local state.
Under this assumption, the game models that we use for $vt$MIS and $dt$MIS are the stochastic Bayesian games given in Sections~$\ref{subsection:vtmis}$ and $\ref{subsection:dtmis}$, respectively. We solve these games using a
combination of the Bellman equation \cite{bellman1957dynamic} and the Bayesian Nash equilibrium. We assign 0.88 to the discounted factor parameter used in the calculations.

We set the value of $p$ in $vp$MIS to \red{synchrony} (see \Cref{subsub:ComplexityvpMIS}).


\subsection{Fairness Analysis}

We measure the fairness of algorithms using Jain's index \cite{jain1984quantitative}, which yields an index between 0 and 1. The fairness index is maximized when all agents receive the same profit. \Cref{table:ff} accounts for three different scenarios wherein we evaluate the fairness of algorithms.

\begin{table}
	\centering
	\scalebox{1}{\begin{tabular}{ |c||c|c|c|  }
			\hline
			scenario&\# of agents&avg. degree&\red{synchrony}\\
			\hline
			sparse &50&4&   0.7\\
			medium &50&6&   0.7\\
			dense&500& 24   &0.7\\
			\hline
	\end{tabular}}
	\caption{The specification of three scenarios}
	\label{table:ff}
\end{table}


Results are illustrated in \Cref{fig:fairness}. We observe that $vp$MIS, which selects cluster-heads primarily based on the number of neighbors, has the smallest fairness index, followed by  $dp$MIS, $a$MIS, and $t$MIS. 
The reason is that in these algorithms, some agents are always more likely to become cluster-heads (e.g. ones that have smaller identifiers) and thus these algorithms are more unfair to such agents.
We also notice that in denser networks, the fairness index is lower.
Given that an agent with a large degree is less likely to become a cluster-head than one with a small degree, as the network becomes denser, the difference between the degrees increases and thus the fairness index decreases.


\begin{figure}[!h]
	\centering
	\includegraphics[width=0.4\textwidth]{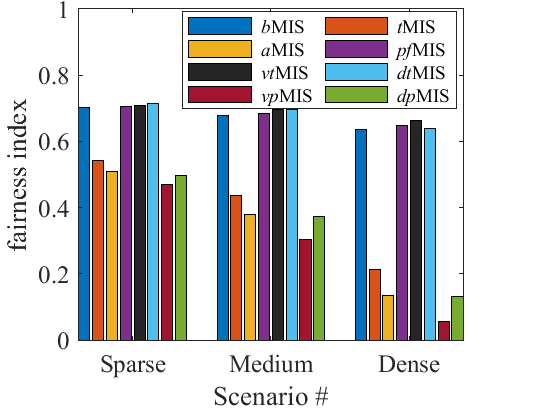}
	\captionof{figure}{Fairness of the algorithms in three scenarios.}
	\label{fig:fairness}
\end{figure}

\subsection{Time Complexity Analysis}

Next, we report results regarding time complexity in terms of moves and rounds.

Figures \ref{fig:rm}(a) and \ref{fig:rm}(b) show the average number of moves and rounds, respectively, for the algorithms as we vary the number of agents from 10 to 200.



The results demonstrate that the algorithms that work under unfair distributed scheduler have the lowest number of of moves (i.e., state transitions). These algorithms limit the number of moves by limiting the enabled agents to the ones with smaller identifiers. Among the other algorithms, $vt$MIS has the smallest number of moves. This algorithm causes an agent to enter IS with a probability. As a result, the simultaneous entry of neighbors into the IS is reduced. This decreases the number of conflicts, which leads to faster convergence and fewer moves. In contrast, $dt$MIS exhibits the worst performance because it suffers from a rule that allows agents to exit an MIS while there is no conflict. We also observe that large degrees have an extremely adverse impact on the convergence time of $dt$MIS (not shown in the figure) because as the degree of a cluster-head increases, the competition among it and its neighbors increases and thus the probability that it selfishly exits the IS increases. 

Figures \ref{fig:rm2}(a) and \ref{fig:rm2}(b) assess the impact of \red{synchrony} on the performance of algorithms. As \red{synchrony} increases, the number of agents that execute actions in a round increases; therefore, the system converges faster to the MIS. In particular, the convergence time of $vp$MIS, which assigns \red{synchrony} to the probability of its first rule, sharply drops as \red{synchrony} increases. On the other hand, the number of conflicts due to simultaneous neighbor actions increases and thus the number of moves increases, which especially affects $b$MIS and $pf$MIS when \red{synchrony} is close to one. Note that these two algorithms cannot work with a synchronous scheduler.

\begin{figure}[!h]
	\centering
	\begin{tabular}{c}
		
	\includegraphics[width=0.4\textwidth]{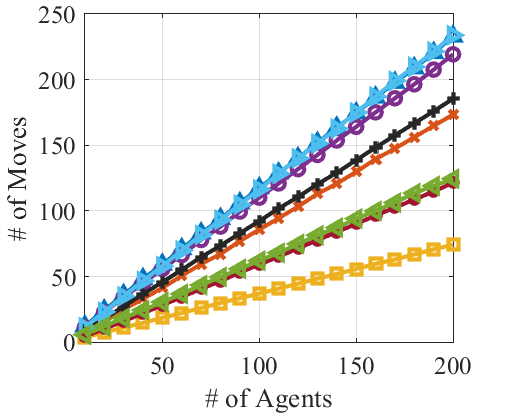} \\
	(a)\\
	\includegraphics[width=0.4\textwidth]{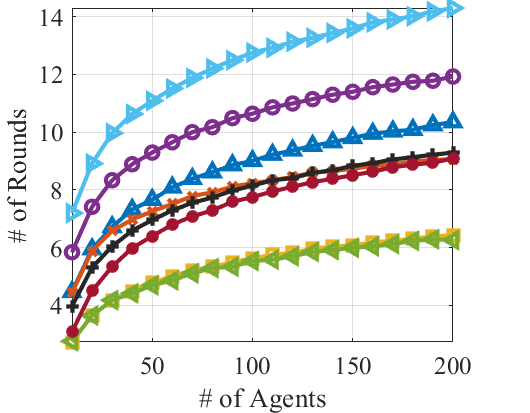} \\
	(b)\\
\includegraphics[width=0.45\textwidth]{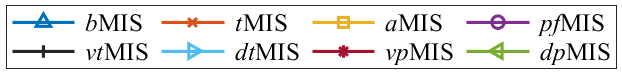}
\end{tabular}
	\caption{Avg. \# of $(a)$ moves and $(b)$ rounds versus \# of agents when \red{synchrony} is 0.7 and avg. degree is 8.}
\label{fig:rm}
\end{figure}


\begin{figure}[!h]
		\centering
	\begin{tabular}{c}
		\includegraphics[width=0.4\textwidth]{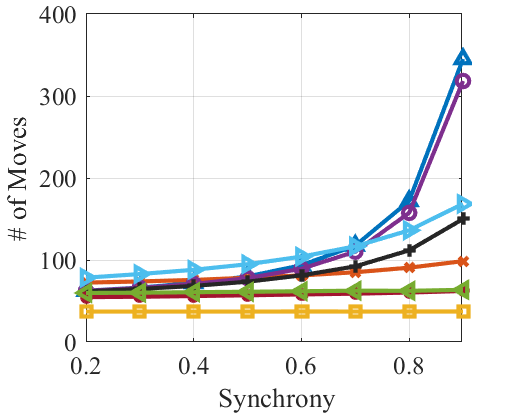}\\
		(a)\\
		\includegraphics[width=0.4\textwidth]{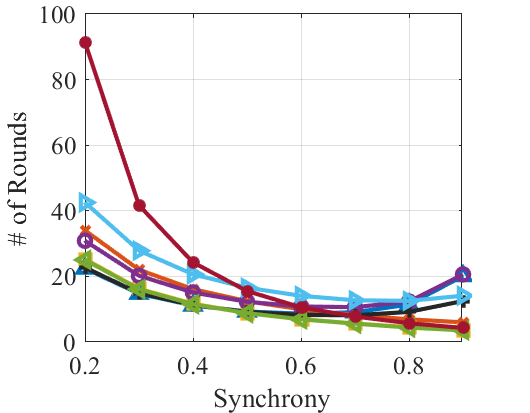}\\
		(b)\\
		\includegraphics[width=0.45\textwidth]{plots/new/legend.png}
	\end{tabular}
	\caption{Avg. \# of $(a)$ moves and $(b)$ rounds versus \red{synchrony} when \# of agents is 100 and avg. degree is 8.}
	\label{fig:rm2}
\end{figure}

\subsection{Fault-Containment Analysis}

Next, we assess the impact of a state change of a cluster-head from IN to OUT in a legitimate configuration (i.e., an IN-to-OUT 1-fault ) on the system.
Table \ref{table:1f} presents the number of moves and rounds needed to return to a legitimate configuration and the 1-fault success rate (i.e., the probability that the state of the faulty agent remains OUT after re-convergence) for the algorithms.

 We observe that $pf$MIS and $dt$MIS, which are meant to handle an IN-to-OUT 1-fault  in the faulty agent without propagating the fault to other agents, return to legitimate configurations in only one move. Also, the average numbers of moves that $a$MIS and $vp$MIS make to recover from an IN-to-OUT 1-fault  are almost one because the faulty agent usually enters a dead-end situation. Furthermore, despite the similarity of the rules in $b$MIS and $vt$MIS, $vt$MIS significantly reduces the error propagation in the network more than $b$MIS. Moreover, same as $pf$MIS and $dt$MIS, $dp$MIS yields a zero success rate because it has only one legitimate configuration. 

\begin{table*}
\centering
\begin{tabular}{ c|c|c|c|c|c|c|c|c|}
\cline{2-9}
&$b$MIS&   $t$MIS&  $a$MIS& $pf$MIS& $vt$MIS& $dt$MIS& $vp$MIS& $dp$MIS\\ 
\hline
\multicolumn{1}{ |c| }{  Avg. \# of moves }& 2.93 & 4.14 & 1.05 & 1.00 & 2.34 & 1.00 & 1.01 & 4.57 \\
\multicolumn{1}{ |c| }{Avg. \# of  rounds}& 2.02 & 3.15 & 1.28 & 1.25 & 1.87 & 1.25 & 1.53 & 2.28 \\
\multicolumn{1}{ |c| }{1-fault success rate}& 0.20 & 0.10 & 0.02 & 0.00 & 0.21 & 0.00 & 0.05 & 0.00\\
\hline 
\end{tabular}
\caption{The performance of system against an IN-to-OUT 1-fault  and the success rate of that perturbation when the number of agents is 40, average degree is 6, and \red{synchrony} is 0.8.}
\label{table:1f}
\end{table*}

\begin{table*}[]
	\centering
	\begin{tabular}{cc|c|c|c|c|c|c|c|c|c|}
		\cline{3-11}
		&  & $b$MIS & $t$MIS & $a$MIS & $pf$MIS & $vt$MIS & $dt$MIS & $vp$MIS & $vp$MIS$^*$ & $dp$MIS \\ \hline
		\multicolumn{1}{|c|}{Dense} & \# of attempts & 8951 & 13441 & 13458 & 8378 & 7785 & 8936 &7371& 7571 & 26341 \\[0.6ex]  \cline{2-2}
		\multicolumn{1}{|c|}{Network} &  succ. rate & 0.91 & 0.70 & 0.32 & 0.67 & 0.70 & 0.73 &0.11& 0.10 & 0.00 \\[0.3ex]  \hline\hline
		\multicolumn{1}{|c|}{\centering Sparse } & \# of attempts & 34350 & 37120 & 36220 & 30698 & 29905 & 28400 &26580& 26618 & 49853 \\[0.6ex] \cline{2-2} 
		\multicolumn{1}{|c|}{Network} &  succ. rate & 0.54 & 0.37 & 0.18 & 0.33 & 0.38 & 0.39 &0.04& 0.04 & 0.00 \\[0.3ex]  \hline
	\end{tabular}
\caption{\# of violations during convergence and the success rate of agents to change their fate by these violations.}
	\label{table:case1}
\end{table*}

\begin{table*}[]
	\centering
	\begin{tabular}{cc|c|c|c|c|c|c|c|c|c|}
		\cline{3-11}
		&  & $b$MIS & $t$MIS & $a$MIS & $pf$MIS & $vt$MIS & $dt$MIS & $vp$MIS & $dp$MIS & $R_7$ ($dt$MIS)\\ \hline
		\multicolumn{1}{|c|}{Dense} & \# of attempts & 1135 & 23660 & 38044 & 1135 & 1135 & 1045 & 13326 & 20607 & 51 \\[0.6ex]  \cline{2-2}
		\multicolumn{1}{|c|}{Network} &  succ. rate & 0.17 & 0.63 & 0.62 & 0.17 & 0.17 & 0.19 & 0.86 & 0.00 & 0.47 \\[0.3ex]  \hline\hline
		\multicolumn{1}{|c|}{\centering Sparse } & \# of attempts & 20013 & 50992 & 62899 & 20013 & 20013 & 19841 & 37550 & 42755 & 1195 \\[0.6ex] \cline{2-2} 
		\multicolumn{1}{|c|}{Network} &  succ. rate & 0.15 & 0.41 & 0.42 & 0.15 & 0.16 & 0.16 & 0.52 & 0.00 & 0.50 \\[0.3ex]  \hline
	\end{tabular}
\caption{\# of unauthorized actions (the seventh rule of $dt$MIS in the case of last column) during convergence and the success rate of agents to change their fate by these deviations.}
\label{table:case2}
\end{table*}

\begin{table*}
	\centering
	\begin{tabular}{ cc|c|c|c|c|c|c|c|c|c|  }
		\cline{3-10}
		&&\multicolumn{8}{ |c| }{Reliability}\\
		\cline{2-10}
		&\multicolumn{1}{ |c| }{Dev. type}& {$b$MIS}&    {$t$MIS}&  {$a$MIS}&  {$pf$MIS}&    {$vt$MIS} &  {$dt$MIS} &  {$vp$MIS} &  {$dp$MIS}\\
		\hline
		\multicolumn{1}{ |c  }{\multirow{5}{*}{\rotatebox[origin=c]{90}{Sparse}}}&\multicolumn{1}{ |c| }{No Deviation}& 1.00 & 1.00 & 1.00 & 1.00 & 1.00 & 1.00 & 1.00 & 1.00 \\
		\multicolumn{1}{ |c  }{}&\multicolumn{1}{ |c| }{Perturbation}& 0.00 & 0.00 & 0.00 & 1.00 & 0.00 & 1.00 & 0.00 & 1.00 \\
		\multicolumn{1}{ |c  }{}&\multicolumn{1}{ |c| }{Violation}& 0.00 & 0.05 & 0.18 & 0.00 & 1.00 & 1.00 & 1.00 & 1.00 \\
		\multicolumn{1}{ |c  }{}&\multicolumn{1}{ |c| }{Deflection}& 0.00 & 0.00 & 0.00 & 0.00 & 0.00 & 1.00 & 0.00 & 1.00 \\ 
		\hline 
		\hline 
		\multicolumn{1}{ |c  }{\multirow{5}{*}{\rotatebox[origin=c]{90}{Dense}}}&\multicolumn{1}{ |c| }{No Deviation}& 1.00 & 1.00 & 1.00 & 1.00 & 1.00 & 1.00 & 1.00 & 1.00 \\ 
		\multicolumn{1}{ |c  }{}&\multicolumn{1}{ |c| }{Perturbation}& 0.00 & 0.01 & 0.00 & 1.00 & 0.00 & 1.00 & 0.00 & 1.00 \\ 
		\multicolumn{1}{ |c  }{}&\multicolumn{1}{ |c| }{Violation}& 0.00 & 0.62 & 0.57 & 0.00 & 1.00 & 1.00 & 1.00 & 1.00 \\ 
		\multicolumn{1}{ |c  }{}&\multicolumn{1}{ |c| }{Deflection}& 0.00 & 0.00 & 0.00 & 0.00 & 0.00 & 1.00 & 0.00 & 1.00 \\ 
		\hline 
	\end{tabular}
\caption{Impact of deviation type on the reliability.}
\label{table:dev1}
\end{table*} 

\begin{table*}
	\centering
	\begin{tabular}{ cc|c|c|c|c|c|c|c|c|c|  }
		\cline{3-10}
		&&\multicolumn{8}{ |c| }{Average \# of Deviations}\\
		\cline{2-10}
		&\multicolumn{1}{ |c| }{Dev. type}& {$b$MIS}&    {$t$MIS}&  {$a$MIS}&  {$pf$MIS}&    {$vt$MIS} &  {$dt$MIS} &  {$vp$MIS} &  {$dp$MIS}\\
		\hline
		\multicolumn{1}{ |c  }{\multirow{5}{*}{\rotatebox[origin=c]{90}{Sparse}}}&\multicolumn{1}{ |c| }{No Deviation}& 0.00 & 0.00 & 0.00 & 0.00 & 0.00 & 0.00 & 0.00 & 0.00 \\
		\multicolumn{1}{ |c  }{}&\multicolumn{1}{ |c| }{Perturbation}& 462.24 & 269.37 & 360.69 & 0.00 & 449.97 & 0.00 & 495.56 & 0.00 \\
		\multicolumn{1}{ |c  }{}&\multicolumn{1}{ |c| }{Violation}& 3172.43 & 420.07 & 67.38 & 2959.24 & 0.00 & 0.00 & 0.00 & 0.00 \\
		\multicolumn{1}{ |c  }{}&\multicolumn{1}{ |c| }{Deflection}& 6688.98 & 1948.03 & 774.50 & 2968.43 & 458.40 & 0.00 & 516.94 & 0.00 \\
		\hline 
		\hline 
		\multicolumn{1}{ |c  }{\multirow{5}{*}{\rotatebox[origin=c]{90}{Dense}}}&\multicolumn{1}{ |c| }{No Deviation}& 0.00 & 0.00 & 0.00 & 0.00 & 0.00 & 0.00 & 0.00 & 0.00 \\
		\multicolumn{1}{ |c  }{}&\multicolumn{1}{ |c| }{Perturbation}& 444.21 & 151.92 & 266.86 & 0.00 & 341.26 & 0.00 & 116.12 & 0.00 \\
		\multicolumn{1}{ |c  }{}&\multicolumn{1}{ |c| }{Violation}& 4538.28 & 377.19 & 25.81 & 4791.34 & 0.00 & 0.00 & 0.00 & 0.00 \\
		\multicolumn{1}{ |c  }{}&\multicolumn{1}{ |c| }{Deflection}& 9159.00 & 3419.79 & 216.66 & 4778.70 & 343.07 & 0.00 & 113.96 & 0.00 \\
		\hline 
	\end{tabular}
\caption{Impact of deviation type on the average number of deviations.}
\label{table:dev2}
\end{table*} 

\begin{table*}
	\centering
	\begin{tabular}{ cc|c|c|c|c|c|c|c|c|c|  }
		\cline{3-10}
		&&\multicolumn{8}{ |c| }{Availability}\\
		\cline{2-10}
		&\multicolumn{1}{ |c| }{Dev. type}& {$b$MIS}&    {$t$MIS}&  {$a$MIS}&  {$pf$MIS}&    {$vt$MIS} &  {$dt$MIS} &  {$vp$MIS} &  {$dp$MIS}\\
		\hline
		\multicolumn{1}{ |c  }{\multirow{5}{*}{\rotatebox[origin=c]{90}{Sparse}}}&\multicolumn{1}{ |c| }{No Deviation}& 0.91 & 0.89 & 0.89 & 0.92 & 0.91 & 0.91 & 0.92 & 0.84 \\
		\multicolumn{1}{ |c  }{}&\multicolumn{1}{ |c| }{Perturbation}& 0.84 & 0.76 & 0.76 & 0.92 & 0.82 & 0.91 & 0.52 & 0.84 \\
		\multicolumn{1}{ |c  }{}&\multicolumn{1}{ |c| }{Violation}& 0.62 & 0.91 & 0.93 & 0.62 & 0.91 & 0.91 & 0.92 & 0.84 \\
		\multicolumn{1}{ |c  }{}&\multicolumn{1}{ |c| }{Deflection}& 0.20 & 0.58 & 0.35 & 0.62 & 0.81 & 0.91 & 0.51 & 0.84 \\
		\hline 
		\hline 
		\multicolumn{1}{ |c  }{\multirow{5}{*}{\rotatebox[origin=c]{90}{Dense}}}&\multicolumn{1}{ |c| }{No Deviation}& 0.93 & 0.90 & 0.88 & 0.94 & 0.92 & 0.91 & 0.94 & 0.86 \\
		\multicolumn{1}{ |c  }{}&\multicolumn{1}{ |c| }{Perturbation}& 0.84 & 0.83 & 0.68 & 0.94 & 0.75 & 0.91 & 0.59 & 0.86 \\
		\multicolumn{1}{ |c  }{}&\multicolumn{1}{ |c| }{Violation}& 0.51 & 0.87 & 0.91 & 0.51 & 0.92 & 0.91 & 0.94 & 0.86 \\
		\multicolumn{1}{ |c  }{}&\multicolumn{1}{ |c| }{Deflection}& 0.02 & 0.36 & 0.06 & 0.51 & 0.75 & 0.91 & 0.58 & 0.86 \\
		\hline 
	\end{tabular}
	\caption{Impact of deviation type on the availability.}
	\label{table:dev3}
\end{table*} 

\begin{table*}
	\centering
	\begin{tabular}{cc|c|c|c|c|c|c|c|c|}
		\cline{3-10}
		&  & $b$MIS & $t$MIS & $a$MIS & $pf$MIS & $vt$MIS & $dt$MIS & $vp$MIS & $dp$MIS \\ \hline
		\multicolumn{1}{|c|}{Same} & \# of configurations & 378 & 21 & 2 & 379 & 383 & 465 & 5 & 1 \\[0.6ex]  \cline{2-2}
		\multicolumn{1}{|c|}{Initialization} &  avg. \# of clusters & 17.42 & 19.30 & 20.00 & 17.09 & 16.53 & 16.71 & 10.24 & 15.00 \\[0.3ex]  \hline\hline
		\multicolumn{1}{|c|}{\centering Random } & \# of configurations & 1151 & 651 & 908 & 1253 & 1346 & 1342 & 468 & 1 \\[0.6ex] \cline{2-2} 
		\multicolumn{1}{|c|}{Initialization} &  avg. \# of clusters & 17.10 & 15.67 & 15.83 & 16.70 & 16.11 & 16.20 & 9.22 & 14.00 \\ \hline
	\end{tabular}
\caption{The outcome of the system in terms of the number of discovered legitimate configurations and average number of clusters, when \red{synchrony} is 0.7, the number of agents is 40, and average degree is 6.}
\label{table:sum}
\end{table*}

\subsection{Analysis of Deviations during Convergence}

Now, we study the outcome of the algorithms in the event of violations or deviations. To do so, during the convergence, we measure the effect of violating any rule that makes an agent cluster-head and the effect of executing an unauthorized action IN-to-OUT in terms of the number of attempts and success rate (i.e., the probability that the deviation succeeds to change the fate of the agent from a cluster-head to a cluster member after the system converges). We report the results regarding two different scenarios (\Cref{table:2}).

The results focusing on violations are shown in \Cref{table:case1}. The next-to-last column exceptionally shows the number of violations in $vp$MIS where the value of $p$ is set to $1/D$ instead of \red{synchrony} ($D$ is the diameter of the network and estimated as $\log(n)/\log(\log(n))$ \cite{bollobas2004diameter}). Although $vp$MIS causes an enabled OUT agent to confront a dead-end, the success rate of the violation is non-zero because the dead-end can be temporary. For example, in the sparse scenario, if an agent violates the first rule, it has a 4\% chance that the dead-end situation ends where one of its neighbors is cluster-head; therefore, it is rational that the agent commits a violation with a probability less than 0.03. We also observe that assigning $1/D$ to $p$ has little effect on reducing success rate while increasing round complexity to $O(D^2)$ (easily provable). Regarding $dp$MIS, we observe that an agent cannot change its fate with a violation or an unauthorized action. This is because the system has only one legitimate configuration. 

\Cref{table:case2} shows the results regarding unauthorized actions. The last column exceptionally shows the number of executions of $R_7$ in $dt$MIS and the success rate. We assume that a cluster-head executes an unauthorized action only if it has a neighbor that has no other neighboring cluster-head. We observe that the number of situations that an agent may execute such an action is several times bigger in the algorithms that discriminate between agents based on their identifiers.

\begin{table}
	\centering
	\begin{tabular}{ |c||c|c|c|  }
		\hline
		scenario&\# of agents& avg. degree &\red{synchrony}\\
		\hline
		dense network &20& 8   &0.7\\
		sparse network &20& 2&   0.9\\
		\hline
	\end{tabular}
	\caption{Specification of scenarios of dead-end tests}
	\label{table:2}
\end{table}

\subsection{Analysis of the Algorithms in the Presence of Selfishness}
In this section, we focus on how well the proposed algorithms react to selfish behaviors. We analyze the consequences of the three deviation types introduced in \Cref{section:approach} on the functionality of the algorithms with respect to sparse and dense connectivities (\Cref{table:dd}). 

\begin{table}
	\centering
	\scalebox{1}{\begin{tabular}{ |c||c|c|c|  }
		\hline
		scenario&\# of agents&avg. degree&\red{synchrony}\\
		\hline
		sparse &100&4&   0.8\\
		dense&100& 20   &0.8\\
		\hline
	\end{tabular}}
	\caption{The specification of sparse and dense connectivities.}
	\label{table:dd}
\end{table}


We define reliability as the probability that an algorithm succeeds in constructing an MIS before a time limit
of ten times its average convergence time 
when
agents do not exhibit selfishness.
Meanwhile, we count the number deviations that occur during convergence. We also measure the availability of valid clusters (i.e.,
 the average fraction of agents that either themselves or one of their neighbors are members of IS) during convergence.

Table \ref{table:dev1} reports reliabilities, numbers of deviations, and availabilities in the presence of each deviation type. All algorithms react differently to the different types of deviation. While $pf$MIS is reliable with respect to perturbations and $vt$MIS and $vp$MIS are reliable with respect to violations, $dt$MIS and $dp$MIS are reliable to the onset of all three types of deviations. We also observe that $t$MIS and $a$MIS can sometimes protect against violations. These two algorithms discriminate between agents, which lessens situations where an agent profits from violations. Furthermore, we observe that the number of deviations is inversely proportional to reliability and that it is significantly smaller in the case of perturbations. Moreover, as expected, 
if there is no deviation, all algorithms achieve the highest availability and in the case of deflections, they achieve the lowest availability.
Finally, we find that deflections have a very negative effect on availability in dense networks because when the degree of a cluster-head is high, it  is more likely to change its state by executing an unauthorized action.

\subsection{Analysis of Outcomes}

Table \ref{table:sum} reports the effect of each algorithm on the outcome of the system in terms of the number of unique discovered legitimate configurations and the average number of clusters when we repeat each test $10^4$ times. Concerning that the initial configuration is always the same, or it changes randomly in each repetition, the results are divided into two scenarios. 

Sure enough, $dp$MIS always terminates to the same legitimate configuration. Besides, the probability that two different runs of $a$MIS, $vp$MIS, or $t$MIS starting from the same configuration results in distinct legitimate configurations is very small. Furthermore, since $vp$MIS selects agents with larger degrees to be cluster-heads, it usually reaches legitimate configurations with the fewest number of clusters compared to the other algorithms, noting that a smaller number of clusters indicates a better result.

\section{Conclusion}
\label{section:conclusion}
	In this paper, we modeled rational interactions among selfish agents as a game and proposed three game-theoretic approaches for designing self-stabilizing algorithms in the face of those agents. We applied our solution methods to the problem of self-stabilizing clustering in a non-cooperative DIS. By setting the probabilities of self-stabilizing actions to behavior strategies, we proved that our first proposed algorithm tolerates any violations of the rules. Then, we proved that our second algorithm gives rise to Nash equilibria in legitimate configurations. Therefore, agents have no motivation to perturb clusters. Afterwards, we featured an algorithm that not only handles both violations and perturbations but also tolerates deflections during the convergence. We also introduced two algorithms intended to prevent deviations rather than tolerate them. The analysis of the results suggests that our solutions perform well concerning 
fairness and ability to deal with selfishness.

\bibliographystyle{IEEEtran}
\bibliography{paper.bib}

\end{document}